\documentclass[11pt,oneside,openany]{article}

\usepackage[utf8]{inputenc}

\usepackage{geometry}
\geometry{verbose,tmargin=3cm,bmargin=3cm,lmargin=2cm,rmargin=2cm}

\usepackage{enumitem}
\setenumerate[1]{leftmargin=3em}
\usepackage{amssymb}
\usepackage{amsmath}
\usepackage{amsthm}
\usepackage{amsopn}
\usepackage[colorinlistoftodos]{todonotes}
\usepackage{mathrsfs}
\usepackage{mdframed}
\usepackage{tikz-cd}
\usepackage{subfig}
\usepackage{graphicx}
\usepackage{esint}
\usepackage{cancel}
\usepackage{thmtools, thm-restate}
\usepackage{hyperref}

\usepackage{verbatim}
\usepackage{mathtools}
\usepackage{fullpage}
\usepackage{dsfont}

\usepackage{physics}
\usepackage{setspace}
\usepackage{epstopdf}
\usepackage[T1]{fontenc}
\usepackage{graphicx}

\usepackage{scalerel,stackengine}
\usepackage{ulem}

\usepackage{authblk}

\usepackage{soul}

\usepackage[numbers]{natbib}

\usepackage{marginnote}

\newcommand{\mytodo}[2][]{{%
		\let\marginpar\marginnote
		\reversemarginpar
		\renewcommand{\baselinestretch}{0.8}%
		\todo[#1]{#2}}}



\theoremstyle{remark}
\newtheorem{Remark}{Remark}[section]

\theoremstyle{plain}
\newtheorem{Theorem}{Theorem}[section]
\newtheorem{Lemma}{Lemma}[section]

\newtheorem{Definition}{Definition}[section]
\newtheorem{Proposition}{Proposition}[section]

\DeclareRobustCommand{\rchi}{{\mathpalette\irchi\relax}}
\newcommand{\irchi}[2]{\raisebox{\depth}{$#1\chi$}} 

\setcounter{secnumdepth}{2}
\setcounter{tocdepth}{2}


\DeclareMathOperator{\supp}{supp}

\def \qed {\hfill \vrule height6pt width 6pt depth 0pt}

\newcommand{\R}{\mathbb{R}}
\newcommand{\N}{\mathbb{N}}

\newcommand{\C}{\mathbb{C}}

\newcommand{\p}{\partial}

\newcommand{\doublehookrightarrow}{%
\mathrel{\mathrlap{{\mspace{4mu}\lhook}}{\hookrightarrow}}
}

\usepackage{ulem}

\usepackage{natbib}

\makeatletter

\makeatother

\allowdisplaybreaks

\numberwithin{equation}{section}

\stackMath
\newcommand\reallywidehat[1]{%
\savestack{\tmpbox}{\stretchto{%
\scaleto{%
\scalerel*[\widthof{\ensuremath{#1}}]{\kern-.6pt\bigwedge\kern-.6pt}%
{\rule[-\textheight/2]{1ex}{\textheight}}
}{\textheight}%
}{0.5ex}}%
\stackon[1pt]{#1}{\tmpbox}%
}

\newcommand\numberthis{\addtocounter{equation}{1}\tag{\theequation}}

\renewcommand{\bar}{\overline}
\renewcommand{\hat}{\widehat}
\renewcommand{\tilde}{\widetilde}
\renewcommand{\leq}{\leqslant}
\renewcommand{\geq}{\geqslant}

\title{Convergence towards the Vlasov-Poisson Equation \\from the $N$-Fermionic Schr\"odinger Equation}
\date{}
\author[1]{Li Chen\thanks{chen@math.uni-mannheim.de}}
\author[2,3]{Jinyeop Lee\thanks{lee@math.lmu.de}}
\author[1]{Matthew Liew\thanks{mliew@mail.uni-mannheim.de}}
\affil[1]{Institut f\"ur Mathematik, Universit\"at Mannheim}
\affil[2]{School of Mathematics, Korea Institute for Advanced Study}
\affil[3]{Mathematisches Institut, Ludwig-Maximilians-Universit\"at M\"unchen}

\begin{document}

	\maketitle

\begin{abstract}
We consider the quantum dynamics of $N$ interacting fermions in the large $N$ limit.
The particles in the system interact with each other via repulsive interaction that is regularized Coulomb potential with a polynomial cutoff with respect to $N$.
From the quantum system, we derive the Vlasov-Poisson system by simultaneously estimating the semiclassical and mean-field residues in terms of the Husimi measure. 
\\
Keywords: \textit{Large fermionic system, Vlasov-Poisson equation, Husimi measure, Schr\"odinger equation}
\end{abstract}

\normalem

\section{Introduction}

In this study, we consider a system of $N$ identical spinless fermions characterized by the wave function $\psi_{N}:\R^{3N} \to \C$ in $L^2_a (\R^{3N})$ with $\norm{\psi_{N}}_{L^2}^2 =1$. The antisymmetric space $L^2_a (\R^{3N})$, which is a subspace of $L^2 (\R^{3N})$, is given by

\begin{equation}\label{def_asymm_space}
		L^2_a(\R^{3N}) := \left\{ \psi_N \in L^2(\R^{3N}) : \psi_N (x_{\pi(1)}, \dots, x_{\pi(N)}) = \varepsilon_\pi  \psi_N (x_1, \dots, x_N),\ \text{for all }\pi \in S_N  \right\},
	\end{equation}
where $S_N$ is the odd-permutation group and $\varepsilon_\pi$ is the sign of the permutation $\pi$.

The antisymmetric space considered above is a reflection of fermions obeying the Pauli exclusion principle, i.e. no two identical fermions simultaneously occupy the same single quantum state. It is observed that when $N$ fermions are initially trapped in a volume of order one, their kinetic energy is at least of order $N^{5/3}$. This implies that the coupling constant should be chosen as $N^{-1/3}$ to balance the order of the potential energy and the kinetic energy in the Hamiltonian. Thus, the mean-field Hamiltonian acting on $L^2_a(\R^{3N})$ is given by
	\begin{equation*}
		H_N = - \frac{1}{2} \sum_{j=1}^N  \Delta_{x_j} + \frac{1}{2N^{1/3}} \sum_{i \neq j}^N V_N(x_i - x_j),
	\end{equation*}
where $ \Delta_{x_j}$ is the Laplacian acting on particle $x_j$ and $V_N$ is the interaction potential given by the regularized Coulomb potential defined as follows:

\begin{Definition}\label{assume_V_N} For any $x \in \R^3$ and let $V(x) = |x|^{-1}$, then we call the following $V_N$ to be the regularized Coulomb potential:
	\begin{equation}\label{def_V_N}
		V_N (x)= (V*\mathcal{G}_{\beta_N})(x),
	\end{equation}
	where
	$\mathcal{G}_{\beta_N}(x) := \frac{1}{ (2\pi \beta_N^2)^{3/2}} e^{-\left({x/\beta_N}\right)^2}$. 
\end{Definition}

The regularized Coulomb potential defined in \eqref{def_V_N} can be understood as an interaction potential between spherical particles with a vanishing radius $\beta_{N}\to 0$ as $N \to \infty$. This method of using the regularized Coulomb potential depending on $N \to \infty$ has been applied in many works, for example, in \cite{jabin2011particles,Lazarovici2017} for the derivation of the Vlasov-Poisson dynamics from $N$-body classical dynamics. In \cite{Chen2011}, such a regularized potential was considered for the bosonic case.

Observe that, the time-dependent Schr\"odinger equation is given by
	\[
	\mathrm{i} \p_\tau \psi_{N,\tau}  = H_N \psi_{N,\tau},
	\]
for all $\psi_{N,\tau} \in L^2_a(\R^{3N})$ and $\tau \geq 0$. Since the average kinetic energy for each fermionic particle is of order $N^{2/3}$, then its average velocity is of order $N^{1/3}$. Therefore, in the mean-field regime, the time evolution of the fermion system is expected to be of order $N^{-1/3}$. Rescaling the time variable $t = N^{1/3} \tau$, one obtains the  following Schr\"odinger equation for $N$ fermions:
\begin{equation}\label{schro_0}
		N^{1/{3}}\mathrm{i} \partial_t \psi_{N,t} =H_N \psi_{N,t}.
	\end{equation}

As suggested in Thomas-Fermi theory in \cite{lieb1997thomas,lieb1973thomas}, we set $\hbar = N^{-1/3}$ as the semiclassical scale. Then, multiplying both sides of \eqref{schro_0} by $\hbar^2$, we obtain the time-dependent Schr\"odinger equation as follows:\footnote{Note that $\hbar$ here can be interpreted as the effective Planck's constant.}
\begin{equation}\label{Schro_1}
		\begin{cases}
			\displaystyle \mathrm{i} \hbar \partial_t \psi_{N,t} = \left[ - \frac{\hbar^2}{2} \sum_{j=1}^N  \Delta_{x_j} + \frac{1}{2N} \sum_{i \neq j}^N V_N(x_i - x_j) \right]  \psi_{N,t}, \\
			\psi_{N,0} = \psi_N,
		\end{cases}
	\end{equation}
where $\psi_N$ is the initial data in $L_a^2(\R^{3N})$. The choice of other coupling constants for different scenarios is summarized in \cite{BACH20161}. 

Solving numerically the Schr\"odinger equation in \eqref{Schro_1} with a large particle number and analyzing the behavior of its solution is hard even for $N=1000$. An efficient way to analyze and solve the behavior of a large quantum system is to derive its corresponding effective evolution equations. Therefore, we consider the density matrix operator instead of the wave function $\psi_{N,t}$. Namely, for $t\geq 0$, we define the $1$-particle reduced density matrix $\gamma_{N,t}$, a positive semidefinite trace class operator in $L_a^2(\R^{3N})$, with trace equal to $N$. Specifically, for pure states, it is an operator with the corresponding kernel given by
\begin{equation*}
	\gamma_{N,t}^{(1)}(x;y) := N\dotsint \mathrm{d}x_{2} \cdots \mathrm{d}x_N\  \overline{\psi_{N,t} (y, x_{2},\dots x_N) } \psi_{N,t} ( x, x_2, \dots, x_N),
	\end{equation*}
for any normalized $\psi_{N,t} \in L_a^2(\R^{3N})$.
It can be easily shown that the trace of the $1$-density particle is given by $\Tr \gamma_{N,t}^{(1)} = N$.
Furthermore, for indistinguishable fermions, we can analyze the quantum dynamics by density matrices depending on a small number of particles, $1 \leq k \ll N$. Denoting $\Tr^{(k)}$ as the $k$-partial trace, we define the $k$-particle reduced density matrix as
\begin{equation}\label{tr_gammak}
		\gamma_{N,t}^{(k)}  = \frac{N!}{(N-k)!} \Tr^{(k)} \gamma_{N,t},
	\end{equation}
where its corresponding integral kernel is given by
\begin{align*}
	&\gamma_{N,t}^{(k)} (x_1,\dots, x_k ; y_1, \dots, y_N) \\
	&\hspace{1cm}= \frac{N!}{(N-k)!} \dotsint \dd{x}_{k+1} \cdots \dd{x}_N \gamma_N(x_1, \dots,x_k, x_{k+1}, \dots, x_N; y_1, \dots, y_k, x_{k+1}, \dots, x_N).
	\end{align*}

We denote the inner-product of $L_a^2(\R^{3N})$ as $\left< \psi, \phi \right> = \int \mathrm{d}x \, \bar{\psi}(x) \phi(x)$. Given any $N$ and time $t$, the expectation of the physical observable associated with a self-adjoint operator $O$ is given as
\[
	\left< \psi_{N,t}, O \psi_{N,t} \right> = \dotsint  \mathrm{d}x_{1} \cdots \mathrm{d}x_N\ \ \overline{\psi_{N,t}(x_1, \dots, x_N)} \big(O \psi_{N,t}\big)(x_1, \dots, x_N).
	\]
Equivalently, we can write the expectation of an observable $O$ with
\begin{equation}\label{exp_value_obs}
		\Tr O \gamma_{N,t} = \left<\psi_{N,t} , O \psi_{N,t}  \right>,
	\end{equation}
and the expectation of any $k$-observables $O^{(k)}$ is
\[
	\Tr (O^{(k)} \otimes \mathds{1}^{(N-k)})\gamma_{N,t} =  \frac{N!}{(N-k)!} \Tr O^{(k)} \gamma_{N,t}^{(k)}.
	\]
Therefore, the $k$-particle reduced density matrix $\gamma_N^{(k)}$ is also a positive semidefinite trace class operator with trace
\[
	\Tr \gamma_{N,t}^{(k)} = \frac{N!}{(N-k)!}.
	\]

With a $k$-particle density matrix, we can avoid analyzing the complicated case with $N$-particles by finding an approximating effective equation that describes the system. In the fermionic case, we let $\gamma_{N,0}^{(1)} \equiv \omega_{N}$, a 1-particle density matrix associated with initial state $\psi_{N}$, be a Slater determinant defined as 

\begin{equation}\label{def_slater}
	\begin{split}
		\psi_{N}^{\text{Slater}}(x_1, \dots, x_N) &= (N!)^{-{1/2}} \det\{\mathrm{e}_i(x_j)\}_{i,j = 1}^N,\\
	\end{split}
\end{equation}
for any family of orthonormal bases $\{\mathrm{e}_j\}_{j=1}^N \subset L^2(\R^3)$. In particular, we have
\begin{equation}\label{1_slater}
		\omega_{N} =  \sum_{j=1}^N \dyad{\mathrm{e}_j}{\mathrm{e}_j} , 
	\end{equation}
which corresponds to the $1$-particle reduced density matrix with an integral kernel of $\omega_{N} (x;y)= \sum_{j=1}^ N  \overline{\mathrm{e}_j(y)} \mathrm{e}_j(x)$.
In \cite{Porta2017}, the mean-field approximation of the Schr\"odinger equation is given by the following Hartree-Fock equation:
\begin{equation}\label{main_HF_eq}
	\begin{cases}
		\displaystyle \mathrm{i} \hbar \partial_t \omega_{N,t} = \left[ - \hbar^2 \Delta + ( |\cdot|^{-1} *\rho_{N,t}) - X_t,\ \omega_{N,t} \right], \\
		\omega_{N,t} \big|_{t=0} = \omega_{N},
	\end{cases}
	\end{equation}
where $\rho_{N,t}$ has the integral kernel $\frac{1}{N} \omega_{N,t} (x;x)$, $X_t$ is the exchange operator with the integral kernel $\frac{1}{N} |x-y|^{-1} \omega_{N,t}(x;y)$ and the commutator is denoted as $[A,B] := AB - BA$ for any bounded operators $A$ and $B$.

The mean-field limit from the Schr\"odinger equation to the Hartree-Fock equation has been studied extensively. In \cite{ELGART20041241}, where the Slater determinant constitutes the initial data and a regular interaction is assumed, the convergence is obtained by the use of the Bogoliubov-Born-Green-Kirkwood-Yvon (BBGKY) hierarchy method for short times. In \cite{benedikter2014mean}, the rates of convergence in both the trace norm and Hilbert-Schmidt norm for pure states are obtained for an arbitrary time and more general potential in the framework of second quantization. The extension to mixed states has been considered in \cite{benediktermixed} for a positive temperature and for the relativistic case in \cite{benedikter2014rel}. Furthermore, by utilizing the Fefferman-de la Llave decomposition presented in \cite{BACH20161,Fefferman1986,hainzl2002general}, the rate of convergence, with more assumptions on the initial data is obtained in \cite{Porta2017} for  Coulomb potential and in \cite{saffirio2017mean} for inverse power law potential. Further literature on the mean-field limit for fermionic cases can be found in \cite{Frohlich2011,petrat2014derivation,Petrat2017,Petrat2016ANM}.

The semiclassical limit from the Hartree-Fock equation to the Vlasov equation has also been extensively studied.
In \cite{Lions1993}, this is achieved by using the Wigner-Weyl transformation of the density matrix. In \cite{benedikter2016hartree}, the authors compared the inverse Wigner transform of the Vlasov solution and the solution of the Hartree-Fock equation and obtained the rate of convergence in the trace norm as well as the Hilbert-Schmidt norm with regular assumptions on the initial data. In fact, \cite{benedikter2016hartree, saffirio2019hartree} utilized the $k$-particle Wigner measure as follows:
\begin{equation}\label{def_wigner}
	\begin{aligned}
		&W^{(k)}_{N,t}(x_1,p_1, \dots, x_k, p_k) \\
		&= \binom{N}{k}^{-1} \dotsint (\dd{y})^{\otimes k}  \gamma_{N,t}^{(k)}\left(x_1 + \frac{\hbar}{2}y_1, \dots, x_k + \frac{\hbar}{2}y_k; x_1 - \frac{\hbar}{2}y_1, \dots, x_k - \frac{\hbar}{2}y_k \right)e^{-\mathrm{i} \sum_{i=1}^k p_i \cdot y_i},
	\end{aligned}
\end{equation}
where $\gamma_{N,t}^{(k)}$ is the kernel of the $k$-particle reduced density defined in \eqref{tr_gammak}.

The works in this direction have also been extended for the inverse power law potential in \cite{Saffirio2019}, rate of convergence in the Schatten norm in \cite{lafleche2020strong}, Coulomb potential and mixed states in \cite{saffirio2019hartree}, and convergence in the Wasserstein distance in \cite{Lafleche2019GlobalSL,Lafleche2019PropagationOM}.
The convergence of relativistic Hartree dynamic to the relativistic Vlasov equation was considered in \cite{Dietler2018}. Further analysis of the semiclassical limit from the Hartree-Fock equation to the Vlasov equation can be found in \cite{Athanassoulis2011, amour2013classical,amour2013,Gasser1998, Markowich1993}.

We can combine both mean-field and semiclassical limits and directly obtain the convergence from the Schr\"odinger equation to the Vlasov equation. The notable pioneers in this direction are Narnhofer and Sewell in \cite{Narnhofer1981} and Spohn in \cite{Spohn1981}. They proved the limit from the Schr\"odinger equation to Vlasov, in which the interaction potential $V$ was assumed to be analytic in \cite{Narnhofer1981} and $C^2$ in \cite{Spohn1981}. The rate of convergence of the combined limit in terms of the Wasserstein (pseudo)distance was obtained in \cite{Golse2017,golse:hal-01334365,Golse2021}. In fact, the authors studied the rate of convergence in terms of the Wasserstein distance by treating the Vlasov equation as a transport equation and applying the Dobrushin estimate with appropriately chosen initial data. Then, the result for the Husimi measure was obtained by transforming its Wigner measure similar to \eqref{gaussian_m_vs_W} with a specifically chosen coherent state. In this study, we instead consider a more generalized coherent state.
Recently, the combined limit for the singular potential case was obtained in \cite{chong2021schrodinger}. They provided a derivation of the Vlasov equation using the weighted Schatten norm with a higher moment, and more conditions on the initial data were assumed.

Nevertheless, it is known that the Wigner measure defined in \eqref{def_wigner} is not a true probability density, as it may be negative in a certain phase space. This is shown numerically in \cite{kenfack2004negativity} for chosen Fock states. Moreover, in \cite{PhysRevLettKatz}, a \textit{vis-à-vis} comparison of the classical and quantum systems of a nonlinear Duffing resonator shows that the classical system develops a probability density in the traditional sense, while the quantum system yields a negative region in phase space corresponding to the Wigner measure. In fact, it is proven in \cite{Hudson1974,PhysRevA.79.062302,doi:10.1063/1.525607} that the Wigner measure is nonnegative if and only if the pure quantum states are Gaussian. Additionally, in \cite{doi:10.1063/1.531326}, it is stated that the Wigner measure is nonnegative if the state is a convex combination of coherent states. The issue of incompatibility between the quantum Wigner and classical regimes remains an open question \cite{CaseWignerpedestrians}. 

Nevertheless, it has been shown that we can obtain a nonnegative probability measure by taking the convolution of the Wigner measure with a Gaussian function as a mollifier; this is known as the Husimi measure \cite{Fournais2018,Combescure2012,Zhang2008}. In particular, from \cite[p.21]{Fournais2018}, given a specific Gaussian coherent state, the relation between the Husimi measure and Wigner measure is given by the following convolution: for any $1 \leq k \leq N$,
\begin{equation}\label{gaussian_m_vs_W}
		\mathfrak{m}_{N,t}^{(k)}  = \frac{N(N-1)\cdots (N-k+1)}{N^k} W^{(k)}_{N,t} * \mathcal{G}^\hbar,
	\end{equation}
where $\mathfrak{m}_{N,t}^{(k)}$  is the $k$-particle Husimi measure and
\begin{equation*}
	\mathcal{G}^\hbar := (\pi \hbar)^{-3k} \exp \big(-\hbar^{-1} (\sum_{j=1}^k |q_j|^2 + |p_j|^2) \big).
\end{equation*}

The smoothing of the Wigner measure presented in \eqref{gaussian_m_vs_W} motivates the objective of our study: to directly obtain the Vlasov-Poisson equation from the Schr\"odinger equation in terms of the Husimi measure.\footnote{See Figure \ref{nice_figure}.}  In fact, we have explored the direct method in \cite{Chen2021} with the use of the BBGKY hierarchy method, under the assumption that $V \in W^{2,\infty} (\R^3)$. The main contribution of the current work is that by using the generalized version of Husimi measure defined later in \eqref{husimi_def_1}, we are able to write $N$-fermionic Schr\"odinger equation directly into Vlasov type of equation in \eqref{BBGKY_k1} and obtain a convergence in combined-limit without the use of BBGKY hierarchy method. Furthermore, compared to \cite{Chen2021}, the new remainder terms in \eqref{BBGKY_k1} obtained in this paper allows us to handle the regularized Coulomb potential defined in \eqref{def_V_N}.

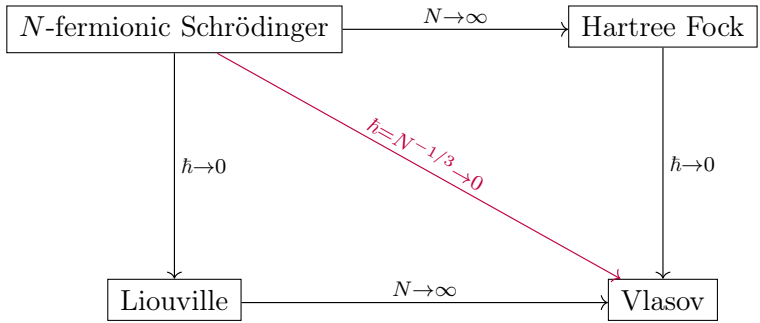
\begin{figure}[t]
\begin{center}
\begin{tikzcd}[column sep=3cm,row sep=3cm,cells={nodes={draw=black}}]
\text{$N$-fermionic Schr\"odinger} \arrow[r, "N \to \infty"] \arrow[d,"\hbar \to 0"] \arrow[rd,"\hbar=N^{-1/3}\to 0"{sloped,auto,pos=0.5},purple]
& \text{Hartree Fock} \arrow[d, "\hbar \to 0"] \\
\text{Liouville}\arrow[r, "N \to \infty"]& \text{Vlasov}
\end{tikzcd}
\end{center}
\caption{Relations of $N$-fermionic Schr\"odinger systems to other mean-field equations \cite{Chen2021,Golse2016,Golse2017}.}\label{nice_figure}
\end{figure}

Note that the case for bosons has been extensively studied. In fact, there are more studies on bosonic cases than on fermionic cases. As bosons are not the main concern of this paper, we mention only a selected few of these studies in passing. In particular, \cite{erdos2001derivation} proved that the Schr\"odinger equation for bosons converges to the nonlinear Hartree equation for the Coulomb potential. In addition, the convergence for the aforementioned equation is obtained in \cite{rodnianski2009quantum} with a rate of $N^{1/2}$ for the Coulomb potential. The convergence rate of $N^{-1}$ has been optimized in \cite{Chen2011} for the Coulomb potential, as well as for more singular potentials in \cite{Chen2018}.

This article is organized as follows. Brief introductions to the second quantization and Husimi measure are presented in Sections \ref{sec_second_bogo} and \ref{sec_def_husimi}, respectively. This is followed by the statement of our main theorem and proof strategy in Section \ref{sec_theorem_main}. Then, uniform estimates are given in Section \ref{sec_proof_main}, followed by the proof of the main theorem in Section \ref{sec_real_main_proof}. The estimates for the residual terms are covered in Section \ref{sec_est_residual}.

\section{Preliminaries}

\subsection{Second quantization}\label{sec_second_bogo}

In the study of large particle systems, we expect the operators to interact with different Hilbert spaces of the $N$-particle system by creating and annihilating particles. Therefore, to analyze a large particle system, it is convenient for us to build a `larger' Hilbert space that accompanies the aforementioned interactions, equipped with the norm $\norm{\ \cdot\ }$. In particular, for a large fermionic system, we consider the Fock space for fermions as
\begin{equation*}\label{def_fock}
		\mathcal{F}_a:= \C \bigoplus_{n \geq 1} L^2_a (\R^{3n}, (\mathrm{d}x)^{\otimes n}),
	\end{equation*}
where $L^2_a (\R^{3n}; (\mathrm{d}x)^{\otimes n})$ represents the $n$-fold antisymmetric tensor product of $L^2({\R^3})$. Moreover, the vacuum state is denoted as $\Omega=1\oplus 0 \oplus 0 \oplus \dots\in \mathcal{F}_a$.

For $f \in L^2(\R^3)$, the annihilation operator $a(f)$ and creation operator $a^*(f)$ acting on $\Psi = \bigoplus_{n \geq 0}\psi^{(n)} \in \mathcal{F}_a$ are defined by
\begin{align*}
		\big(a(f) \Psi \big)^{(n)} &:= \sqrt{n+1} \int \dd{x} \overline{f(x)} \psi^{(n+1)} (x, x_1, \dots, x_n),\\
		\big(a^*(f) \Psi  \big)^{(n)} &:= \frac{1}{\sqrt{n}} \sum_{j=1}^n f(x_j) \psi^{(n-1)} (x_1, \dots, {x}_{j-1}, {x}_{j+1} , \dots , x_n).
\end{align*}
Here $\Psi^{(n)}$ denotes the $n$-th particle sector of $\Psi \in \mathcal{F}_a$. 
Following the notations from \cite{benedikter2014mean}, we will use the operator valued distributions $a^*_x$ and $a_x$, to represent the creation and annihilation operators:
\begin{equation}\label{distrb_annil_creation_def}
		a^*(f) = \int \mathrm{d}x\ f(x) a^*_x, \quad a(f) = \int \mathrm{d}x\  \overline{f(x)}a_x.
	\end{equation}
Note that the operator-valued distribution $a^*_x$ formally creates a particle at position $x\in \R^3$, whilst the operator-value distribution $a_x$ annihilate a particle at $x$.

Furthermore, by the corresponding canonical anticommutation relations (CAR) in the fermionic system, we have that for any $f,g \in L^2(\R^3)$
\begin{equation}\label{def:CAR}
		\{ a(f), a^*(g)\} = \left< f,g\right>, \quad \{ a^*(f), a^*(g)\} = \{ a(f), a(g)\} = 0,
	\end{equation}
where $\{A, B\} := AB + BA$ is the anticommutator. Following from \eqref{def:CAR}, the CAR for operator kernels holds as follows:
\begin{equation}\label{eq:CAR}
	\{ a_x, a^*_y\} = \delta_{x=y}, \quad \{a^*_x, a^*_y\} = \{ a_x, a_y\} = 0.
\end{equation}

For any normalized $\Psi_{N,t} \in \mathcal{F}_a$, it is straightforward to show that
\begin{equation}\label{ineq_anil_create}
		\norm{a(f)\Psi_{N,t}}^2 \leq \norm{f}_{L^2}^2, \quad \norm{a^*(f)} = \norm{a(f)}
	\end{equation}
for any $f \in L^2(\R^3)$.\footnote{See Theorem 3.52 in \cite{Derezinski2009} for a more pedagogical approach to the annihilation and creation operator for the fermionic case.}

We extend the Hamilton operator appeared in \eqref{Schro_1} acting on $L^{2}_a(\mathbb{R}^{3N})$ to an operator acting on the Fock space $\mathcal{F}_a$ by $(\mathcal{H}_N\Psi)^{(n)} = \mathcal{H}_N^{(n)} \psi^{(n)}$ with
	\begin{equation*}
		\mathcal{H}_N^{(n)}  = \sum_{j=1}^{n} -\frac{\hbar^2}{2}\Delta_{x_j} + \frac{1}{2N}\sum_{i\neq j}^n V(x_i - x_j).
	\end{equation*}
	Then, we can write the Hamiltonian $\mathcal{H}_N$ 
	in terms of the operator-valued distributions $a_{x}$ and $a_{x}^{*}$ by
\begin{equation}\label{Fock_Hamil}
		\mathcal{H}_N = \frac{\hbar^2}{2} \int \mathrm{d}x\ \nabla_x a^*_x \nabla_x a_x + \frac{1}{2N} \iint \mathrm{d}x\mathrm{d}y\ V_N(x-y) a^*_x a^*_y a_y a_x.
	\end{equation}

In this article, we will consider only the following Schr\"odinger equation in Fock space:
\begin{equation}\label{Schrodinger_fock}
		\begin{cases}
			\displaystyle \mathrm{i} \hbar \partial_t \Psi_{N,t} =\mathcal{H}_N \Psi_{N,t}, \\
			\Psi_{N,0} = \Psi_N,
		\end{cases}
	\end{equation}
for all $\Psi_{N,t} \in \mathcal{F}_a$ and $\|\Psi_{N,t}\|=1$ for $t \in [0,T]$.

Next, we denote the number of particles operator and kinetic energy operator as
\begin{equation}\label{def_NumOp_KE}
		\mathcal{N} =  \int \mathrm{d}x\ a^*_x a_x \quad \text{and}\quad \mathcal{K} =  \frac{\hbar^2}{2}\int \mathrm{d}x\ \nabla_x a^*_x \nabla_x a_x,
	\end{equation}
respectively.

For any given $\Psi \in \mathcal{F}_a$ in the $n$-th sector, we can interpret the number of particles operator as
\begin{equation}\label{foobar_Num}
		(\mathcal{N}\Psi)^{(n)} = n \psi^{(n)},
	\end{equation}
where $\psi^{(n)} \in L_a^2(\R^{3n})$ for any $n \geq 1$. In the vacuum state, we have $\mathcal{N}\Omega = 0$. It is therefore straightforward to show that for $k \geq 1$,
\begin{equation*}
			\left< \Psi_{N,t}, \mathcal{N}^k \Psi_{N,t} \right>  = N(N-1)\cdots (N-k+1),
	\end{equation*}
for any normalized $\Psi_{N,t} \in \mathcal{F}_a$ and $t \geq 0$. Clearly, the relation between the number of particles operator and the $1$-particle reduced density matrix is given as
\begin{equation*}
		\left< \Psi_{N,t}, \mathcal{N} \Psi_{N,t} \right>  = \int \dd{w} \left< \Psi_{N,t},a^*_w a_w \Psi_{N,t} \right>   = \gamma_{N,t}^{(1)}(w;w),
	\end{equation*}
and observe $\Tr \gamma_{N,t}^{(1)} = N$.

\subsection{The Husimi measure}\label{sec_def_husimi}

We use the definition of the Husimi measure given in \cite{Fournais2018}. Let $f$ be any real-valued normalized function in Hilbert space; then, the coherent state is defined as
\begin{equation}\label{def_coherent}
		f^{\hbar}_{q, p} (y) := \hbar^{-\frac{3}{4}} f \left(\frac{y-q}{\sqrt{\hbar}} \right) e^{\frac{\mathrm{i}}{\hbar} p \cdot y }.
	\end{equation}
Then, the projection of coherent state is given by
\begin{equation*}\label{projection_f}
	\frac{1}{(2\pi \hbar)^3} \iint \mathrm{d}q \mathrm{d}p\,|f^{\hbar}_{q, p}\rangle\langle{f^{\hbar}_{q, p} }|  = \mathds{1}.
	\end{equation*}

For any $\Psi_{N,t} \in \mathcal{F}_a$, $1 \leq k \leq N$ and $t \geq 0$, the $k$-particle Husimi measure is defined as 
\begin{equation}\label{husimi_def_1}
	\begin{aligned}
		&m^{(k)}_{N,t} (q_1, p_1, \dots, q_k, p_k)\\
		&
		:= \dotsint (\mathrm{d}w\mathrm{d}u)^{\otimes k} \left( f^\hbar_{q,p}(w) \overline{f^\hbar_{q,p}(u)} \right)^{\otimes k}  \left< \Psi_{N,t},  a^*_{w_1} \cdots a^*_{w_k} a_{u_k} \cdots a_{u_1} \Psi_{N,t} \right>\\
		&
		= \dotsint (\mathrm{d}w\mathrm{d}u)^{\otimes k} \left( f^\hbar_{q,p}(w) \overline{f^\hbar_{q,p}(u)} \right)^{\otimes k} \gamma_{N,t}^{(k)}(u_1, \dots, u_k; w_1, \dots, w_k),
	\end{aligned}
\end{equation}
where we use the short notations
\[
(\mathrm{d}w\mathrm{d}u)^{\otimes k} := \mathrm{d}w_1 \mathrm{d}u_1 \cdots \mathrm{d}w_k \mathrm{d}u_k,\ \text{and}\
\left( f^\hbar_{q,p}(w) \overline{f^\hbar_{q,p}(u)} \right)^{\otimes k} := \prod_{j=1}^k f^\hbar_{q_j,p_j}(w_j) \overline{f^\hbar_{q_j,p_j}(u_j)}.
\]
 The Husimi measure defined in \eqref{husimi_def_1} measures how many particles, in particular fermions, are in the $k$-semiclassical boxes with a length scale of $\sqrt{\hbar}$ centered in its respective phase-space pairs, $(q_1, p_1), \dots, (q_k, p_k)$. 
 
 \begin{Remark}
 The Husimi measure \eqref{husimi_def_1} is a more generalized version of \eqref{gaussian_m_vs_W}. If $f$ is given by a Gaussian function, then the definitions of ${m}^{(k)}_{N,t}$ and $\mathfrak{m}^{(k)}_{N,t}$ coincide.
  \end{Remark}

Then, we observe that by using the operator kernels defined in \eqref{distrb_annil_creation_def}, the Husimi measure can be expressed by

The relation between the Husimi measure and the number of particles operator can be expressed as follows, for the 1-particle Husimi measure $m_{N,t}:=m^{(1)}_{N,t}$,
\begin{align*}
		\iint \mathrm{d}q\mathrm{d}p\ m_{N,t}(q,p) &=   \iint \mathrm{d}q\mathrm{d}p  \iint \mathrm{d}w_1 \mathrm{d}u_1\  f^\hbar_{q,p}(w_1) \gamma_{N,t}^{(1)}(w_1;u_1)\overline{f^\hbar_{q,p}(u_1)}\\
		& =  \hbar^{-\frac{3}{2}}  \int \mathrm{d}q  \iint \mathrm{d}w_1 \mathrm{d}u_1\ f \left(\frac{w_1-q_1}{\sqrt{\hbar}}\right) f \left(\frac{u_1-q_1}{\sqrt{\hbar}}\right) \left(\int \mathrm{d}p\  e^{\frac{\mathrm{i}}{\hbar}p \cdot (w_1 - u_1)} \right)  \gamma_{N,t}^{(1)}(w_1;u_1)\\
		&
		= (2\pi \hbar)^3 \hbar^{-\frac{3}{2}} \iint \mathrm{d}q_1\mathrm{d}w_1\ \left|f \left(\frac{w_1-q_1}{\sqrt{\hbar}}\right)\right|^2  \gamma_{N,t}^{(1)}(w_1;w_1)\\
		&
		=  (2\pi \hbar)^3 \int \mathrm{d}\tilde{q}\ |f(\tilde{q})|^2 \int\mathrm{d}w_1\  \gamma_{N,t}^{(1)}(w_1;w_1)\\
		&
		=  (2\pi)^3,
	\end{align*}
where we use the Dirac-delta $ \delta_{x}(y) := (2\pi \hbar)^{-3}\int  e^{\frac{\mathrm{i}}{\hbar}p \cdot (x - y)} \dd{p}$. Further properties of the Husimi measure are covered in Lemma \ref{prop_kHusimi}. Observe that if the initial data is described by Slater determinant as in \eqref{1_slater}, then the Husimi measure at initial time is 
	\begin{equation}\label{initial_husimi_slater}
		m_{N}^{\text{Slater}} (q,p) =   \sum_{j=1}^N \iint \mathrm{d}w_1 \mathrm{d}u_1\  f^\hbar_{q,p}(w_1) \overline{\mathrm{e}_j (w_1)}\mathrm{e}_j(u_1) \overline{f^\hbar_{q,p}(u_1)}.
	\end{equation}	

We are now ready to state the main theorem.

\section{Main result }\label{sec_theorem_main}
In this section, we provide our main result, proof strategies, and the \textit{a priori} estimates. The complete proof will be presented in Section \ref{sec_real_main_proof}. In the following, we denote $ \nabla_q f$ and $ \nabla_p f$ to be the gradients of $f$ with respect to the position and momentum variables respectively.
\begin{Theorem}\label{theorem_main}
Suppose that $V_N$ is the regularized Coulomb potential given in \eqref{def_V_N} with $\beta_{N} := N^{-\epsilon}$ and $0<\epsilon<\frac{1}{24}$ hold. For any fixed $T>0$, let $\Psi_{N,t} \in \mathcal{F}_a$, $t\in [0,T]$, be the solution to the Schr\"odinger equation \eqref{Schrodinger_fock} with the Slater determinant as the initial data. Let $m_{N,t}$ be the $1$-particle Husimi measure defined in \eqref{husimi_def_1}, where $f$ is a compact supported positive-valued function in $H^1(\R^3)$ with $\|f\|_{L^2}=1$. Moreover, let $m_{N}^{\text{Slater}}$ be the initial 1-particle Husimi measure with its $L^1$-weak limit $m_0$ and  there exists a constant $C>0$ independent of $N$ such that
\begin{equation}\label{inip2q}
	\iint \mathrm{d}q \mathrm{d}p\ (|p|^2 + |q|) m_{N}(q,p) \leq  C .
	\end{equation}
Then,  $m_{N,t}$ has a weak-$\star$ convergent subsequence in $L^\infty((0,T];L^1(\R^3 \times \R^3))$ with limit $m_{t}$, and $m_{t}$ is the solution of the Vlasov-Poisson equation with repulsive Coulomb potential,
\begin{equation}\label{eq_vlasov_poisson}
    	\begin{cases}
    		\partial_t m_{t}(q,p) + p \cdot \nabla_q m_t(q,p) = \nabla_q \big(|\cdot |^{-1} * \varrho_{t}\big)(q) \cdot \nabla_p  m_t(q,p),\\
    		m_{t}(q,p)\big|_{t= 0} = m_0(q,p),
    	\end{cases}
    \end{equation}
in the sense of distribution where $\varrho_{t} (q):= \int \dd{p}  m_{t} (q,p) $.
\end{Theorem}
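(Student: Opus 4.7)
The plan is to derive a Vlasov-type evolution equation for the Husimi measure directly from the Fock-space Schr\"odinger equation \eqref{Schrodinger_fock}, obtain uniform-in-$N$ a priori estimates, and then extract a convergent subsequence. The starting point is to differentiate $m_{N,t}$ in time using \eqref{husimi_def_1} and the CAR \eqref{eq:CAR}, which yields
\[
\mathrm{i}\hbar \, \partial_t m_{N,t}(q,p) = \iint (\mathrm{d}w\,\mathrm{d}u)\, f^\hbar_{q,p}(w)\overline{f^\hbar_{q,p}(u)}\, \langle \Psi_{N,t}, [a^*_w a_u,\mathcal{H}_N]\Psi_{N,t}\rangle.
\]
The kinetic part of $\mathcal{H}_N$, after two integrations by parts in $w,u$ and a Taylor expansion exploiting the phase $e^{\mathrm{i}p\cdot y/\hbar}$ built into the coherent state \eqref{def_coherent}, produces the transport term $-p\cdot\nabla_q m_{N,t}$ together with a semiclassical residue $R^{\mathrm{sc}}_{N,t}$ of order $\sqrt{\hbar}\,\|\nabla f\|_{L^2}$. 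The quartic interaction term in \eqref{Fock_Hamil}, after symmetrization through the anticommutation relations and a change of variables that puts $\nabla V_N$ against the difference of coherent-state Gaussians, yields a nonlinear drift $\nabla_q(V_N * \varrho_{N,t})\cdot \nabla_p m_{N,t}$ plus a mean-field correlation remainder $R^{\mathrm{mf}}_{N,t}$ and a regularization remainder $R^{\mathrm{reg}}_{N,t}$ measuring the gap between $V_N$ and $|\cdot|^{-1}$.

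The second step is to establish uniform bounds. Mass conservation $\iint m_{N,t}\,\mathrm{d}q\,\mathrm{d}p = (2\pi)^3$ is the computation displayed just before \eqref{initial_husimi_slater}. The pointwise Pauli bound $m_{N,t}\le 1$, coming from $\gamma^{(1)}_{N,t}\le \mathds{1}$ and $\|f\|_{L^2}=1$, gives an $L^\infty$ estimate. Propagation of the moment $\iint (|p|^2+|q|) m_{N,t}$ follows from conservation of the total Hamiltonian energy together with positivity of the repulsive interaction and a Gr\"onwall argument, after bounding $\int |p|^2 m_{N,t}\,\mathrm{d}p$ in terms of $\langle\Psi_{N,t},\mathcal{K}\Psi_{N,t}\rangle + O(\hbar)$ via the identity for Gaussian/coherent averaging. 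These estimates give weak-$\star$ precompactness of $\{m_{N,t}\}$ in $L^\infty((0,T]; L^1\cap L^\infty(\R^3\times\R^3))$, so a subsequence converges weakly-$\star$ to some $m_t$, and the spatial density $\varrho_{N,t}$ lies in $L^\infty((0,T]; L^1\cap L^{5/3})$ by the standard interpolation between the moment bound and the Pauli $L^\infty$ bound.

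The third step is to pass to the limit in each term of the Husimi equation tested against a test function $\varphi\in C^\infty_c$. For the transport part and the $R^{\mathrm{sc}}$ remainder, the limit is routine and the vanishing is by explicit powers of $\hbar$ and integrations by parts against $\varphi$. For the nonlinear force $\nabla_q(V_N*\varrho_{N,t})\cdot\nabla_p m_{N,t}$ one needs strong convergence of $V_N*\varrho_{N,t}$ to $|\cdot|^{-1}*\varrho_t$ in a space dual to $\nabla_p m_{N,t}$. The plan is to show: (i) equicontinuity in time of $\iint \varphi\, m_{N,t}\,\mathrm{d}q\,\mathrm{d}p$, so $\varrho_{N,t}$ is compact in $C([0,T]; W^{-1,1}_{\mathrm{loc}})$ by an Aubin--Lions argument; (ii) spatial equi-integrability from the moment and $L^\infty$ bounds; (iii) uniform vanishing of $|(V_N-|\cdot|^{-1})*\varrho_{N,t}|$ via Young's inequality and $\beta_N = N^{-\epsilon}$. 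This identifies the limit as \eqref{eq_vlasov_poisson}, with initial datum $m_0$ by the weak-$L^1$ assumption on $m_N^{\mathrm{Slater}}$.

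The main obstacle is controlling the residual terms $R^{\mathrm{sc}},R^{\mathrm{mf}},R^{\mathrm{reg}}$ simultaneously while balancing the cutoff scale $\beta_N$ against the semiclassical scale $\hbar=N^{-1/3}$: the semiclassical remainder picks up $\|\nabla^2 V_N\|_\infty\sim \beta_N^{-3}$ and must be damped by powers of $\hbar$, the mean-field correlation remainder carries a $N^{-1}$ prefactor but couples to $L^p$ norms of $V_N$ that blow up as $\beta_N\to 0$, and the regularization error must vanish as $\beta_N\to 0$. Optimizing these competing constraints against the fermionic kinetic-energy bound yields precisely the threshold $0<\epsilon<\tfrac{1}{24}$. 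Once the vanishing of all remainder terms is established (via the uniform estimates of Section~\ref{sec_proof_main} and the residual estimates of Section~\ref{sec_est_residual}), the limit $m_t$ is a distributional solution of the Vlasov--Poisson system \eqref{eq_vlasov_poisson}.
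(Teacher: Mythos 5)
Your overall strategy---write a Vlasov-type evolution equation for the Husimi measure, establish uniform bounds (mass, Pauli $L^\infty$, first and second moments via the conserved energy), extract a weak-$\star$ limit, and pass to the limit term by term with an Aubin--Lions compactness argument for the nonlinear force---is the same as the paper's. However, your decomposition of the remainders does not match what the paper actually needs, and this leads to a real gap in the claimed balance of exponents. First, the paper does not introduce a separate ``regularization remainder'' $R^{\mathrm{reg}}$; it keeps $V_N$ in the nonlinear drift $(\nabla V_N * \varrho_{N,t})\cdot\nabla_p m_{N,t}$ all the way, and the identification of the limiting force with $\nabla(|\cdot|^{-1}*\varrho_t)$ happens only in the final compactness step, using $\|\mathcal{G}_{\beta_N}\|_{L^1}=1$, weak-$\star$ convergence of $\varrho_{N,t}$ in $L^\infty_t L^{5/3}_q$, and the a.e.\ identification of the Aubin--Lions limit $h$ with $\nabla V*\varrho_t$. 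Second, and more importantly, you are missing the semiclassical residue coming from the \emph{potential} term (the paper's $\mathcal{R}_1$, which measures $\int_0^1\mathrm{d}s\,\nabla V_N(su_1+(1-s)w_1-w_2)-\nabla V_N(q-q_2)$); you have only a kinetic-side semiclassical residue ($\tilde{\mathcal{R}}$) and a correlation residue ($\mathcal{R}_2$).

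The decisive gap is your claim that ``the semiclassical remainder picks up $\|\nabla^2 V_N\|_\infty\sim\beta_N^{-3}$.'' If one estimates the displacement of arguments in $\nabla V_N$ by a naive mean-value bound, one does indeed pay $\|\nabla^2 V_N\|_\infty$, but that is \emph{not} what the paper does, and with that cost the exponent $\epsilon<\tfrac{1}{24}$ does not emerge from the balance you describe. The paper instead inserts $\pm\nabla V_N(q-w_2)$ and integrates by parts in $w_2$ (using $\nabla V_N(\cdot-w_2)=-\nabla_{w_2}V_N(\cdot-w_2)$), moving the extra derivative onto the two-particle density kernel $\gamma_{N,t}^{(2)}$ and onto the coherent-state profile; the derivative of $\gamma^{(2)}_{N,t}$ is then absorbed by the uniform kinetic-energy bound (Lemma~\ref{kinetic_finite}, via an estimate of the form \eqref{est_mixture}), so only $\|\nabla V_N\|_\infty\lesssim\beta_N^{-2}$ from Lemma~\ref{lem_reg_V_N} ever appears. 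This trade---one derivative of $V_N$ for one derivative of the density, paid in kinetic energy---is precisely what makes the exponent $\frac{1}{24}$ attainable, and without it your ``optimization'' step is asserting a conclusion your own accounting of powers does not deliver. You should also note that the paper's mean-field residue $\mathcal{R}_2$ is not controlled by $L^p$ norms of $V_N$ as you suggest; it is estimated by $\|\nabla V_N\|_\infty$ times a trace-norm quantity bounded crudely by $N^2$, which is then beaten by the prefactor $\hbar^{3+\frac{3}{2}(\alpha_2+\frac12)+\frac32}$ harvested from the coherent-state localization and the oscillation lemma.
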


\begin{Remark} 
	Since the total energy is conserved in this problem, the assumption of repulsive interacting potential is important to give uniform estimates both for kinetic energy and potential energy.\footnote{See Lemma \ref{kinetic_finite} below}. In fact, the result in Theorem \ref{theorem_main} holds also for attractive singular potential if the kinetic energy can be bounded uniformly in $N$.
\end{Remark}

\begin{Remark} It is proven in Proposition \ref{2nd_moment_finite} that the first moment of the Husimi measure $m_{N,t}$ is uniformly bounded. Therefore, by Theorem 7.12 in \cite{Villani2003}, the convergence stated in theorem also holds in terms of the $1$-Wasserstein metric.\footnote{The $1$-Wasserstein metric is defined as
$
W_1 (\mu, \nu) := \max_{\pi \in \Pi (\mu,\nu)} \int |x-y|\ \mathrm{d}\pi(x,y),
$
where $\mu$ and $\nu$ are probability measures and $\Pi(\mu,\nu)$ the set of all probability measures with marginals $\mu$ and $\nu$.\cite{Villani2003}}
\end{Remark}

\begin{Remark}\label{rem_gosle}
In \cite{Golse2017}, the rate of convergence from Schr\"odinger to the Vlasov equation in the pseudometric is obtained for the interaction potential $V \in C^{1,1}$. In addition, the authors commented that their result can be extended for the truncated Coulomb interaction, but with order higher than $C/\sqrt{\ln N}$ for some constant $C >0$. In Theorem \ref{theorem_main}, the mollification of the Coulomb interaction can be handled with polynomial truncation.
\end{Remark}

\begin{Remark} 
	The global existence of classical solution to the Vlasov-Poisson equation in 3-dimension is proven in \cite{pfaffelmoser1992global} and \cite{lions1991propagation} for a general class of initial data. The uniqueness of the solution is proven in \cite{lions1991propagation} for initial datum with strong moment conditions and integrability. In \cite{Loeper2006}, the uniqueness of the solution is also proven for bounded macroscopic density. Furthermore, the global existence of weak solutions is provided in \cite{Arsenextquotesingleev1975} for bounded initial data and kinetic energy. The result is then relaxed to only $L^p$-bound for $p>1$ in \cite{gilbarg2015elliptic}. Result on existence with symmetric initial data is proven in \cite{batt1977global,Dobrushin1979,schaeffer1987global}. For other results, we refer to the works given in \cite{ambrosio2014lagrangian,bohun2016lagrangian,horst1981classical} to list a few.
	
\end{Remark}

\subsection{Proof strategies}\label{sec_proof_strat}
From \cite[Proposition 2.1]{Chen2021}, we obtain the following equation from the Schrödinger equation given \eqref{Schro_1}, i.e.,
\begin{equation} 
	\begin{aligned}
	&\p_t m_{N,t}(q,p) + p \cdot \nabla_{q} m_{N,t}(q,p) - \nabla_q\cdot\left(\hbar \Im \left< \nabla_{q} a (f^\hbar_{q,p}) \psi_{N,t}, a (f^\hbar_{q,p}) \psi_{N,t} \right>\right)\\
	&= \frac{1}{(2\pi)^3}\nabla_p\cdot\iint \mathrm{d}w_1 \mathrm{d}u_1 \iint \mathrm{d}w_2\mathrm{d}u_2 \iint \mathrm{d}q_2 \mathrm{d}p_2    \left(  f^\hbar_{q,p}(w)  \overline{f^\hbar_{q,p}(u)} \right)^{\otimes 2}\\
	& \hspace{1.5cm}\int_0^1 \mathrm{d}s\ \nabla V_N\big( su_1+ (1-s)w_1-w_2 \big)  \gamma_{N,t}^{(2)}(u_1,u_2;w_1,w_2),
	\end{aligned}
	\end{equation}
where we denote
$$\left(  f^\hbar_{q,p}(w)  \overline{f^\hbar_{q,p}(u)} \right)^{\otimes 2} :=  f^\hbar_{q,p}(w_1)  \overline{f^\hbar_{q,p}(u_1)}  f^\hbar_{q_2,p_2}(w_2)  \overline{f^\hbar_{q_2,p_2}(u_2)}.$$
In particular, this can be rewritten into the Vlasov equation with remainder terms, i.e.,
\begin{equation} \label{BBGKY_k1}
		\begin{aligned}
			&\p_t m_{N,t}(q,p) + p \cdot \nabla_{q} m_{N,t}(q,p)\\
			&= \frac{1}{(2\pi)^3}\nabla_{p} \cdot  \int \dd{q_2} \nabla V_N(q- q_2) \varrho_{N,t}(q_2) m_{N,t}(q,p)   
			+ \nabla_{q}\cdot \tilde{\mathcal{R}} +\nabla_{p}\cdot \mathcal{R},
		\end{aligned}
	\end{equation}
where $\varrho_{N,t}(q):= \int \dd{p} m_{N,t}(q,p)$, $\tilde{\mathcal{R}}$ and $\mathcal{R}=\mathcal{R}_1+\mathcal{R}_2$ are given by
\begin{equation}\label{bbgky_remainder_1}
		\begin{aligned}
		\tilde{\mathcal{R}} :=  &  \hbar \Im \left< \nabla_{q} a (f^\hbar_{q,p}) \psi_{N,t}, a (f^\hbar_{q,p}) \psi_{N,t} \right>,
		\\
		{\mathcal{R}_1} :=& \frac{1}{(2\pi)^3}  \iint \mathrm{d}w_1 \mathrm{d}u_1 \iint \mathrm{d}w_2\mathrm{d}u_2 \iint \mathrm{d}q_2 \mathrm{d}p_2    \left(  f^\hbar_{q,p}(w)  \overline{f^\hbar_{q,p}(u)} \right)^{\otimes 2}\\
		&
		\qquad \bigg[\int_0^1 \mathrm{d}s\ \nabla V_N\big( su_1+ (1-s)w_1-w_2 \big)   - \nabla V_N(q-q_2) \bigg]\gamma_{N,t}^{(2)}(u_1,u_2;w_1,w_2)  ,
		\\
		{\mathcal{R}_2} :=& 	\frac{1}{(2\pi)^3}  \iint \mathrm{d}w_1 \mathrm{d}u_1 \iint \mathrm{d}w_2\mathrm{d}u_2 \iint \mathrm{d}q_2 \mathrm{d}p_2    \left(  f^\hbar_{q,p}(w)  \overline{f^\hbar_{q,p}(u)} \right)^{\otimes 2}\\
		& 
		\qquad  \nabla V_N(q-q_2) \bigg[ \gamma_{N,t}^{(2)}(u_1,u_2;w_1,w_2) -\gamma_{N,t}^{(1)}(u_1;w_1)\gamma_{N,t}^{(1)}(u_2;w_2)\bigg].
		\end{aligned}
	\end{equation}
The main contribution of this article is to rigorously prove the limit $N\to\infty$ from \eqref{BBGKY_k1} to the Vlasov-Poisson equation \eqref{eq_vlasov_poisson} in the sense of distribution.

First, from the uniform estimate of the kinetic energy shown in Lemma \ref{kinetic_finite}, we prove in Proposition \ref{2nd_moment_finite} the uniform estimate for the moments of Husimi measure. Additionally, because the Husimi measure belongs to $ L^\infty([0,T];L^1(\R^3) \cap L^\infty(\R^3))$ (see Lemma \ref{prop_kHusimi}), we obtain directly the weak compactness of the two linear terms on the left-hand side of \eqref{BBGKY_k1} by the Dunford-Pettis theorem.\footnote{See Proposition \ref{propweaklimit}.}

For the quadratic term on the right-hand side of \eqref{BBGKY_k1}, the classical Thomas-Fermi theory gives that $\varrho_{N,t} \in L^\infty([0,T]; L^{5/3}(\R^3))$. With the \textit{a priori} estimate obtained in Section \ref{sec_proof_main}, the Aubin-Lions compact embedding theorem shows the strong compactness of $\nabla V_N  * \varrho_{N,t}$.

The estimate for the remainder term $\tilde{\mathcal{R}}$ is provided in \cite[Proposition 2.4]{Chen2021}. Thus, the main work of this paper is dealing with the challenging term $\mathcal{R}$. Unlike the BBGKY hierarchy used in \cite{Chen2021}, where the remainder term contains only the difference between the $2$-particle density matrices, we write the term $\mathcal{R}$ as a combination of the semiclassical and mean-field terms as $\mathcal{R}_1$ and $\mathcal{R}_2$, respectively.\footnote{See \eqref{remainder_total} for the full structure.} Thus, the factorization effect can be directly obtained from $\mathcal{R}_2$ instead of using the method of the BBGKY hierarchy.

The estimates for $\mathcal{R}_1$ and $\mathcal{R}_2$ are shown in Proposition \ref{est_semi_R} and Proposition \ref{est_meanfield_R} respectively, in which we utilize the estimates of the `cutoff' number operator and momentum oscillation presented in Lemma \ref{N_hbar} and Lemma \ref{estimate_oscillation}, to control the growth of the Lipschitz constant $V_N$, which is of order $\beta_{N}^{-2}$.

\subsection{\textit{A priori} estimates}\label{sec_proof_main}

We present in this subsection a sequence of estimates that is used repeatedly in the proof.

First, we cite the following properties of $k$-particle Husimi measures from
(or \cite[Lemma 2.2]{Chen2021} for the time dependent version).

\begin{Lemma} \label{prop_kHusimi}
Suppose that $\Psi_{N,t} \in \mathcal{F}_a$ is normalized for any $t \geq 0$. Then, the following properties hold true for $m^{(k)}_{N,t}$,
\begin{enumerate}
\item  $m^{(k)}_{N,t}(q,p,\dots,q_k,p_k)$ is symmetric,
\item  $\frac{1}{(2\pi)^{3k}} \dotsint (\mathrm{d}q\mathrm{d}p)^{\otimes k} m^{(k)}_{N,t}(q,p,\dots,q_k,p_k) = \frac{N(N-1)\cdots (N-k+1)}{N^k}$,
\item $\frac{1}{(2\pi \hbar)^{3}} \iint \mathrm{d}q_k  \mathrm{d}p_k\ m^{(k)}_{N,t}(q,p,\dots,q_k,p_k) = (N-k+1) m^{(k-1)}_{N,t}(q,p,\dots,q_{k-1},p_{k-1}) $,
\item $ 0 \leq  m^{(k)}_{N,t}(q,p,\dots,q_k,p_k) \leq 1$ a.e.,
\end{enumerate}
where $1 \leq k \leq N$.
\end{Lemma}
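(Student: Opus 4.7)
The plan is to prove the four properties essentially in the order $(1) \to (4) \to (3) \to (2)$, since $(2)$ follows by iterating $(3)$, and the non-negativity half of $(4)$ is the cleanest starting point.

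For property $(1)$, the symmetry under swapping $(q_i,p_i)\leftrightarrow(q_j,p_j)$ amounts to the symmetry of $\gamma^{(k)}_{N,t}(u_1,\dots,u_k;w_1,\dots,w_k)$ under simultaneous permutation of the $u$'s and the $w$'s. The first thing I would do is rewrite \eqref{husimi_def_1} in the compact form
\[
m^{(k)}_{N,t}(q_1,p_1,\dots,q_k,p_k) = \bigl\langle \Psi_{N,t},\, a^*(f^\hbar_{q_1,p_1})\cdots a^*(f^\hbar_{q_k,p_k})\, a(f^\hbar_{q_k,p_k})\cdots a(f^\hbar_{q_1,p_1})\Psi_{N,t}\bigr\rangle,
\]
obtained by Fubini and the definitions \eqref{distrb_annil_creation_def}. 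Then swapping two indices $i<j$ introduces one sign from the anticommutation of the $a^*$'s in \eqref{def:CAR} and an identical sign from the corresponding $a$'s, so the expression is symmetric.

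For property $(4)$, the representation above immediately yields the lower bound, since $m^{(k)}_{N,t}=\|a(f^\hbar_{q_k,p_k})\cdots a(f^\hbar_{q_1,p_1})\Psi_{N,t}\|^2\geq 0$. For the upper bound I would invoke the fermionic operator-norm estimate \eqref{ineq_anil_create}, namely $\|a(g)\|_{\mathrm{op}}\leq \|g\|_{L^2}$ on $\mathcal F_a$, iterate it $k$ times, and use that $\|f^\hbar_{q,p}\|_{L^2}=\|f\|_{L^2}=1$ by a direct change of variables in \eqref{def_coherent}. This yields $m^{(k)}_{N,t}\leq \|\Psi_{N,t}\|^2=1$.

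For property $(3)$, the key tool is the coherent-state completeness relation, which gives the kernel identity
\[
\frac{1}{(2\pi\hbar)^3}\iint \mathrm d q_k\,\mathrm d p_k\, f^\hbar_{q_k,p_k}(w_k)\overline{f^\hbar_{q_k,p_k}(u_k)} = \delta(w_k-u_k);
\]
substituting this into \eqref{husimi_def_1}, the $w_k, u_k$-integrations collapse to $\int \mathrm d w_k\, \gamma^{(k)}_{N,t}(u_1,\dots,u_{k-1},w_k;w_1,\dots,w_{k-1},w_k)$, which by the partial-trace formula \eqref{tr_gammak} equals $(N-k+1)\gamma^{(k-1)}_{N,t}(u_1,\dots,u_{k-1};w_1,\dots,w_{k-1})$. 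Recognizing the remaining integral as $m^{(k-1)}_{N,t}$ closes $(3)$.

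For property $(2)$, I would simply iterate $(3)$ a total of $k$ times, obtaining
\[
\frac{1}{(2\pi\hbar)^{3k}}\dotsint (\mathrm d q\,\mathrm d p)^{\otimes k}\, m^{(k)}_{N,t} = N(N-1)\cdots(N-k+1),
\]
and then convert the prefactor using the semiclassical scaling $\hbar=N^{-1/3}$ (so $\hbar^{3k}=N^{-k}$), which turns $(2\pi\hbar)^{-3k}$ into $(2\pi)^{-3k}N^{k}$ and yields the stated identity. There is no real obstacle here; the only part that needs minor care is the bookkeeping in $(1)$ to confirm that the two sign changes from the CAR cancel, and the conversion between the $\hbar$-normalization in $(3)$ and the $N$-normalization in $(2)$.
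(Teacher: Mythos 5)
Your proof is correct and is the standard route for these Husimi-measure properties. The operator representation $m^{(k)}_{N,t}=\|a(f^\hbar_{q_k,p_k})\cdots a(f^\hbar_{q_1,p_1})\Psi_{N,t}\|^2$ handles (1) and (4) cleanly — for (1), the sign from anticommuting the $a^*$'s cancels against the sign from the $a$'s (or, even more simply, signs are irrelevant inside a norm-squared); for (4), non-negativity is immediate and the upper bound follows from iterating $\|a(g)\|_{\mathrm{op}}\leq\|g\|_{L^2}$ with the change of variables showing $\|f^\hbar_{q,p}\|_{L^2}=\|f\|_{L^2}=1$. For (3), your use of the coherent-state resolution of identity to collapse the $(q_k,p_k,w_k,u_k)$ integrations into the partial trace $\int\mathrm d w_k\,\gamma^{(k)}_{N,t}(\cdot,w_k;\cdot,w_k)=(N-k+1)\gamma^{(k-1)}_{N,t}$ is exactly the right mechanism, using the normalization built into \eqref{tr_gammak}. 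Iterating (3) down to $m^{(0)}_{N,t}=\|\Psi_{N,t}\|^2=1$ and substituting $\hbar^{3k}=N^{-k}$ gives (2). This matches the approach in the cited reference, and there are no gaps.
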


Then, due to the conservation of energy and the repulsive effect of the Coulomb force, we obtain the following estimate for the kinetic energy.

\begin{Lemma}\label{kinetic_finite}
Assuming that $V_N(x) \geq 0$ and the initial total energy is bounded in the sense that $\frac{1}{N}\langle\Psi_{N},\mathcal{H}_{N}\Psi_{N}\rangle\leq C$, then there exists a constant $C>0$ independent of $N$ such that
\begin{equation}\label{k_kinetic_bounded}
		\left<\Psi_{N,t}, \frac{\mathcal{K}}{N} \Psi_{N,t}\right> \leq C.
	\end{equation}
\end{Lemma}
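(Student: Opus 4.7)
The argument is short and rests on two ingredients: conservation of total energy along the Schrödinger flow, and the pointwise non-negativity of the interaction term.

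The plan is to exploit the fact that $\mathcal{H}_N$ is self-adjoint, so that the unitary propagator $e^{-\mathrm{i} t \mathcal{H}_N/\hbar}$ preserves the expectation of $\mathcal{H}_N$. Thus for every $t \in [0,T]$,
\begin{equation*}
\langle \Psi_{N,t}, \mathcal{H}_N \Psi_{N,t}\rangle = \langle \Psi_N, \mathcal{H}_N \Psi_N\rangle,
\end{equation*}
and by the hypothesis on the initial total energy, the right-hand side is bounded by $CN$.

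Next I would split the Hamiltonian into its kinetic and interaction parts, $\mathcal{H}_N = \mathcal{K} + \mathcal{V}_N$, where the interaction piece is
\begin{equation*}
\mathcal{V}_N = \frac{1}{2N} \iint \mathrm{d}x\mathrm{d}y\ V_N(x-y) a^*_x a^*_y a_y a_x,
\end{equation*}
as read off from \eqref{Fock_Hamil}. Since the Schrödinger evolution commutes with the number operator $\mathcal{N}$, the solution $\Psi_{N,t}$ remains in the $N$-particle sector $L^2_a(\R^{3N})$ for all $t$. Restricted to that sector, $\mathcal{V}_N$ acts as multiplication by $\frac{1}{2N}\sum_{i\neq j} V_N(x_i - x_j)$. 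The hypothesis $V_N \geq 0$ then gives $\langle \Psi_{N,t}, \mathcal{V}_N \Psi_{N,t}\rangle \geq 0$.

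Combining these two observations yields
\begin{equation*}
\langle \Psi_{N,t}, \mathcal{K}\Psi_{N,t}\rangle \leq \langle \Psi_{N,t}, \mathcal{H}_N \Psi_{N,t}\rangle = \langle \Psi_N, \mathcal{H}_N \Psi_N\rangle \leq CN,
\end{equation*}
and dividing by $N$ produces \eqref{k_kinetic_bounded}. There is no real obstacle here: the only point that deserves a line of justification is the identification of the Fock-space operator $\mathcal{V}_N$ with the usual symmetric two-body multiplication operator on the $N$-particle sector, which is a standard consequence of the CAR in \eqref{eq:CAR} together with $V_N \in L^1_{\mathrm{loc}}$. The repulsiveness assumption is essential, because without it the potential energy could absorb a large positive kinetic energy and the bound would fail.
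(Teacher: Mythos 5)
Your proof is correct and follows essentially the same route as the paper: energy conservation of the unitary Schr\"odinger flow plus non-negativity of the interaction piece, giving $\langle\Psi_{N,t},\mathcal{K}\Psi_{N,t}\rangle\leq\langle\Psi_{N,t},\mathcal{H}_N\Psi_{N,t}\rangle=\langle\Psi_N,\mathcal{H}_N\Psi_N\rangle\leq CN$. The one cosmetic difference is that you justify $\langle\Psi_{N,t},\mathcal{V}_N\Psi_{N,t}\rangle\geq0$ by restricting to the $N$-particle sector and viewing $\mathcal{V}_N$ as a multiplication operator, whereas it also follows directly in Fock space from $\langle\Psi,\mathcal{V}_N\Psi\rangle=\tfrac{1}{2N}\iint\dd{x}\dd{y}\,V_N(x-y)\,\|a_y a_x\Psi\|^2\geq0$; either justification is fine.
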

\begin{proof}
We define the operator
\[
	\mathcal{V}_N := \frac{1}{N} \iint \mathrm{d}x \mathrm{d}y\ V_N(x-y) a^*_x a^*_y a_y a_x.
	\]
Since $V_N\geq0$, we have $\langle\Psi_{N,t},\mathcal{V}_{N}\Psi_{N,t}\rangle\geq0$. Then
\begin{align*}
		\langle\Psi_{N,t},\mathcal{H}_{N}\Psi_{N,t}\rangle=\langle\Psi_{N,t},\mathcal{K}\Psi_{N,t}\rangle+\langle\Psi_{N,t},\mathcal{V}_{N}\Psi_{N,t}\rangle,
	\end{align*}
implies
\[
	0\leq\langle\Psi_{N,t},\mathcal{K}\Psi_{N,t}\rangle\leq\langle\Psi_{N,t},\mathcal{H}_{N}\Psi_{N,t}\rangle.
	\]
Hence,
\[
	\frac{1}{N}\langle\Psi_{N,t},\mathcal{K}\Psi_{N,t}\rangle \leq \frac{1}{N}\langle\Psi_{N,t},\mathcal{H}_{N}\Psi_{N,t}\rangle = \frac{1}{N}\langle\Psi_{N},\mathcal{H}_{N}\Psi_{N}\rangle\leq C.
	\]
\end{proof}

Consequently, the moment estimate of the Husimi measure is obtained directly from the uniform bound in Lemma \ref{kinetic_finite}.

\begin{Proposition} \label{2nd_moment_finite} For $t \geq 0$, we have the following finite moments:
\begin{equation}\label{2nd_moment_finite_0}
				\iint \mathrm{d}q \mathrm{d}p\ (|{q}| + |{p}|^2) m_{N,t}(q,p) \leq C(1+t),
		\end{equation}
where $C>0$ is a constant that depends on initial data $\iint \mathrm{d}q \mathrm{d}p\ (|q| + |p|^2) m_{N}(q, p) $.
\end{Proposition}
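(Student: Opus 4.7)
The plan is to bound the two moments in \eqref{2nd_moment_finite_0} separately. The $|p|^2$ moment will be uniformly bounded in $N$ and $t$ by identifying it, up to vanishing corrections, with twice the kinetic energy and then invoking Lemma \ref{kinetic_finite}. The $|q|$ moment has no such \emph{a priori} bound and will grow at most linearly in $t$, controlled by a Grönwall-type argument applied directly to the evolution equation \eqref{BBGKY_k1}.

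For the $|p|^2$ moment, I would first rewrite $m_{N,t}(q,p) = \langle f^\hbar_{q,p}, \gamma^{(1)}_{N,t} f^\hbar_{q,p}\rangle$ and spectrally decompose $\gamma^{(1)}_{N,t} = \sum_j \lambda_j |\varphi_j\rangle\langle\varphi_j|$ with $\sum_j \lambda_j = N$. This reduces the computation to a single-particle identity for $\iint |p|^2 |\langle f^\hbar_{q,p}, \varphi\rangle|^2 \dd q\dd p$. Since $\langle f^\hbar_{q,p}, \varphi\rangle$ is, up to $\hbar$-prefactors, the Fourier transform in $p$ of $w \mapsto \hbar^{-3/4} f((w-q)/\sqrt{\hbar})\,\varphi(w)$, Plancherel converts the $|p|^2$ factor into an $L^2$-norm of a gradient. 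Integrating in $q$ and using that $f$ is real, so the cross term $\int (\nabla f)\,f\,\dd y = 0$ vanishes, yields
\begin{equation*}
\iint |p|^2 m_{N,t}(q,p)\,\dd q\,\dd p \;=\; (2\pi\hbar)^3\bigl[\,\hbar\,\|\nabla f\|_{L^2}^2\,N \;+\; 2\,\langle \Psi_{N,t},\mathcal{K}\Psi_{N,t}\rangle\,\bigr].
\end{equation*}
Because $\hbar = N^{-1/3}$ gives $(2\pi\hbar)^3 N = (2\pi)^3$, and Lemma \ref{kinetic_finite} bounds $\frac{1}{N}\langle \Psi_{N,t},\mathcal{K}\Psi_{N,t}\rangle$ uniformly, the $|p|^2$ moment is bounded independently of $N$ and $t$.

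For the $|q|$ moment, I would differentiate in time using \eqref{BBGKY_k1}. After integration by parts against $|q|$, justified by mollifying $|q| \leadsto \sqrt{|q|^2 + \varepsilon}$ and letting $\varepsilon \to 0$, every $\nabla_p$-divergence on the right-hand side drops out (since $|q|$ has no $p$-dependence), leaving
\begin{equation*}
\frac{d}{dt}\iint |q|\,m_{N,t}\,\dd q\,\dd p \;=\; \iint \frac{q}{|q|}\cdot p\;m_{N,t}\,\dd q\,\dd p \;-\; \iint \frac{q}{|q|}\cdot\tilde{\mathcal{R}}(q,p)\,\dd q\,\dd p.
\end{equation*}
The transport contribution is dominated by $\iint |p|\,m_{N,t} \leq (\iint m_{N,t})^{1/2}(\iint|p|^2 m_{N,t})^{1/2}$, which is uniformly bounded via Lemma \ref{prop_kHusimi}(2) and the previous step. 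For the remainder, using $\nabla_q f^\hbar_{q,p} = -\hbar^{-1/2}(\nabla f)^\hbar_{q,p}$, I would rewrite
\begin{equation*}
\tilde{\mathcal{R}}(q,p) \;=\; -\sqrt{\hbar}\;\Im\bigl\langle a\bigl((\nabla f)^\hbar_{q,p}\bigr)\Psi_{N,t},\; a\bigl(f^\hbar_{q,p}\bigr)\Psi_{N,t}\bigr\rangle,
\end{equation*}
and then a Cauchy--Schwarz in $(q,p)$ together with the general identity $\iint \|a(g^\hbar_{q,p})\Psi_{N,t}\|^2\,\dd q\,\dd p = (2\pi\hbar)^3\,\|g\|_{L^2}^2\,N$ gives a bound of order $\sqrt{\hbar}\,\|\nabla f\|_{L^2}$, which vanishes as $N \to \infty$. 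Integrating in time and invoking the initial moment bound \eqref{inip2q} yields $\iint |q|\,m_{N,t}\,\dd q\,\dd p \leq C(1+t)$, and combining with the uniform $|p|^2$ estimate completes the proof.

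The main technical point is the cancellation $\int (\nabla f)\,f\,\dd y = 0$ in the $|p|^2$ identity, which requires $f \in H^1$ together with $\|f\|_{L^2}=1$ and is what lets the Husimi second moment be identified cleanly with the kinetic energy up to an $O(\hbar)$ error; the $\sqrt{\hbar}$-smallness of $\tilde{\mathcal{R}}$ uses the same $H^1$-regularity of $f$. The singularity of $|q|$ at the origin is a routine technicality handled by the $\sqrt{|q|^2+\varepsilon}$ mollification.
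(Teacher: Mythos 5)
Your proof is correct and follows essentially the same route as the paper: bound the $|p|^2$ moment by relating $\iint |p|^2 m_{N,t}$ to the kinetic energy (the paper simply cites the identity from \cite{Chen2021} eq.\ (2.16), whereas you rederive it via Plancherel and the cancellation $\int f\nabla f = 0$), and then control the $|q|$ moment by differentiating in time, integrating by parts against $|q|$ in \eqref{BBGKY_k1}, and bounding $\iint \frac{q}{|q|}\cdot p\, m_{N,t}$ via Cauchy--Schwarz. One point where your write-up is actually more careful than the paper's: after the integration by parts the paper silently drops the contribution $-\iint \frac{q}{|q|}\cdot\tilde{\mathcal{R}}$ from its display, whereas you retain it and bound it by $O(\sqrt{\hbar}\,\|\nabla f\|_{L^2})$ using the identity $\iint\|a(g^\hbar_{q,p})\Psi_{N,t}\|^2\,\mathrm{d}q\,\mathrm{d}p = (2\pi\hbar)^3\|g\|_{L^2}^2 N$ --- this is the right way to close the estimate and should be regarded as filling a small gap in the published argument.
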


\begin{proof}
First, from equation (2.16) in \cite{Chen2021}, we obtain that
\begin{equation}
		\begin{aligned}
			\left<\psi_{N,t},\frac{\mathcal{K}}{N}\psi_{N,t} \right> =  \frac{1}{(2\pi)^3}\iint \mathrm{d}q\mathrm{d}p\ |p|^2  m_{N,t}(q, p)
			+ \hbar \int \mathrm{d}q \left| \nabla f\left(q\right) \right|^2,
		\end{aligned}
	\end{equation}
which implies that
\begin{equation}\label{moment_momentum_1}
		\begin{aligned}
			\frac{1}{(2\pi)^3}\iint \mathrm{d}q\mathrm{d}p\ |p|^2  m_{N,t}(q, p) 
			&\leq  \left<\psi_{N,t},\frac{\mathcal{K}}{N} \psi_{N,t} \right> \leq C,
		\end{aligned}
	\end{equation}
where we use Lemma \ref{kinetic_finite} in the last inequality.

Then, for the moment with respect to $q$, we obtain from \eqref{BBGKY_k1} that
\begin{equation}\label{q_moment}
		\begin{aligned}
			&\dv{t} \iint \mathrm{d}q \mathrm{d}p\ |q| m_{N,t}(q,p) =  \iint \mathrm{d}q \mathrm{d}p\ |q|\p_t m_{N,t}(q,p)\\
			= &\iint \mathrm{d}q \mathrm{d}p\ |q| \bigg( - p \cdot \nabla_{q} m_{N,t}(q,p) + \frac{1}{(2\pi)^3} \nabla_{p} \cdot \iint \mathrm{d}w_1\mathrm{d}u_1 \iint \mathrm{d}w_1\mathrm{d}u_2 \iint \mathrm{d}q_2 \mathrm{d}p_2 \int_0^1 \mathrm{d}s\\
			&  \nabla V\big(su_2+(1-s)w_1 - w_2 \big) f_{q,p}^\hbar (w_1) \overline{f_{q,p}^\hbar (u_1)} f_{q_2,p_2}^\hbar (w_2) \overline{f_{q_2,p_2}^\hbar (u_2)}  \left<a_{w_2}a_{w_1} \Psi_{N,t}, a_{u_2} a_{u_1} \Psi_{N,t} \right> + \nabla_{q} \cdot \tilde{\mathcal{R}} \bigg).
		\end{aligned}
	\end{equation}

By applying the divergence theorem first with respect to $p$ and then with respect to $q$ in \eqref{q_moment}, we obtain
\begin{align*}
			\dv{t} \iint \mathrm{d}q \mathrm{d}p\ |q| m_{N,t}(q,p) & =  \iint \mathrm{d}q \mathrm{d}p\ \frac{q}{|q|} \cdot p \  m_{N,t}(q,p)\\
			& 
			\leq  \iint \mathrm{d}q \mathrm{d}p\ (1+|p|^2) \cdot  m_{N,t}(q,p),
		\end{align*}
where we use Young's product inequality. Finally, taking the integral over $t$, we obtain the desired result.
\end{proof}

\subsection{Proof of Theorem \ref{theorem_main}}\label{sec_real_main_proof}

First, denoting $\varrho_{N,t}(q) := \int m_{N,t}(q,p) \dd{p}$, recall the Vlasov equation
\begin{equation} \label{BBGKY_k1_compact}
		\begin{aligned}
			&\p_t m_{N,t}(q,p) + p \cdot \nabla_{q} m_{N,t}(q,p)\\
			&= \frac{1}{(2\pi)^3}\nabla_{p} \cdot  \int \dd{q_2} \nabla V_N(q- q_2) \varrho_{N,t}(q_2) m_{N,t}(q,p)   
			+ \nabla_{q}\cdot \tilde{\mathcal{R}} +\nabla_{p}\cdot {\mathcal{R}}\\
			&
			= \frac{1}{(2\pi)^3} (\nabla V_N * \varrho_{N,t}) (q)\cdot \nabla_p m_{N,t} (q,p)+  \nabla_q \cdot \tilde{\mathcal{R}} + \nabla_p \cdot  {\mathcal{R}},
		\end{aligned}
	\end{equation}
with
\begin{equation}\label{214}
		\begin{aligned}
			\tilde{\mathcal{R}} :=  &  \hbar \Im \left< \nabla_{q} a (f^\hbar_{q,p}) \Psi_{N,t}, a (f^\hbar_{q,p}) \Psi_{N,t} \right>,
			\\
			{\mathcal{R}}:=& (2\pi)^3  \iint \mathrm{d}w_1 \mathrm{d}u_1 \iint \mathrm{d}w_2\mathrm{d}u_2 \iint \mathrm{d}q_2 \mathrm{d}p_2    \left(  f^\hbar_{q,p}(w)  \overline{f^\hbar_{q,p}(u)} \right)^{\otimes 2}\\
			&
			\qquad \bigg[\int_0^1 \mathrm{d}s\ \nabla V_N\big( su_1+ (1-s)w_1-w_2 \big) \gamma_{N,t}^{(2)}(u_1,u_2;w_1,w_2)  \\
			& 
			\qquad   - \nabla V_N(q-q_2)\gamma_{N,t}^{(1)}(u_1;w_1)\gamma_{N,t}^{(1)}(u_2;w_2)\bigg].
		\end{aligned}
	\end{equation}
The main task is now reduced to taking limits in \eqref{214}. In fact, Section \ref{sec_est_residual} is devoted to deriving the estimates for the residuals. As a summary, it is proven in Section \ref{sec_est_residual} that for $\varphi,\phi\in C^\infty_0 (\R^3)$, there exists a positive constant ${K}$ such that
\begin{equation}\label{est_RtildeR_sum}
		\begin{aligned}
			\left|\iint \mathrm{d}q \mathrm{d}p\ \varphi(q)\phi(p) \nabla_{q}\cdot \tilde{\mathcal{R}}(q,p) \right| &\leq {K} \hbar^{\frac{1}{2}-\delta},\\
			\left|\iint \mathrm{d}q \mathrm{d}p\ \varphi(q)\phi(p)  \nabla_{p} \cdot \mathcal{R}(q,p)  \right| & \leq  {K} \big( \hbar^{\frac{1}{4}(6\alpha_1 - 5)-2\delta} +  \hbar^{\frac{3}{2}(\alpha_2 - \frac{1}{2}) -2\delta}\big),
		\end{aligned}
	\end{equation}
where $\frac{5}{6} < \alpha_1 < 1$, $\frac{1}{2} < \alpha_2 < 1$ and $0 < \delta \ll 1$. The estimates in \eqref{est_RtildeR_sum} show that the residual terms converge to zero in the sense of distribution.

Next, we have the following result on weak 
convergent in $L^1$:

\begin{Proposition}[Proposition 2.7 of \cite{Chen2021}]\label{lem_dunford}
	\label{propweaklimit}
Let $\{m_{N,t}\}_{N\in \N}$ be the $1$-particle Husimi measure; then, there exists a subsequence $\{m_{N_j,t}\}_{j\in \N}$ that converges weakly in $L^1(\R^{3} \times \R^3 )$ to a function $ (2\pi)^{3}m_{t}$; i.e., for all $ \Phi\in L^\infty (\R^{3} \times \R^3)$, it holds that
\[
	\frac{1}{(2\pi)^{3}}\iint \mathrm{d}q\mathrm{d}p\  m_{N_j,t} (q,p)\Phi(q,p) \rightarrow \iint \mathrm{d}q\mathrm{d}p\   m_{t} (q,p)\Phi (q,p),
	\]
as $j \to \infty$.
\end{Proposition}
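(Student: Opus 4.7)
The plan is to apply the Dunford-Pettis theorem, which characterizes the relatively weakly sequentially compact subsets of $L^1(\R^3 \times \R^3)$ as exactly the uniformly integrable families. Concretely, at fixed $t \in [0,T]$, I would need to verify three properties for the family $\{m_{N,t}\}_{N \in \N}$: (i) a uniform $L^1$ bound; (ii) equi-absolute continuity, i.e.\ uniform smallness of $\iint_E m_{N,t}$ on sets $E$ of small Lebesgue measure; and (iii) tightness, i.e.\ uniform smallness of the mass outside a large ball in phase space.

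For (i), Lemma \ref{prop_kHusimi}(2) applied with $k=1$ gives $\|m_{N,t}\|_{L^1(\R^3 \times \R^3)} = (2\pi)^3$ for every $N$ and $t$, so the family is bounded in $L^1$. For (ii), the pointwise bound $0 \leq m_{N,t} \leq 1$ a.e.\ from Lemma \ref{prop_kHusimi}(4) reduces the condition to the trivial estimate $\iint_E m_{N,t}\,\mathrm dq\,\mathrm dp \leq |E|$, so one can take $\delta = \varepsilon$ in the $(\varepsilon,\delta)$-definition. For (iii), I would invoke Proposition \ref{2nd_moment_finite}, giving $\iint (|q|+|p|^2) m_{N,t}\,\mathrm dq\,\mathrm dp \leq C(1+T)$ uniformly in $N$, and then apply Chebyshev's inequality:
\begin{equation*}
\iint_{|q|+|p|^2 > R} m_{N,t}(q,p)\,\mathrm dq\,\mathrm dp \;\leq\; \frac{1}{R}\iint (|q|+|p|^2)\, m_{N,t}(q,p)\,\mathrm dq\,\mathrm dp \;\leq\; \frac{C(1+T)}{R},
\end{equation*}
which vanishes uniformly in $N$ as $R \to \infty$. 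With (i)--(iii) in hand, Dunford-Pettis produces a subsequence $\{m_{N_j,t}\}_j$ that converges weakly in $L^1$ to some nonnegative limit, which I would denote $(2\pi)^3 m_t$ so that $m_t$ is normalized as a probability density (tested against the constant $\mathbf 1 \in L^\infty$, which is admissible for weak $L^1$ convergence).

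I do not expect a genuine obstacle in this particular step: the core of the work has already been delegated upstream to the kinetic energy bound of Lemma \ref{kinetic_finite}, which in turn relies on repulsivity of $V_N$ together with the conservation of total energy along the Schr\"odinger flow \eqref{Schrodinger_fock}. The only subtlety worth flagging is that the extracted subsequence is, a priori, $t$-dependent; to upgrade the pointwise-in-$t$ weak $L^1$-compactness to the weak-$\star$ convergence in $L^\infty((0,T];L^1(\R^3 \times \R^3))$ required by Theorem \ref{theorem_main}, one combines the present estimate with equicontinuity in $t$ furnished by the evolution equation \eqref{BBGKY_k1_compact} (the free-transport term $p \cdot \nabla_q m_{N,t}$ controls translations in $q$, while the remainder estimates \eqref{est_RtildeR_sum} and the Vlasov force term can be tested against smooth compactly supported $\varphi \otimes \phi$). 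That time-regularity step is handled separately in the proof of the main theorem and does not interfere with the Dunford-Pettis argument itself.
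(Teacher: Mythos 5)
Your proposal is correct and follows exactly the route the paper indicates in the remark after the proposition: establish uniform integrability (the uniform $L^1$ bound from Lemma~\ref{prop_kHusimi}(2), equi-absolute continuity from the pointwise bound $0\leq m_{N,t}\leq 1$ of Lemma~\ref{prop_kHusimi}(4), and tightness from the first-moment bound of Proposition~\ref{2nd_moment_finite} via Chebyshev) and then invoke the Dunford--Pettis theorem. Your closing caveat about the $t$-dependence of the extracted subsequence and the need for a separate time-regularity argument to upgrade to weak-$\star$ convergence in $L^\infty((0,T];L^1)$ is also well placed and consistent with how the paper handles it.
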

\begin{Remark}
The proof for Lemma \ref{lem_dunford} is obtained by proving its uniform integrability and employing the Dunford-Pettis theorem for $L^1$ compactness.
\end{Remark}

Furthermore, to prove the convergence of the nonlinear term $(\nabla V_N * \rho_N) \cdot \nabla_p m_{N,t}$, we first show the strong convergence of $\nabla V_N * \rho_N$.

\begin{Lemma}\label{lem_compact_argument}
Let $V_N$ be defined as \eqref{def_V_N}. Then for $t \in [0,\infty)$ there exists constant $C>0$ independent on $N$ such that
\begin{align}
		\norm{\nabla V_N * \varrho_{N,t}}_{L^{\infty} ([0,\infty); W^{1,  \frac{5}{3}}(\R^3))}\leq C, \label{compact_1}\\ 
		\norm{\p_t (\nabla V_N * \varrho_{N,t})}_{L^{\infty} ([0,\infty); W^{-1,\frac{15}{7}}(\R^3))}\leq C. \label{compact_2}
	\end{align}
\end{Lemma}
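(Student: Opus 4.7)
The plan is to handle the two estimates \eqref{compact_1} and \eqref{compact_2} separately, relying on Hardy--Littlewood--Sobolev (HLS) and Calder\'on--Zygmund theory for the regularized Coulomb kernels, with uniform bounds on $\varrho_{N,t}$ and the associated current $J_{N,t}(q):=\int p\,m_{N,t}(q,p)\,dp$ supplied by the Pauli-type bound $m_{N,t}\leq 1$ from Lemma \ref{prop_kHusimi} and the kinetic energy bound from Lemma \ref{kinetic_finite}. A key observation is that writing $V_N=V*\mathcal{G}_{\beta_N}$, any HLS or Calder\'on--Zygmund estimate on $V$ transfers uniformly in $N$ to $V_N$ via Young's inequality with $\|\mathcal{G}_{\beta_N}\|_{L^1}=1$, so the regularization scale causes no extra difficulty.

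For \eqref{compact_1}, I will first establish $\varrho_{N,t}\in L^1\cap L^{5/3}(\R^3)$ uniformly in $N$ and $t$. The $L^1$ bound is immediate from item (2) of Lemma \ref{prop_kHusimi}. For $L^{5/3}$, the classical Thomas--Fermi-type splitting $\varrho_{N,t}(q)\leq CR^{3}+R^{-2}e_{N,t}(q)$, with $e_{N,t}(q):=\int|p|^2 m_{N,t}(q,p)\,dp$, optimized in $R$, gives $\varrho_{N,t}(q)\leq C\,e_{N,t}(q)^{3/5}$, whence $\|\varrho_{N,t}\|_{L^{5/3}}^{5/3}\leq C\|e_{N,t}\|_{L^{1}}\leq C$ uniformly in time by Lemma \ref{kinetic_finite}. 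Since $\nabla V(x)=-x/|x|^3\in L^{3/2,\infty}(\R^3)$, HLS with exponent relation $1/p+2/3=1+1/q$ yields $\|\nabla V_N*\varrho_{N,t}\|_{L^{5/3}}\leq C\|\varrho_{N,t}\|_{L^{15/14}}\leq C$ after interpolating between the $L^1$ and $L^{5/3}$ bounds on $\varrho_{N,t}$. For the gradient, $\nabla(\nabla V_N*\varrho_{N,t})=\nabla^2 V_N*\varrho_{N,t}$, and $\nabla^2 V$ is a Calder\'on--Zygmund kernel of degree $-3$, so convolution with it is bounded on $L^{5/3}$ uniformly in the mollification, completing the $W^{1,5/3}$ bound.

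For \eqref{compact_2}, I will derive a continuity equation from \eqref{BBGKY_k1_compact} by integrating in $p$: the drift term $p\cdot\nabla_q m_{N,t}$ contributes $\nabla_q\cdot J_{N,t}$, while the mean-field force $(\nabla V_N*\varrho_{N,t})\cdot\nabla_p m_{N,t}$ and the residual $\nabla_p\cdot\mathcal{R}$ are both $p$-divergences and vanish under the $p$-integral. This leaves $\partial_t\varrho_{N,t}=-\nabla_q\cdot J_{N,t}+\nabla_q\cdot\int_p\tilde{\mathcal{R}}\,dp$, so that
\[
\partial_t(\nabla V_N*\varrho_{N,t})=-\nabla V_N*\nabla_q\cdot J_{N,t}+\nabla V_N*\nabla_q\cdot\int_p\tilde{\mathcal{R}}\,dp,
\]
and component-wise integration by parts rewrites the main term as a $q$-divergence of the convolution $\nabla V_N*J_{N,t}$. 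To estimate $J_{N,t}$, Cauchy--Schwarz gives $|J_{N,t}(q)|\leq\varrho_{N,t}(q)^{1/2}e_{N,t}(q)^{1/2}$, and H\"older with conjugate exponents $8/3$ and $8/5$ then yields $\|J_{N,t}\|_{L^{5/4}}^{5/4}\leq\|\varrho_{N,t}\|_{L^{5/3}}^{5/8}\|e_{N,t}\|_{L^1}^{5/8}\leq C$. A second application of HLS with $p=5/4$ and $1/q+1=4/5+2/3$ gives $\|\nabla V_N*J_{N,t}\|_{L^{15/7}}\leq C$, placing the main term of $\partial_t(\nabla V_N*\varrho_{N,t})$ in $W^{-1,15/7}$ uniformly in $N$ and $t$.

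The main obstacle is justifying the continuity-equation reduction rigorously: one must verify that $\int_p\nabla_p\cdot\mathcal{R}\,dp=0$, which should follow from the compact support of $f$ in \eqref{def_coherent} and the decay of the relevant quantities in $p$ coming from the moment bounds of Proposition \ref{2nd_moment_finite}, and that the residual contribution $\nabla V_N*\int_p\tilde{\mathcal{R}}\,dp$ is controllable in $L^{15/7}$, for which the explicit form of $\tilde{\mathcal{R}}$ from \eqref{bbgky_remainder_1} together with the operator inequalities \eqref{ineq_anil_create} should suffice. Once these verifications are in place, the remaining HLS and Calder\'on--Zygmund estimates are routine given the uniform $L^p$ controls on $\varrho_{N,t}$ and $J_{N,t}$.
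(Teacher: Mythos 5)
Your proposal is correct and follows essentially the same route as the paper's proof: establish $\varrho_{N,t}\in L^\infty_t(L^1\cap L^{5/3})$ from the Pauli bound and the kinetic-energy estimate, bound $\nabla V_N*\varrho_{N,t}$ in $W^{1,5/3}$ via HLS and Calder\'on--Zygmund (with Young's inequality absorbing the mollifier $\mathcal{G}_{\beta_N}$ at no cost), integrate the Vlasov-type equation \eqref{BBGKY_k1_compact} in $p$ to reduce to a continuity equation, convolve with $\nabla V_N$, and control $J_{N,t}\in L^{5/4}$ and the $\tilde{\mathcal{R}}$-term to land in $L^{15/7}$. The only substantive difference is that you are more explicit than the paper about the zeroth-derivative piece of \eqref{compact_1}: the paper invokes elliptic regularity to assert $V_N*\varrho_{N,t}\in W^{2,5/3}$, but Calder\'on--Zygmund only yields the $L^{5/3}$ bound on the Hessian, whereas your interpolation of $\varrho_{N,t}$ between $L^1$ and $L^{5/3}$ to reach $L^{15/14}$ followed by HLS is the correct and clean way to put $\nabla V_N*\varrho_{N,t}$ itself in $L^{5/3}$.
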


\begin{proof}
From Lemma \ref{prop_kHusimi} and Proposition \ref{2nd_moment_finite}, one finds that $m_{N,t}$ is uniformly bounded in $L^{\infty} ([0,\infty); L^1(\R^{3} \times \R^3 ))\cap L^{\infty} ([0,\infty); L^\infty(\R^{3} \times \R^3 ))$ and $|p|^2m_{N,t}(q,p)$ uniformly in $L^{\infty} ([0,\infty); L^1(\R^{3} \times \R^3 ))$ respectively. As a consequence, it holds that
	$$
	\norm{\varrho_{N,t}}_{L^{\infty} ([0,\infty); L^{\frac{5}{3}}(\R^3))}\leq C.
	$$ 
	Thus, $V_N * \varrho_{N,t}=V*\mathcal{G}_{\beta_N}* \varrho_{N,t}$ is uniformly bounded in $L^{\infty} ([0,\infty); W^{2,  \frac{5}{3}}(\R^3))$ due to the fact that $V$ is the fundamental solution of the Poisson equation and 
	$$\norm{\mathcal{G}_{\beta_N}* \varrho_{N,t}}_{L^{\infty} ([0,\infty); L^{\frac{5}{3}}(\R^3))}\leq \norm{\mathcal{G}_{\beta_N}}_{L^1(\R^3)}\cdot \norm{\varrho_{N,t}}_{L^{\infty} ([0,\infty); L^{\frac{5}{3}}(\R^3))}. 
	$$
This implies the result \eqref{compact_1} directly.

To prove \eqref{compact_2}, recall again the transport equation for $m_{N,t}$
\begin{equation}
		\partial_t m_{N,t} + p\cdot \nabla_q m_{N,t} - \frac{1}{(2\pi)^3} (\nabla V_N * \varrho_{N,t}) \cdot \nabla_p m_{N,t} =  \nabla_q \cdot \tilde{\mathcal{R}} + \nabla_p \cdot  {\mathcal{R}},
	\end{equation}
where $\varrho_{N,t}(q) := \int m_{N,t}(q,p) \dd{p}$. Taking the integral with respect to $p$,
\begin{align*}
		&\partial_t \int \dd{p}  m_{N,t}(q,p)  + \nabla_q \cdot \int \dd{p} p\ m_{N,t}(q,p) = \nabla_q \cdot \int \dd{p}  \tilde{\mathcal{R}} .
	\end{align*}
Next, by taking the convolution with $\nabla V_N$, we obtain
\begin{equation}\label{vlasov_conserve}
		\partial_t \big(\nabla V_N * \varrho_{N,t}\big) + \nabla_q \cdot (\nabla V_N \otimes_* J_{N,t}) = \nabla_q \cdot\left( \nabla V_N \otimes_* \int \dd{p} \tilde{\mathcal{R}}\right),
	\end{equation}
where $J_{N,t}(q) :=  \int \dd{p}p\ m_{N,t}(q,p) $, $(u\otimes_*v)_{ij}=u_i*v_j$ for $u,v\in\R^3$. Then, we observe that
\begin{equation}\label{compact_0}
			\left|\int \dd{p} p \ m_{N,t} (q,p)\right|  \\
			\leq \left[\int \dd{p} |p|^2 m_{N,t}\right]^\frac{1}{2} \left[\int \dd{p} m_{N,t}\right]^\frac{1}{2}
			=  \left[\int \dd{p} |p|^2 m_{N,t}\right]^\frac{1}{2} \varrho_{N,t}^\frac{1}{2}.
	\end{equation}
Therefore, we have
\begin{align*}
		\int \dd{q} \left|J_{N,t}(q)\right|^\frac{5}{4} & = \int \dd{q} \left|\int \dd{p} p \ m_{N,t} (q,p)\right|^\frac{5}{4} 
		\leq  \left[\iint \mathrm{d}q\mathrm{d}p\ |p|^2 m_{N,t}\right]^\frac{5}{8} \left[ \int \dd{q}\varrho_{N,t}^\frac{5}{3} \right]^\frac{3}{8} 
		\leq C,
	\end{align*}
where we use Proposition \ref{2nd_moment_finite} in the last inequality, yielding that $J_{N,t}$ is uniformly bounded in $L^{\infty}\big([0,\infty);L^{\frac{5}{4}}(\R^3)\big)$. Then, for any test function $\varphi \in L^{\frac{15}{8}}(\R^3)$, we obtain for a.e. $t\geq 0$ that
\begin{align*}
	&	\int \dd{q} |\varphi(q)| \left|\int \dd{q_2} \nabla V_N (q-q_2) J_{N,t}(q_2)\right| \leq \int \mathrm{d}q |\varphi(q)| \left|\int\mathrm{d}q_2 \nabla V(q-q_2) \mathcal{G}_{\beta_N}*J_{N,t}(q_2)\right|\\
		&
			\qquad \leq \iint \mathrm{d}q \mathrm{d}q_2  |\varphi(q)| |\nabla V(q-q_2)| |\mathcal{G}_{\beta_N}*J_{N,t}(q_2)|\leq  C \norm{\varphi}_{L^{\frac{15}{8}}} \norm{\mathcal{G}_{\beta_N}*J_{N,t}}_{L^{\frac{5}{4}}}  \\
		& 
			\qquad \leq C \norm{\varphi}_{L^{\frac{15}{8}}} \norm{\mathcal{G}_{\beta_N}}_{L^1} \norm{J_{N,t}}_{L^{\frac{5}{4}}}\leq C \norm{\varphi}_{L^{\frac{15}{8}}},
	\end{align*}
where we use the Hardy-Littlewood-Sobolev inequality in the third inequality.
This implies that, by using dual formulation of $L^p$ norms and taking the supremum in $\varphi \in L^{\frac{15}{8}}$, one obtains that $\nabla V_N * J_{N,t}$ is uniformly bounded in  $L^{\infty}([0,\infty); L^{\frac{15}{7}}(\R^3) ) $.

Therefore, focusing on the estimate of $\tilde{\mathcal{R}}$, we have
\begin{align*}
		\left| \int  \mathrm{d}p \  \tilde{\mathcal{R}}\right| & 
		\leq  \hbar  \int  \mathrm{d}p \  \Big|\left< \nabla_{q} a (f^\hbar_{q,p}) \Psi_{N,t},  a (f^\hbar_{q,p}) \Psi_{N,t} \right>  \Big| \\
		&
		\leq \hbar  \int \mathrm{d}p \norm{\nabla_{q} a (f^\hbar_{q,p}) \Psi_{N,t}}\norm{ a (f^\hbar_{q,p}) \Psi_{N,t} }\\
		&
		\leq   \left[ \hbar^2 \int \mathrm{d}p  \left< \nabla_{q} a (f^\hbar_{q,p}) \Psi_{N,t},   \nabla_{q} a (f^\hbar_{q,p}) \Psi_{N,t} \right> \right]^\frac{1}{2} \left[\int  \mathrm{d}p\ m_{N,t}(q,p) \right]^\frac{1}{2}\\
		&
		= \left[ \hbar^2 \int \mathrm{d}p  \left< \nabla_{q} a (f^\hbar_{q,p}) \Psi_{N,t},   \nabla_{q} a (f^\hbar_{q,p}) \Psi_{N,t} \right> \right]^\frac{1}{2} \varrho_{N,t}^\frac{1}{2}.
	\end{align*}
Note that since it holds that
\begin{align*}
		&  \hbar^2 \iint \mathrm{d}q \mathrm{d}p  \left< \nabla_{q} a (f^\hbar_{q,p}) \Psi_{N,t},   \nabla_{q} a (f^\hbar_{q,p}) \Psi_{N,t} \right> \\
		&
		= \hbar^{\frac{1}{2}} \iint \mathrm{d}q \mathrm{d}p  \iint  \mathrm{d}w \mathrm{d}u\   \nabla_{q} f\left(\frac{w-q}{\sqrt{\hbar}}\right)\nabla_{q} f\left(\frac{u-q}{\sqrt{\hbar}}\right) e^{\frac{\mathrm{i}}{\hbar}p \cdot (w-u)} \left< \Psi_{N,t},  a^*_w a_u \Psi_{N,t} \right>\\
		&
		= \hbar^{\frac{1}{2}+3} \iint \mathrm{d}q \mathrm{d}w\ \hbar^{-1}\left|\nabla f\left(\frac{w-q}{\sqrt{\hbar}}\right) \right|^2 \left< \Psi_{N,t},  a^*_w a_w \Psi_{N,t} \right>\\&
		= \hbar^{4} \int \dd{\tilde{q}} \left|\nabla f(\tilde{q}) \right|^2 \left< \Psi_{N,t}, \mathcal{N} \Psi_{N,t} \right> \\
		&
		\leq \hbar \norm{\nabla f}_2^2,
	\end{align*}
this implies
\begin{align*}
		\int \dd{q} \left| \int  \mathrm{d}p \  \tilde{\mathcal{R}} \right|^\frac{5}{4} 
		\leq \left[ \hbar^2 \iint \mathrm{d}q \mathrm{d}p\ \left< \nabla_{q} a (f^\hbar_{q,p}) \Psi_{N,t},   \nabla_{q} a (f^\hbar_{q,p}) \Psi_{N,t} \right> \right]^\frac{5}{8} \left(\int \dd{q}\varrho_{N,t}^{\frac{5}{3}}\right)^\frac{3}{8}
		\leq \hbar^{\frac{5}{4}} C.
	\end{align*}
Repeating the calculation in \eqref{compact_0}, we have that $\nabla V_N * \int \dd{p} \tilde{\mathcal{R}}$ is uniformly bounded in $L^{\infty}([0,\infty); L^{\frac{15}{7}}(\R^3))$, which implies that $\nabla V_N *  \int \dd{p} (\nabla_q \cdot \tilde{\mathcal{R}} + \nabla_p \cdot  {\mathcal{R}})$ is uniformly bounded in $ L^{\infty}([0,\infty); L^{\frac{15}{7}}(\R^3))$. Thus, from \eqref{vlasov_conserve}, we have that there exists a $C$ which is independent of $N$ such that
\[
\norm{\p_t (\nabla V_N * \varrho_{N,t})}_{L^{\infty} ([0,\infty); W^{-1,\frac{15}{7}}(\R^3) )}\leq C.
	\]
This completes the proof for Lemma \ref{lem_compact_argument}.
\end{proof}

Finally, we conclude the proof of main theorem with the following compactness argument.

\subsubsection{Compactness argument}

As in Section \ref{sec_proof_strat}, the weak convergence of the linear terms in the Vlasov equation is obtained from Proposition \ref{lem_dunford}. The following discussion is focused on the nonlinear term. Without loss of generality, assume that $\Phi(q,p) =\varphi(q) \phi(p)$ for any test functions $\varphi$, $\phi \in C^\infty_0 (\R^3)$, and let the sphere $B_\ell$ with radius $\ell>0$ be the support of $\varphi$. Due to the Sobolev's embedding theorem, we have
\[
	W^{1,\frac{5}{3}} (B_\ell)\doublehookrightarrow L^{r} (B_\ell) \hookrightarrow  W^{-1,\frac{15}{7}}(B_\ell),
	\]
where $\frac{5}{4} \leq r < \frac{15}{4}$ and $\doublehookrightarrow$ means the compact embedding. Recall the results in Lemma \ref{lem_compact_argument}, we have
$$
\norm{(\nabla V_N * \varrho_{N,t})}_{L^{\infty} ([0,\infty); W^{1,  \frac{5}{3}}(\R^3))} + 
\norm{\p_t (\nabla V_N * \varrho_{N,t})}_{L^{\infty} ([0,\infty); W^{-1,\frac{15}{7}}(\R^3))}\leq C.
$$
Then, by Aubin-Lion lemma, we obtain that there exists a subsequence denoted also by $\big(\nabla V_N * \varrho_{N,t}\big)_{N \in \N}$, and $h \in L^{\infty}([0,T]; L^r (B_\ell))$ such that, as $N \to \infty$, we have
\begin{equation}\label{compact_converg_0}
	 \nabla V_N * \varrho_{N,t} \rightarrow h \quad\text{in} \quad L^{\infty}([0,T]; L^r ( \R^3)),
	\end{equation}
where $\frac{5}{4} \leq r < \frac{15}{4}$. The weak star convergence of $\varrho_{N,t} \rightharpoonup^*\varrho_{t}$ in $L^\infty ((0,T); L^\frac{5}{3}(\R^3))$, where $\varrho_{t}(q) := \int m_{t}(q,p) \dd{p}$, and the definition of $V_N$ in \eqref{def_V_N} imply that the limit function $h$ coincides with $\nabla V*\varrho_t$ a.e. in $B_\ell$.

Now, to show the convergence to the Vlasov-Poisson equation, we first compute
\begin{align*}
		& \left|\int_0^T \dd{t} \iint \mathrm{d}q \mathrm{d}p\   \varphi(q) \nabla_{p}\phi(p) \cdot \left[(\nabla V_N * \varrho_{N,t})(q)  m_{N,t}(q,p) - (\nabla V * \varrho_{t})  m_{t}(q,p) \right]  \right|\\
		&
		= \bigg|\int_0^T \dd{t} \iint \mathrm{d}q \mathrm{d}p\  \varphi(q) \nabla_{p}\phi(p) \cdot \left[(\nabla V_N * \varrho_{N,t})(q) - (\nabla V * \varrho_{t}) (q) \right] m_{N,t}(q,p)  \\
		&
		\qquad +  \int_0^T \dd{t} \iint \mathrm{d}q \mathrm{d}p\  \varphi(q) \nabla_{p}\phi(p) \cdot  (\nabla V * \varrho_{t})(q)  \left[m_{N,t}(q,p) - m_{t}(q,p) \right] \bigg|\\
		&
		\leq \bigg|\int_0^T \dd{t} \iint \mathrm{d}q \mathrm{d}p\  \varphi(q) \nabla_{p}\phi(p)\cdot \left[(\nabla V_N * \varrho_{N,t}) (q)- (\nabla V * \varrho_{t}) (q) \right] m_{N,t}(q,p) \bigg| \\
		&
		\qquad +  \bigg|\int_0^T \dd{t} \iint \mathrm{d}q \mathrm{d}p\   \varphi(q) \nabla_{p}\phi(p) \cdot  (\nabla V * \varrho_{t})(q)  \left[m_{N,t}(q,p) - m_{t}(q,p) \right] \bigg|
		= : \mathcal{A}_1 + \mathcal{A}_2.
	\end{align*}

Let us focus on the first term.
\begin{align*}
	 \mathcal{A}_1 & = \bigg|\int_0^T \dd{t} \int_{B_\ell} \mathrm{d}q \  \varphi(q)  \left[(\nabla V_N * \varrho_{N,t}) (q)- (\nabla V * \varrho_{t}) (q) \right] \cdot \int \mathrm{d}p\ \nabla_{p} \phi(p)  m_{N,t}(q,p) \bigg|\\
	  & 
	 \leq T \sup_{t \in [0,T]} \norm{(\nabla V_N * \varrho_{N,t}) - (\nabla V * \varrho_{t})}_{L^r (B_\ell)} \norm{\varphi \int \mathrm{d}p\ \nabla_{p} \phi(p)  m_{N,t}(\cdot,p) }_{L^{r'} (B_\ell)}\\
	 &
	 \leq T \sup_{t \in [0,T]} \norm{(\nabla V_N * \varrho_{N,t}) - (\nabla V * \varrho_{t})}_{L^r(B_\ell)}   \norm{ \varphi }_{L^{r'}(B_\ell)} \norm{\nabla_p \phi}_{L^1(\R^3)}\\
	 &
	 \leq C_T  \norm{(\nabla V_N * \varrho_{N,t}) - (\nabla V * \varrho_{t})}_{L^\infty([0,T]; L^r(B_\ell))} ,
	\end{align*}
where we use the fact that $0 \leq m_{N,t} \leq 1$ almost everywhere. Taking the limit $N \to \infty$ on both sides, then we have
\[
	\lim_{N \to \infty} \mathcal{A}_1 = 0.
	\]

We focus now on $\mathcal{A}_2$. We observe that since $\norm{m_{N,t}}_{L^\infty}$ is uniformly bounded, it is implied that there is a subsequence still denoted by $(m_{N,t})_{N\in \N}$ such that $m_{N,t} \rightharpoonup^* m_{t}$ in $L^{\infty}((0,T);L^\infty(\R^{3} \times \R^3 ))$ as $N \to \infty$. Since $\varphi(q)\nabla_{p} \phi(p) \cdot  (\nabla V * \varrho_{t})(q) \in L^{1}((0,T);L^1(\R^{3} \times \R^3))$, we have $\lim_{N \to \infty} \mathcal{A}_2 = 0$. This completes the proof of Theorem \ref{theorem_main}. 

\qed

\section{Estimates of residuals}\label{sec_est_residual}
To estimate the residuals outlined in \eqref{bbgky_remainder_1}, we first present the following important facts, which are used frequently in the proof: the $\hbar$-weighted Fourier transformation is given as
\begin{equation}\label{hbar_fourier}
\int \mathrm{d}y\ G(y) F(y) =  \int \mathrm{d}y\ G(y) \frac{1}{(2\pi \hbar)^3}\iint \mathrm{d}p_2\mathrm{d}v\ \hat{F}(v) e^{\frac{\mathrm{i}}{\hbar}p_2\cdot(y-v)},
\end{equation}
for any given function $ F, G \in L^2(\R^3)$.

The results in \cite{Chen2021} for the localized number operator and oscillation estimates are

\begin{Lemma}[Lemma 2.4 of \cite{Chen2021}]\label{N_hbar} For $t \geq 0$, let $\Psi_{N,t} \in \mathcal{F}_a$ with $\norm{\Psi_{N,t}} = 1$ and $R_1>0$ be the radius of a ball such that the volume is $1$. Then, for all $1\leq k \leq N$, we have
\begin{equation}\label{N_hbar_eq}
	\begin{split}
	\dotsint (\mathrm{d}q\mathrm{d}x)^{\otimes k} \left(\prod_{n=1}^k \rchi_{|x_n-q_n|\leq \sqrt{\hbar}R_1}\right) \gamma^{(k)}_{N,t} (x_1,\dots, x_k; x_1,\dots,x_k)
	\leq \hbar^{-\frac{3}{2}k},
	\end{split}
	\end{equation}
where $\rchi$ is a characteristic function.
\end{Lemma}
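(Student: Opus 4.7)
The plan is elementary: use Fubini to integrate over $q_1, \dots, q_k$ first, exploit the volume normalization of $R_1$, then identify the remaining $x$-integration as the trace of the $k$-particle reduced density matrix, and finally apply the semiclassical scaling $\hbar = N^{-1/3}$.

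For each fixed $x_n$, the $q_n$-integral equals the volume of the ball of radius $\sqrt{\hbar}R_1$, which by the assumption on $R_1$ is exactly $\hbar^{3/2}$. Iterating over $n = 1,\dots,k$, the left-hand side of \eqref{N_hbar_eq} reduces to
\[
\hbar^{3k/2} \int (\mathrm{d}x)^{\otimes k}\, \gamma^{(k)}_{N,t}(x_1,\dots,x_k; x_1,\dots,x_k) \;=\; \hbar^{3k/2}\,\mathrm{Tr}\,\gamma^{(k)}_{N,t} \;=\; \hbar^{3k/2}\,\frac{N!}{(N-k)!}.
\]
Using $\frac{N!}{(N-k)!} \leq N^k$ together with $N = \hbar^{-3}$, the right-hand side is bounded by $\hbar^{3k/2}\cdot \hbar^{-3k} = \hbar^{-3k/2}$, which yields the claim.

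The argument requires no input from the Pauli exclusion principle or from any fine structural property of $\Psi_{N,t}$ beyond normalization: the bound is purely volumetric, exploiting that each localization to a box of side $\sqrt{\hbar}$ costs a factor $\hbar^{3/2}$, and that the total mass of $\gamma^{(k)}_{N,t}$ grows at most like $N^k$, which by the semiclassical scaling is precisely $\hbar^{-3k}$. The only point to verify carefully is that the diagonal of the operator kernel of $\gamma^{(k)}_{N,t}$ in the Fock-space formulation integrates to $\mathrm{Tr}\,\gamma^{(k)}_{N,t} = \frac{N!}{(N-k)!}$, which is the normalization already recalled earlier in the excerpt. I do not anticipate any genuine obstacle; since the lemma is cited from an earlier work of the authors, the proof is likely omitted in the present paper.
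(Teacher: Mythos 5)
Your proof is correct. Integrating out each $q_n$ against the characteristic function gives exactly the volume of a ball of radius $\sqrt{\hbar}R_1$, which by the normalization of $R_1$ equals $\hbar^{3/2}$; the remaining $x$-integral is $\mathrm{Tr}\,\gamma^{(k)}_{N,t}=\frac{N!}{(N-k)!}\leq N^k = \hbar^{-3k}$, and multiplying gives $\hbar^{-3k/2}$. This is the same elementary, purely volumetric argument used in the cited work, and your observation that the bound uses nothing beyond $\|\Psi_{N,t}\|=1$, the $N$-particle normalization of $\gamma^{(k)}_{N,t}$, and the scaling $\hbar = N^{-1/3}$ (no Pauli principle) is accurate.
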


\begin{Lemma}[Lemma 2.5 of \cite{Chen2021}]\label{estimate_oscillation}
For $g \in C^\infty_0 (\R^3)$ and
\begin{equation}\label{estimate_oscillation_omega}
	\Omega_{\hbar} := \{x \in \R^3;\ \max_{1\leq j \leq 3} |x_j|\leq \hbar^\alpha \},
	\end{equation}
it holds that for every $\alpha \in (0,1)$, $s \in \N$, and $x \in \R^3\backslash \Omega_{\hbar}$,
\begin{equation}\label{estimate_oscillation_0}
	\left|\int_{\R^3} \mathrm{d}p\  e^{\frac{\rm i}{\hbar}p\cdot x} g(p)\right| \leq c_1 \hbar^{(1-\alpha)s},
	\end{equation}
where the constant $c_1$ depends on the compact support and the $W^{s,\infty}$-norm of the test function $g$.
\end{Lemma}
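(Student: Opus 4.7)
The plan is a standard non-stationary phase argument via integration by parts. The hypothesis $x \in \R^3 \setminus \Omega_\hbar$ means that $\max_k |x_k| > \hbar^\alpha$, so along that direction the oscillatory factor $e^{\frac{\mathrm i}{\hbar}p\cdot x}$ varies on the scale $\hbar^{1-\alpha}$, and each integration by parts in the corresponding $p$-variable gains a factor of $\hbar/|x_j|$.

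Concretely, first I would pick $j \in \{1,2,3\}$ such that $|x_j| = \max_k |x_k| > \hbar^\alpha$. Using the identity
\begin{equation*}
e^{\frac{\mathrm i}{\hbar}p\cdot x} = \frac{\hbar}{\mathrm i\, x_j}\,\partial_{p_j} e^{\frac{\mathrm i}{\hbar}p\cdot x},
\end{equation*}
I would integrate by parts $s$ times in the $p_j$-direction. Since $g \in C^\infty_0(\R^3)$, every boundary term vanishes, yielding
\begin{equation*}
\int_{\R^3} \mathrm dp\ e^{\frac{\mathrm i}{\hbar}p\cdot x} g(p) = \left(\frac{-\hbar}{\mathrm i\, x_j}\right)^{\! s} \int_{\R^3} \mathrm dp\ e^{\frac{\mathrm i}{\hbar}p\cdot x}\, \partial_{p_j}^{\,s} g(p).
\end{equation*}

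Taking absolute values, bounding the exponential by $1$, and using $|x_j|^{-1} < \hbar^{-\alpha}$ gives
\begin{equation*}
\left|\int_{\R^3} \mathrm dp\ e^{\frac{\mathrm i}{\hbar}p\cdot x} g(p)\right| \leq \left(\frac{\hbar}{|x_j|}\right)^{\! s} \|\partial_{p_j}^{\,s} g\|_{L^1(\R^3)} \leq \hbar^{(1-\alpha)s}\,c_1,
\end{equation*}
where $c_1 := \max_j \|\partial_{p_j}^{\,s} g\|_{L^1(\R^3)} \leq |\supp g|\cdot \|g\|_{W^{s,\infty}}$ is finite thanks to the compact support of $g$.

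There is no real obstacle: the argument is classical and the only mild subtlety is choosing the direction $j$ along which the phase genuinely oscillates, which is precisely what the $\max$-definition of $\Omega_\hbar$ guarantees. Compactness of $\supp g$ plays a dual role, killing the boundary terms in the integration by parts and rendering $\|\partial_{p_j}^{\,s} g\|_{L^1}$ finite with an explicit dependence on $\supp g$ and $\|g\|_{W^{s,\infty}}$, exactly as stated.
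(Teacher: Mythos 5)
Your proof is correct and is precisely the standard non-stationary phase argument (repeated integration by parts along the coordinate $j$ where $|x_j| > \hbar^\alpha$, with compact support killing the boundary terms and furnishing the $L^1$ bound via $\|\partial_{p_j}^s g\|_{L^1} \leq |\supp g|\,\|g\|_{W^{s,\infty}}$). The paper merely cites this lemma from \cite{Chen2021} without reproducing the proof, and the argument there is this same canonical one, so there is nothing further to compare.
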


The estimate for the residual term $\tilde{\mathcal{R}}$ given in \eqref{bbgky_remainder_1} is obtained exactly as shown in \cite{Chen2021}, i.e.,
\begin{Proposition}[Proposition 2.4 of \cite{Chen2021}]\label{est_tilde_R}
Suppose that $f\in H^1(\R^3)$, $\|f\|_{L^2}=1$ and has compact support; then, we have the following bound for $\tilde{\mathcal{R}}$ in \eqref{BBGKY_k1}; i.e., for an arbitrarily small $\delta >0$, there exists $s(\delta)>0$ such that the following estimate holds for any test function $\varphi, \phi\in C^\infty_0(\R^{3})$
\[
	\left|\iint \mathrm{d}q \mathrm{d}p\ \varphi(q)\phi(p)\nabla_{q}\cdot \tilde{\mathcal{R}}(q,p) \right| \leq c_2 \hbar^{\frac{1}{2}-\delta},
	\]
where the constant $c_2$ depends on $\|\nabla\varphi\|_{L^\infty}$ and $\|\phi\|_{W^{s,\infty}}$.
\end{Proposition}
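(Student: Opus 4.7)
The plan is to integrate by parts in $q$ to transfer the divergence onto the test function $\varphi$, reducing the task to estimating
\[
\hbar\left|\iint \mathrm{d}q\,\mathrm{d}p\ \nabla\varphi(q)\,\phi(p)\cdot\Im\langle\nabla_{q}a(f^{\hbar}_{q,p})\Psi_{N,t},\,a(f^{\hbar}_{q,p})\Psi_{N,t}\rangle\right|.
\]
Since $\nabla_{q}$ commutes componentwise with the annihilation operator $a(\cdot)$, namely $\nabla_{q}a(f^{\hbar}_{q,p})\Psi_{N,t}=a(\nabla_{q}f^{\hbar}_{q,p})\Psi_{N,t}$, Cauchy--Schwarz in the Fock inner product yields the pointwise bound
\[
|\langle\nabla_{q}a(f^{\hbar}_{q,p})\Psi_{N,t},\,a(f^{\hbar}_{q,p})\Psi_{N,t}\rangle|\leq \|a(\nabla_{q}f^{\hbar}_{q,p})\Psi_{N,t}\|\,\|a(f^{\hbar}_{q,p})\Psi_{N,t}\|.
\]

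Next, I would apply the fermionic (Pauli-type) bound \eqref{ineq_anil_create}, $\|a(g)\Psi_{N,t}\|\leq\|g\|_{L^{2}}$, to each component. Combined with the elementary scaling identities $\|f^{\hbar}_{q,p}\|_{L^{2}}=\|f\|_{L^{2}}=1$ and $\|\nabla_{q}f^{\hbar}_{q,p}\|_{L^{2}}=\hbar^{-1/2}\|\nabla f\|_{L^{2}}$, which follow directly from the explicit form $f^{\hbar}_{q,p}(u)=\hbar^{-3/4}f((u-q)/\sqrt{\hbar})e^{ip\cdot u/\hbar}$, this collapses to the pointwise estimate $|\tilde{\mathcal R}(q,p)|\leq \hbar^{1/2}\|\nabla f\|_{L^{2}}$. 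Integrating against the compactly supported $\nabla\varphi\,\phi$ then produces the full bound with constant proportional to $\|\nabla\varphi\|_{L^{1}}\|\phi\|_{L^{1}}\|\nabla f\|_{L^{2}}$, in fact yielding the slightly stronger exponent $\hbar^{1/2}$, from which the claimed $\hbar^{1/2-\delta}$ follows immediately.

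There is no serious obstacle in this direct route: the $\hbar^{1/2}$ emerges as the clean balance between the $\hbar$ prefactor in $\tilde{\mathcal{R}}$ and the $\hbar^{-1/2}$ produced by $\nabla_{q}$ acting on the coherent state. The $\hbar^{-\delta}$ loss and the $\|\phi\|_{W^{s,\infty}}$-dependence of $c_{2}$ in the statement rather reflect the alternative route taken in \cite{Chen2021}, where the $p$-integration of $\phi(p)e^{ip\cdot(u-v)/\hbar}$ is handled by a near/far dichotomy in $(u-v)$ using Lemma \ref{estimate_oscillation}, together with the cutoff number bound of Lemma \ref{N_hbar}. The main technical difficulty in that longer approach is balancing the near-region volume $\hbar^{3\alpha}$ against $N=\hbar^{-3}$, which forces $\alpha$ close to $1$ and consequently makes the regularity index $s=s(\delta)$ grow like $1/\delta$.
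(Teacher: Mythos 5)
Your argument is correct, and it is a genuinely different (and in fact simpler and sharper) route than the one the paper invokes via \cite{Chen2021}. The key steps all check out: $\nabla_q a(f^\hbar_{q,p}) = a(\nabla_q f^\hbar_{q,p})$ since $a(\cdot)$ is antilinear and conjugation commutes with differentiation; the fermionic operator-norm bound $\|a(g)\Psi\|\leq\|g\|_{L^2}$ from \eqref{ineq_anil_create} holds because $a^*(g)a(g)\leq\|g\|^2_{L^2}\mathds{1}$ by CAR; and the scaling $\|\nabla_q f^\hbar_{q,p}\|_{L^2}=\hbar^{-1/2}\|\nabla f\|_{L^2}$, $\|f^\hbar_{q,p}\|_{L^2}=1$ is elementary. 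This gives the uniform pointwise bound $|\tilde{\mathcal{R}}(q,p)|\leq\hbar^{1/2}\|\nabla f\|_{L^2}$, and after one integration by parts in $q$ the desired estimate follows with constant $\|\nabla f\|_{L^2}\|\nabla\varphi\|_{L^1}\|\phi\|_{L^1}$ and no $\hbar^{-\delta}$ loss. The cited proof in \cite{Chen2021} instead expands the inner product as a kernel integral carrying the oscillatory factor $e^{\mathrm{i}p\cdot(w-u)/\hbar}$, splits $(w-u)$ into near and far regions using Lemma \ref{estimate_oscillation}, and controls the near region with the cutoff number-operator estimate of Lemma \ref{N_hbar} — exactly the machinery deployed in Propositions \ref{est_semi_R} and \ref{est_meanfield_R}. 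That route is what introduces the $\hbar^{-\delta}$ loss and the $\|\phi\|_{W^{s,\infty}}$ dependence with $s=s(\delta)\to\infty$ as $\delta\to0$. Your Cauchy--Schwarz shortcut works here precisely because $\tilde{\mathcal{R}}$ is already a single inner product of two states whose norms are individually controllable, whereas for $\mathcal{R}_1$ and $\mathcal{R}_2$ the potential $\nabla V_N$ sits between the kernels and a pointwise bound is not strong enough, forcing the oscillation argument. It is also worth noting that the paper itself quietly performs the same Cauchy--Schwarz computation on $\tilde{\mathcal{R}}$ inside the proof of Lemma \ref{lem_compact_argument}, so your approach is entirely consistent with the paper's toolkit — it simply wasn't applied to recover Proposition \ref{est_tilde_R} directly.
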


For the residual term  $\mathcal{R}$, we insert the terms 
\[
	\pm \nabla V_N(q-q_2)  \gamma_{N,t}^{(2)}(u_1,u_2;w_1,w_2) ,
	\]
and write into a sum $\mathcal{R} = \mathcal{R}_1 + \mathcal{R}_2$, where
\begin{equation}\label{remainder_total}
	\begin{aligned}
	\mathcal{R}_1 := & (2\pi)^3  \iint \mathrm{d}w_1 \mathrm{d}u_1 \iint \mathrm{d}w_2\mathrm{d}u_2 \iint \mathrm{d}q_2 \mathrm{d}p_2    \left(  f^\hbar_{q,p}(w)  \overline{f^\hbar_{q,p}(u)} \right)^{\otimes 2} \\
	&
	\left[\int_0^1 \mathrm{d}s\ \nabla V_N\big( su_1+ (1-s)w_1-w_2 \big) - \nabla V_N(q-q_2) \right]  \gamma_{N,t}^{(2)}(u_1,u_2;w_1,w_2),
	\\
	\mathcal{R}_2 := 
	& (2\pi)^3  \iint \mathrm{d}w_1 \mathrm{d}u_1 \iint \mathrm{d}w_2\mathrm{d}u_2 \iint \mathrm{d}q_2 \mathrm{d}p_2    \left(  f^\hbar_{q,p}(w)  \overline{f^\hbar_{q,p}(u)} \right)^{\otimes 2} \nabla V_N(q-q_2) \\
	&\qquad  \left[\gamma_{N,t}^{(2)}(u_1,u_2;w_1,w_2)  - \gamma_{N,t}^{(1)}(u_1;w_1)\gamma_{N,t}^{(1)}(u_2;w_2)\right].
	\end{aligned}
	\end{equation}
Note that in \eqref{remainder_total}, $\mathcal{R}_1$ represents the semiclassical limit part, and $\mathcal{R}_2$ represents the mean-field limit.

As a preparation for the estimates of the residual term $\mathcal{R}$, we present the following estimate for regularized Coulomb potential:
\begin{Lemma}\label{lem_reg_V_N}
Let $V_N$ be the regularized Coulomb potential given in \eqref{def_V_N}, then it holds that
\begin{equation}\label{coulomb_00}
		\|\nabla V_N\|_{L^\infty}\leq C \beta_N^{-2}.
	\end{equation}
\end{Lemma}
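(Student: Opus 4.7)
The plan is to exploit the scale invariance of the Coulomb kernel. Writing $\mathcal{G}_{\beta_N}(y) = \beta_N^{-3}\mathcal{G}_1(y/\beta_N)$ with $\mathcal{G}_1(z) = (2\pi)^{-3/2} e^{-|z|^2}$ and performing the change of variable $y = \beta_N z$ in the convolution $V_N = V * \mathcal{G}_{\beta_N}$ produces the identity
\[
V_N(x) \;=\; \int_{\R^3} \frac{1}{|x-\beta_N z|}\,\mathcal{G}_1(z)\,dz \;=\; \beta_N^{-1}\, V_1\!\left(\frac{x}{\beta_N}\right),
\]
where $V_1 := V * \mathcal{G}_1$ is the corresponding potential at unit scale. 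Differentiating in $x$ and taking the sup-norm gives
\[
\|\nabla V_N\|_{L^\infty} \;=\; \beta_N^{-2}\,\|\nabla V_1\|_{L^\infty},
\]
so the claim is reduced to showing that the single, $N$-independent constant $\|\nabla V_1\|_{L^\infty}$ is finite.

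For this, I would move the derivative onto the Schwartz factor, writing
\[
\nabla V_1(x) \;=\; \int_{\R^3} \frac{1}{|y|}\, \nabla\mathcal{G}_1(x-y)\,dy,
\]
and split the integration region into $\{|y|\leq 1\}$ and $\{|y|>1\}$. On the first region, $|y|^{-1}$ is integrable near the origin in $\R^3$ and $\nabla\mathcal{G}_1$ is uniformly bounded, so the contribution is bounded by a constant independent of $x$. On the second region, $|y|^{-1}\leq 1$ while $|\nabla\mathcal{G}_1(x-y)|$ decays like a Schwartz function in $|x-y|$, so after translating in $y$ the contribution is again bounded uniformly in $x$. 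Combining the two pieces yields $\|\nabla V_1\|_{L^\infty}\leq C$, and inserting this in the scaling identity gives the desired estimate.

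Since the lemma is essentially a bookkeeping statement about how a fixed smooth bound rescales, I do not foresee any genuine obstacle; the only care needed is the verification that the convolution of the locally integrable singular kernel $|y|^{-1}$ with a smooth, rapidly decaying function is bounded uniformly in the base point, which is standard.
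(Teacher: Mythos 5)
Your argument is correct, but it takes a genuinely different route from the paper's proof. The paper works on the Fourier side: it invokes \cite[Theorem~5.9]{Lieb2001} to get the explicit formula $\widehat{V}_N(p)= C|p|^{-2}e^{-(p\beta_N/2)^2}$, differentiates under the inverse Fourier transform, and bounds
\[
|\nabla V_N(x)|\;\leq\; C\int \mathrm{d}p\;|p|\,|\widehat{V}_N(p)|
\;=\; C\int \mathrm{d}p\;\frac{e^{-(p\beta_N/2)^2}}{|p|}\;\leq\; C\beta_N^{-2},
\]
where the $\beta_N^{-2}$ drops out of the radial integral after rescaling $p\mapsto p/\beta_N$. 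You instead work in physical space: you observe the scaling identity $V_N(x)=\beta_N^{-1}V_1(x/\beta_N)$ (which is correct, since $|x-\beta_N z|=\beta_N\,|x/\beta_N - z|$), differentiate to get $\|\nabla V_N\|_{L^\infty}=\beta_N^{-2}\|\nabla V_1\|_{L^\infty}$, and then show $\nabla V_1\in L^\infty$ by moving the derivative onto the Gaussian and splitting the convolution at $|y|=1$, using local integrability of $|y|^{-1}$ in $\R^3$ near the origin and the integrability of $\nabla\mathcal{G}_1$ at infinity. Both proofs are elementary and both ultimately reduce the claim to a single $N$-independent finiteness statement; your version makes the dimensional-analysis origin of the $\beta_N^{-2}$ transparent and avoids needing the exact Fourier transform of the Coulomb kernel, while the paper's version is shorter once the Lieb--Loss identity is taken as given and computes the constant in one line via a radial integral.
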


\begin{proof}
Let $\widehat{V}_N$ be the Fourier transform of $V_N$. Recall that for $V_N =|\cdot|^{-1}*\mathcal{G}_{\beta_N}$, \cite[Theorem 5.9]{Lieb2001} yields
\[
	\widehat{V}_N(p) =  \frac{C}{|p|^2} e^{-\left(p\frac{\beta_N}{2}\right)^2 },
	\]
for some positive constant $C$. Then, by inverse Fourier transform, we get
\begin{align*}
	|\nabla V_N (x) | & = \frac{1}{(2\pi )^{3/2}} \left|\nabla_x  \int \dd{p} e^{ip\cdot x}   \widehat{V}_N(p)\right| \\
	&
	\leq C \int \dd{p} |p| |\widehat{V}_N(p)| \\
	&
	= C \int \dd{p}  \frac{1}{|p|} e^{-\left(p\frac{  \beta_N}{2}\right)^2}\\
	&
	\leq C \beta_N^{-2},
\end{align*}
where we use the spherical coordination in the last inequality.
\end{proof}

In the following, we treat the semiclassical and mean-field residual terms, i.e. $\mathcal{R}_1$ and $\mathcal{R}_2$, by using the truncated radius $\beta_N$, the oscillation estimate, the cutoff number operator and the kinetic operator estimates outlined in Section \ref{sec_proof_main}. 
\subsection{Estimate for the semiclassical residual term $\mathcal{R}_1$}
In this subsection, we present in full detail the estimate for the semiclassical residue.
\begin{Proposition}\label{est_semi_R}
Let $\varphi, \phi \in C_0^\infty(\R^3)$. Then, for  $\frac{5}{6} < \alpha_1 < 1$, $0 < \delta < \frac{1}{8}(6\alpha_1 - 5)$,  and $s = \left\lceil \frac{3(2\alpha_1 +1)}{4(1-\alpha_1)}\right\rceil$, we have
\begin{equation}
	\begin{aligned}
	\bigg|\iint \mathrm{d}q \mathrm{d}p\ \varphi(q)\phi(p)  \nabla_{p} \cdot \mathcal{R}_1(q,p) \bigg| & \leq  \tilde{C} \hbar^{\frac{1}{4}(6\alpha_1 - 5)-2\delta} ,
	\end{aligned}
	\end{equation}
where the constant  $\tilde{C}$ depends on $\norm{\varphi}_{W^{1,\infty}}$, $\norm{\nabla \phi}_{L^1 \cap W^{s,\infty}}$, $\supp \phi$, $\norm{f}_{L^\infty \cap H^1}$,  and $\supp f$.
\end{Proposition}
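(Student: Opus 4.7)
The plan is to treat $\mathcal{R}_1$ as a semiclassical residue: on the support of $f^\hbar_{q,p}$ the position variables $u_1,w_1$ lie within $\sqrt{\hbar}$ of $q$ and $u_2,w_2$ within $\sqrt{\hbar}$ of $q_2$, so the bracket inside $\mathcal{R}_1$ should be small after Taylor expansion; the obstruction is that $\nabla V_N$ is only Lipschitz up to a rate $\|\nabla^2 V_N\|_{L^\infty}\lesssim \beta_N^{-3}$ that blows up with $N$, so an intermediate cutoff at scale $\hbar^{\alpha_1}$ with $\alpha_1>\tfrac{5}{6}$ is needed to balance the two scales.

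First, I would integrate by parts in $p$ so that $\nabla_p$ hits $\phi(p)$ rather than $\mathcal{R}_1$, reducing the task to estimating $\iint \dd{q}\dd{p}\,\varphi(q)\nabla_p\phi(p)\cdot\mathcal{R}_1(q,p)$. Next, I would apply the first-order Taylor identity
\begin{align*}
&\int_0^1\!\dd{s}\,\nabla V_N\bigl(su_1+(1-s)w_1-w_2\bigr)-\nabla V_N(q-q_2)\\
&\qquad =\int_0^1\!\dd{s}\!\int_0^1\!\dd{r}\,\bigl[s(u_1-q)+(1-s)(w_1-q)-(w_2-q_2)\bigr]\cdot\nabla^2 V_N(Z_{s,r}),
\end{align*}
which splits $\mathcal{R}_1$ into three sub-terms carrying, respectively, the small displacements $u_1-q$, $w_1-q$, and $w_2-q_2$. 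On each sub-term I would insert a smooth cutoff $1=\rchi_{|\cdot|\leq\hbar^{\alpha_1}}+\rchi_{|\cdot|>\hbar^{\alpha_1}}$ on the position difference conjugate to each momentum variable (namely $w_1-u_1$ for the $p$-phase and $w_2-u_2$ for the $p_2$-phase). On the ``near'' piece I would use the pointwise bound $\|\nabla^2 V_N\|_{L^\infty}\lesssim\beta_N^{-3}$ (by a Fourier computation analogous to Lemma \ref{lem_reg_V_N}) together with the extra smallness $\|(u-q)f^\hbar_{q,p}\|_{L^2}\lesssim\sqrt{\hbar}$, and would control the resulting operator expression via the CAR bound $\|a(g)\Psi_{N,t}\|\leq\|g\|_{L^2}$, the localized number-operator estimate of Lemma \ref{N_hbar}, and the kinetic-energy bound of Lemma \ref{kinetic_finite}. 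On the ``far'' piece, the $p$- or $p_2$-integration is non-stationary at scale at least $\hbar^{\alpha_1}$, so Lemma \ref{estimate_oscillation} with $s=\left\lceil \tfrac{3(2\alpha_1+1)}{4(1-\alpha_1)}\right\rceil$ derivatives yields polynomial decay of order $\hbar^{(1-\alpha_1)s}$, which by the choice of $s$ is subdominant to all the other gains.

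The main obstacle I anticipate is the bookkeeping of three competing $\hbar$-powers: the Lipschitz blow-up $\beta_N^{-3}=\hbar^{-9\epsilon}$ coming from the mollified Coulomb potential, the semiclassical smallness from the coherent state support together with the $\hbar^{\alpha_1}$ cutoff, and the Cauchy--Schwarz losses incurred in passing from quantum expectations to phase-space integrals. The threshold $\alpha_1>\tfrac{5}{6}$ is exactly what is required so that the ``near'' contribution, of order $\hbar^{\alpha_1}\cdot\beta_N^{-3}\cdot(\text{Cauchy--Schwarz factor})$, still leaves a strictly positive power of $\hbar$; together with the freedom $0<\delta<\tfrac{1}{8}(6\alpha_1-5)$ available in the oscillation step, this produces the stated rate $\hbar^{(6\alpha_1-5)/4-2\delta}$.
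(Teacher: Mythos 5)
Your proposal is a genuinely different route from the paper's, and as far as I can tell it would work — in fact it would give a slightly \emph{stronger} rate than the one claimed.

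The paper does \emph{not} Taylor-expand $\nabla V_N$. Instead it inserts $\pm\nabla V_N(q-w_2)$, rewrites each gradient of $V_N$ as $\nabla_{w_2}V_N$ or $\nabla_q V_N$, and integrates by parts so that this derivative lands on $\gamma_{N,t}^{(2)}$, on $|f((w_2-q_2)/\sqrt\hbar)|^2$, or on $\varphi(q)f f$ (giving the three terms $I_1,J_1,K_1$). After the integration by parts, what remains is a difference of $V_N$'s (not $\nabla V_N$'s), controlled by $\|\nabla V_N\|_{L^\infty}\lesssim\beta_N^{-2}$. The dominant term $I_1$ carries $\nabla_{w_2}\gamma^{(2)}$, and because $\int\dd{w_2}\|\nabla_{w_2}a_{w_2}a_{w_1}\Psi\|^2 = 2\hbar^{-2}\langle a_{w_1}\Psi,\mathcal K\,a_{w_1}\Psi\rangle$, this step costs an extra factor $\hbar^{-1}$ relative to the bare number-operator bound. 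Your plan skips the integration by parts entirely and controls $\nabla V_N(\cdot)-\nabla V_N(\cdot)$ directly by $\|\nabla^2 V_N\|_{L^\infty}|A-B|$, paying $\beta_N^{-3}$ instead of $\beta_N^{-2}$ but never picking up the $\hbar^{-1}$ from $\nabla_{w_2}\gamma$; you need only $\mathcal N$-type estimates, not the kinetic energy (mentioning Lemma~\ref{kinetic_finite} is harmless but unnecessary here). Since $\beta_N=\hbar^\delta$ with $\delta<1$, you trade a loss of $\hbar^{-1}$ for a loss of only $\hbar^{-\delta}$, which is strictly favorable.

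The one thing to flag is that your power-counting paragraph does not quite match your own construction. Tracking the powers carefully, your near piece gives $\hbar^{3/2}$ (change of variable $\tilde q_2=(w_2-q_2)/\sqrt\hbar$), $\sqrt\hbar$ (from each displacement), the same $\hbar^{2+\frac{3}{2}(\alpha_1+\frac12)}$ after Cauchy--Schwarz and the cutoff, and $\hbar^{-9/2}$ (rather than the paper's $\hbar^{-11/2}$) from two $\mathcal N$-type brackets and Lemma~\ref{N_hbar}; net $\hbar^{(6\alpha_1-1)/4}\|\nabla^2 V_N\|_{L^\infty}\lesssim\hbar^{(6\alpha_1-1)/4-3\delta}$, which is $\hbar^{1-\delta}$ better than the stated $\hbar^{(6\alpha_1-5)/4-2\delta}$. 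Consequently, the threshold $\alpha_1>\frac{5}{6}$ is \emph{not} what your approach requires (it would need only $\alpha_1>\frac16$), and the constraint on $\delta$ would be $\delta<(6\alpha_1-1)/12$ rather than $\delta<(6\alpha_1-5)/8$. Also, the cutoff you propose on $w_2-u_2$ is moot: the $p_2$-integral has no test function and produces the exact Dirac delta $(2\pi\hbar)^3\delta(w_2-u_2)$, so that variable is eliminated outright and only the $w_1-u_1$ cutoff is active. None of this invalidates the proof; it simply means your heuristic conclusion (``this produces the stated rate'') understates what the method actually delivers, and the precise exponents in the Proposition are tailored to the paper's integration-by-parts route, not yours.
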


\begin{proof}
Recall from \eqref{remainder_total} that we have
\begin{equation}
	\begin{split}
	\mathcal{R}_1 := & (2\pi)^3  \iint \mathrm{d}w_1 \mathrm{d}u_1 \iint \mathrm{d}w_2\mathrm{d}u_2 \iint \mathrm{d}q_2 \mathrm{d}p_2    \left(  f^\hbar_{q,p}(w)  \overline{f^\hbar_{q,p}(u)} \right)^{\otimes 2} \\
	&\qquad \left[\int_0^1 \mathrm{d}s\ \nabla V_N\big(su_1+ (1-s)w_1 - w_2 \big) - \nabla V_N(q-q_2) \right]  \gamma_{N,t}^{(2)}(u_1,u_2;w_1,w_2).
	\end{split}
	\end{equation}
For $\varphi, \phi \in C_0^\infty(\R^3)$, we have
\begin{align*}
	&\bigg|\iint \mathrm{d}q \mathrm{d}p\ \varphi(q) \phi(p)  \nabla_{p} \cdot \mathcal{R}_1(q,p) \bigg| \\
	&
	=   (2\pi)^3 \bigg| \dotsint (\mathrm{d}q \mathrm{d}p)^{\otimes 2}\ \varphi(q) \nabla_{p} \phi(p) \cdot \iint \mathrm{d}w_1 \mathrm{d}u_1 \iint \mathrm{d}w_2\mathrm{d}u_2  \left(  f^\hbar_{q,p}(w)  \overline{f^\hbar_{q,p}(u)} \right)^{\otimes 2} \\
	&
	\qquad \left[\int_0^1 \mathrm{d}s\ \nabla V_N\big(su_1+ (1-s)w_1 - w_2 \big) - \nabla V_N(q-q_2) \right]  \gamma_{N,t}^{(2)}(u_1,u_2;w_1,w_2)\bigg|\\
	&
	= (2\pi)^6  \bigg| \iint (\mathrm{d}q)^{\otimes 2}\mathrm{d}p  \iiint \mathrm{d}w_1 \mathrm{d}u_1  \mathrm{d}w_2\ \varphi(q)  \nabla  \phi(p) \cdot f\left(\frac{w_1 -q}{\sqrt{\hbar}}\right) f\left(\frac{u_1 -q}{\sqrt{\hbar}}\right)\\
	&
	\qquad e^{\frac{\mathrm{i}}{\hbar}p \cdot (w_1-u_1)} \left|f\left(\frac{w_2 -q_2}{\sqrt{\hbar}}\right) \right|^2  \left[\int_0^1 \mathrm{d}s\ \nabla V_N\big(su_1+ (1-s)w_1 - w_2 \big) - \nabla V_N(q-q_2) \right] \\
	&
	\qquad \gamma_{N,t}^{(2)}(u_1,w_2;w_1,w_2)\bigg|,
	\end{align*}
where we apply the fact that $ (2\pi \hbar)^3 \delta_x(y) = \int  e^{\frac{\mathrm{i}}{\hbar} p \cdot (x-y)} \dd{p}$. Then, inserting $\pm \nabla V_N(q-w_2)$, by the triangle inequality, we have
\begin{align*}
	&
	\leqslant  (2\pi)^6  \bigg| \iiint (\mathrm{d}q)^{\otimes 2}\mathrm{d}p \iiint \mathrm{d}w_1 \mathrm{d}u_1  \mathrm{d}w_2\ \varphi(q)  \nabla  \phi(p) \cdot f\left(\frac{w_1 -q}{\sqrt{\hbar}}\right) f\left(\frac{u_1 -q}{\sqrt{\hbar}}\right)\\
	&
	\qquad e^{\frac{\mathrm{i}}{\hbar}p \cdot (w_1-u_1)} \left|f\left(\frac{w_2 -q_2}{\sqrt{\hbar}}\right) \right|^2  \left[\nabla_{w_2} \int_0^1 \mathrm{d}s\ V_N\big(su_1+ (1-s)w_1 - w_2 \big)  - V_N(q - w_2) \right] \\
	&
	\qquad \gamma_{N,t}^{(2)}(u_1,w_2;w_1,w_2)\bigg|\\
	& 
	\quad + (2\pi)^6  \bigg| \iiint (\mathrm{d}q)^{\otimes 2}\mathrm{d}p  \iiint \mathrm{d}w_1 \mathrm{d}u_1  \mathrm{d}w_2\ \varphi(q) \nabla \phi(p) \cdot f\left(\frac{w_1 -q}{\sqrt{\hbar}}\right) f\left(\frac{u_1 -q}{\sqrt{\hbar}}\right)\\
	&
	\qquad e^{\frac{\mathrm{i}}{\hbar}p \cdot (w_1-u_1)} \left|f\left(\frac{w_2 -q_2}{\sqrt{\hbar}}\right) \right|^2  \nabla_{q}  \left[ V_N(q - w_2 )  - V_N(q - q_2) \right] \\
	&
	\qquad \gamma_{N,t}^{(2)}(u_1,w_2;w_1,w_2)\bigg|\\
	& 
	= (2\pi)^6  \bigg| \iiint (\mathrm{d}q)^{\otimes 2}\mathrm{d}p  \iiint \mathrm{d}w_1 \mathrm{d}u_1  \mathrm{d}w_2\ \varphi(q)\nabla \phi(p) \cdot f\left(\frac{w_1 -q}{\sqrt{\hbar}}\right) f\left(\frac{u_1 -q}{\sqrt{\hbar}}\right)\\
	&
	\qquad e^{\frac{\mathrm{i}}{\hbar}p \cdot (w_1-u_1)} \left|f\left(\frac{w_2 -q_2}{\sqrt{\hbar}}\right) \right|^2  \left[  \int_0^1 \mathrm{d}s\ V_N\big(su_1+ (1-s)w_1 - w_2 \big)  - V_N(q - w_2) \right] \\
	&
	\qquad \nabla_{w_2} \gamma_{N,t}^{(2)}(u_1,w_2;w_1,w_2)\bigg|\\
	& 
	\quad + (2\pi)^6  \bigg| \iiint (\mathrm{d}q)^{\otimes 2}\mathrm{d}p  \iiint \mathrm{d}w_1 \mathrm{d}u_1  \mathrm{d}w_2\ \varphi(q)\nabla \phi(p) \cdot f\left(\frac{w_1 -q}{\sqrt{\hbar}}\right) f\left(\frac{u_1 -q}{\sqrt{\hbar}}\right)\\
	&
	\qquad e^{\frac{\mathrm{i}}{\hbar}p \cdot (w_1-u_1)} \nabla_{w_2} \left|f\left(\frac{w_2 -q_2}{\sqrt{\hbar}}\right) \right|^2  \left[ \int_0^1 \mathrm{d}s\ V_N\big(su_1+ (1-s)w_1 - w_2 \big)  - V_N(q - w_2) \right] \\
	&
	\qquad  \gamma_{N,t}^{(2)}(u_1,w_2;w_1,w_2)\bigg|\\
	& 
	\quad + (2\pi)^6  \bigg| \iiint (\mathrm{d}q)^{\otimes 2}\mathrm{d}p \iiint \mathrm{d}w_1 \mathrm{d}u_1  \mathrm{d}w_2\ \nabla \phi(p) \cdot \nabla_{q} \left(\varphi(q) f\left(\frac{w_1 -q}{\sqrt{\hbar}}\right) f\left(\frac{u_1 -q}{\sqrt{\hbar}}\right)\right)\\
	&
	\qquad e^{\frac{\mathrm{i}}{\hbar}p \cdot (w_1-u_1)} \left|f\left(\frac{w_2 -q_2}{\sqrt{\hbar}}\right) \right|^2  \left[    V_N\big(q - w_2 \big)  - V_N(q - q_2) \right] \\
	&
	\qquad \gamma_{N,t}^{(2)}(u_1,w_2;w_1,w_2)\bigg|,\\
	& =: I_1 + J_1 + K_1
	\end{align*}
where we use integration by parts in the second to last equality.\\

Before advancing, we observe that by splitting the integral with respect to momentum space $\Omega_{\hbar}$ and $\Omega_{\hbar}^c$ as defined in \eqref{estimate_oscillation_omega}, for constant $C_1$ depending on $\norm{\nabla  \phi}_{L^1 \cap W^{s,\infty}}$ and $\supp \phi$, we have
\begin{equation}\label{momentum_split}
	\begin{aligned}
	\left|\int \dd{p} \nabla  \phi(p) e^{\frac{\mathrm{i}}{\hbar}p\cdot(w-u)}\right|
	&=
	\left|\int \dd{p}  (\rchi_{(w_1-u_1)\in \Omega_\hbar}+ \rchi_{(w_1-u_1)\in \Omega^c_\hbar}) \phi(p) e^{\frac{\mathrm{i}}{\hbar}p\cdot(w-u)}\right|\\
	&
	\leq  C_1 \left(\rchi_{(w_1-u_1)\in \Omega_\hbar} + \hbar^{(1-\alpha_1)s}\right),
	\end{aligned}
	\end{equation}
where we use \eqref{estimate_oscillation_0} in the last inequality.

Now, we want to separately estimate the terms $I_1$ and $J_1$. We begin by estimating $I_1$. Recall that
\begin{align*}
	I_1 &
	= (2\pi)^6 \hbar^{\frac{3}{2}} \bigg| \iint \mathrm{d}q\mathrm{d}p \iiint \mathrm{d}w_1 \mathrm{d}u_1  \mathrm{d}w_2\ \varphi(q)\nabla  \phi(p) \cdot f\left(\frac{w_1 -q}{\sqrt{\hbar}}\right) f\left(\frac{u_1 -q}{\sqrt{\hbar}}\right)\\
	&
	\qquad e^{\frac{\mathrm{i}}{\hbar}p \cdot (w_1-u_1)}\left( \int \mathrm{d}\tilde{q}_2\left|f\left(\tilde{q}_2\right) \right|^2 \right) \left[ \int  \mathrm{d}s\ V_N\big(su_1+ (1-s)w_1 - w_2 \big)  - V_N(q - w_2) \right] \\
	&
	\qquad \nabla_{w_2} \gamma_{N,t}^{(2)}(u_1,w_2;w_1,w_2)\bigg|.\\
\intertext{By using \eqref{momentum_split} we have,}
	I_1 &
	{\leq}  \norm{\nabla V_N}_{L^\infty}C_1  \hbar^{\frac{3}{2}} \int \mathrm{d}q\ |\varphi(q )|\iint \mathrm{d}w_1 \mathrm{d}u_1  \mathrm{d}w_2 \left( \rchi_{(w_1-u_1)\in \Omega_\hbar} + \hbar^{(1-\alpha_1)s} \right)   \bigg|f\left(\frac{w_1 -q}{\sqrt{\hbar}}\right) f\left(\frac{u_1 -q}{\sqrt{\hbar}}\right)\bigg|   \\
	&
	\qquad \left( |u_1-q|+|w_1-q| \right) \big| \nabla_{w_2} \gamma_{N,t}^{(2)}(u_1,w_2;w_1,w_2)\big|\\
	&
	\leq \norm{\nabla V_N}_{L^\infty} C_1  \hbar^{\frac{3}{2}} \int \mathrm{d}q\ |\varphi(q )|\iint \mathrm{d}w_1 \mathrm{d}u_1   \left( \rchi_{(w_1-u_1)\in \Omega_\hbar} + \hbar^{(1-\alpha_1)s} \right)   \bigg|f\left(\frac{w_1 -q}{\sqrt{\hbar}}\right) f\left(\frac{u_1 -q}{\sqrt{\hbar}}\right)\bigg|   \\
	&
	\qquad \left( |u_1-q|+|w_1-q| \right) \int \mathrm{d}w_2 \big| \nabla_{w_2} \gamma_{N,t}^{(2)}( w_1,w_2; u_1, w_2)\big|\\
	&
	= \norm{\nabla V_N}_{L^\infty} C_1 \hbar^{\frac{3}{2}} \int \mathrm{d}q\ |\varphi(q )|\iint \mathrm{d}w_1 \mathrm{d}u_1  \left( \rchi_{(w_1-u_1)\in \Omega_\hbar} + \hbar^{(1-\alpha_1)s} \right)   \bigg|f\left(\frac{w_1 -q}{\sqrt{\hbar}}\right) f\left(\frac{u_1 -q}{\sqrt{\hbar}}\right)\bigg|   \\
	&
	\qquad \left( |u_1-q|+|w_1-q| \right) \int \mathrm{d}w_2 \big| \nabla_{w_2} \left<a_{w_2}a_{w_1} \Psi_{N,t},  a_{w_2}  a_{u_1} \Psi_{N,t} \right>\big|\\
	&
	\leq \norm{\nabla V_N}_{L^\infty}  C_1  \hbar^{\frac{3}{2}} \int \mathrm{d}q\ |\varphi(q )|\iint \mathrm{d}w_1 \mathrm{d}u_1   \left( \rchi_{(w_1-u_1)\in \Omega_\hbar} + \hbar^{(1-\alpha_1)s} \right)   \bigg|f\left(\frac{w_1 -q}{\sqrt{\hbar}}\right) f\left(\frac{u_1 -q}{\sqrt{\hbar}}\right)\bigg|   \\
	&
	\qquad \left( |u_1-q|+|w_1-q| \right) \int \mathrm{d}w_2 \bigg[ \norm{\nabla_{w_2} a_{w_2}a_{w_1} \Psi_{N,t}} \norm{  a_{w_2}  a_{u_1} \Psi_{N,t}} +  \norm{ a_{w_2}a_{w_1} \Psi_{N,t}} \norm{ \nabla_{w_2} a_{w_2}  a_{u_1} \Psi_{N,t}} \bigg]\\
	&
	\leq \norm{\nabla V_N}_{L^\infty} C_1  \hbar^{\frac{3}{2}} \int \mathrm{d}q\ |\varphi(q )|\iint \mathrm{d}w_1 \mathrm{d}u_1   \left( \rchi_{(w_1-u_1)\in \Omega_\hbar} + \hbar^{(1-\alpha_1)s} \right)   \bigg|f\left(\frac{w_1 -q}{\sqrt{\hbar}}\right) f\left(\frac{u_1 -q}{\sqrt{\hbar}}\right)\bigg|   \\
	&
	\qquad \left( |u_1-q|+|w_1-q| \right) \bigg[ \left(\int \mathrm{d}w_2 \norm{\nabla_{w_2} a_{w_2}a_{w_1} \Psi_{N,t}}^2\right)^\frac{1}{2} \left(\int \mathrm{d}w_2 \norm{  a_{w_2}  a_{u_1} \Psi_{N,t}^2}\right)^\frac{1}{2} \\
	&
	\qquad + \left(\int \mathrm{d}w_2 \norm{ a_{w_2}a_{w_1} \Psi_{N,t}}^2\right)^\frac{1}{2} \left(\int \mathrm{d}w_2\norm{ \nabla_{w_2} a_{w_2}  a_{u_1} \Psi_{N,t}}^2\right)^\frac{1}{2} \bigg]\\
	& =: C_1 \norm{\nabla V_N}_{L^\infty} \bigg[ i_{1,1} + i_{1,2}\bigg],
	\end{align*}

Before we continue, we observe that from the definition of the kinetic energy operator $\mathcal{K}$ and number operator $\mathcal{N}$, we have
\begin{equation}\label{est_mixture}
	\begin{split}
	\int &\mathrm{d}q\ |\varphi(q )| \left[\iint \mathrm{d}w_1 \mathrm{d}u_1\ \rchi_{|w_1 - q|\leq R_1 \sqrt{\hbar}} \rchi_{|u_1 - q|\leq R_1  \sqrt{\hbar}}  \left(\int \mathrm{d}w_2 \norm{\nabla_{w_2} a_{w_2} a_{w_1} \Psi_{N,t}}^2\right) \left(\int \mathrm{d}w_2 \norm{  a_{w_2}  a_{u_1} \Psi_{N,t}}^2\right)\right]^\frac{1}{2}\\
	&
	= 2\hbar^{-1} \int \mathrm{d}q\ |\varphi(q )| \left[\iint \mathrm{d}w_1 \mathrm{d}u_1 \ \rchi_{|w_1 - q|\leq R_1 \sqrt{\hbar}} \rchi_{|u_1 - q|\leq R_1  \sqrt{\hbar}}  \left<\Psi_{N,t}, a^*_{w_1} \mathcal{K} a_{w_1} \Psi_{N,t}\right> \left<\Psi_{N,t}, a^*_{u_1} \mathcal{N} a_{u_1} \Psi_{N,t}\right>\right]^\frac{1}{2}\\
	&
	=  2 \hbar^{-1} \int \mathrm{d}q\  |\varphi(q )| \bigg[ \int \mathrm{d}w_1 \ \rchi_{|w_1 - q|\leq R_1 \sqrt{\hbar}} \left<\Psi_{N,t}, \mathcal{K}  (a^*_{w_1}  a_{w_1}  -1) \Psi_{N,t}\right>  \\
	&
	\qquad \int \mathrm{d}u_1 \ \rchi_{|u_1 - q|\leq R_1 \sqrt{\hbar}} \left<\Psi_{N,t}, (\mathcal{N} -1 )a^*_{u_1}  a_{u_1} \Psi_{N,t}\right> \bigg]^\frac{1}{2}\\
	&
	\leq 2 \hbar^{-1}\left( \iint \mathrm{d}q \mathrm{d}w_1\  |\varphi(q )| \ \rchi_{|w_1 - q|\leq R_1 \sqrt{\hbar}} \left<\Psi_{N,t}, \mathcal{K} a^*_{w_1}  a_{w_1} \Psi_{N,t}\right>\right)^\frac{1}{2}  \\
	&
	\qquad \left(\iint \mathrm{d}q \mathrm{d}u_1\  |\varphi(q )| \rchi_{|u_1 - q|\leq R_1 \sqrt{\hbar}} \left<\Psi_{N,t}, \mathcal{N} a^*_{u_1}  a_{u_1} \Psi_{N,t}\right>\right)^\frac{1}{2}\\
	&
	\leq C_2  \hbar^{-4-\frac{3}{2}},
	\end{split}
	\end{equation}
where in the last step we use a direct outcome of \eqref{N_hbar_eq} and \eqref{kinetic_finite}, i.e.
\begin{equation}
	\begin{split}
	&\iint \mathrm{d}q \mathrm{d}x\ \rchi_{|x - q|\leq R_1 \sqrt{\hbar}} |\varphi(q)| \left<\Psi_{N,t},a^*_x \mathcal{K} a_{x}\Psi_{N,t} \right> \\
	& 
	= \iint \mathrm{d}q \mathrm{d}x\ \rchi_{|x - q|\leq R_1 \sqrt{\hbar}} |\varphi(q)| \left<\Psi_{N,t}, \mathcal{K} (a^*_x  a_{x} - 1 ) \Psi_{N,t} \right> \\
	&
	\leq \iint \mathrm{d}q \mathrm{d}x\ \rchi_{|x - q|\leq R_1 \sqrt{\hbar}} |\varphi(q)| \left<\Psi_{N,t}, \mathcal{K} a^*_x  a_{x}\Psi_{N,t} \right> \\
	&
	=  \left<\Psi_{N,t}, \mathcal{K} \iint \mathrm{d}q \mathrm{d}x\ \rchi_{|x - q|\leq R_1 \sqrt{\hbar}} |\varphi(q)| a^*_x  a_{x}\Psi_{N,t} \right> \\
	&
	\leq  C_2 \hbar^{-\frac{3}{2}}  \left<\Psi_{N,t}, \mathcal{K} \Psi_{N,t} \right>\\
	&
	\leq   C_2  \hbar^{-\frac{3}{2}-3}.
	\end{split}
	\end{equation}
In the above estimate, $C_2$ is a constant depends on $\norm{\varphi}_{L^\infty}$ and $\supp f$.

To continue, we apply the H\"older inequality to $i_{1,1}$ with respect to the terms $w_1$ and $u_1$,
\begin{align*}
	i_{1,1} &\leq  \hbar^{\frac{3}{2}} \int \mathrm{d}q\ |\varphi(q )| \bigg[\iint \mathrm{d}w_1 \mathrm{d}u_1    \rchi_{(w_1-u_1)\in \Omega_\hbar}   \bigg|f\left(\frac{w_1 -q}{\sqrt{\hbar}}\right) f\left(\frac{u_1 -q}{\sqrt{\hbar}}\right)\bigg|^2  \left(|u_1 - q|+|w_1-q| \right)^2 \bigg]^\frac{1}{2}\\
	&
	\qquad  \bigg[\iint \mathrm{d}w_1 \mathrm{d}u_1\  \rchi_{|w_1 - q|\leq R_1 \sqrt{\hbar}} \rchi_{|u_1 - q|\leq R_1  \sqrt{\hbar}} \left(\int \mathrm{d}w_2 \norm{\nabla_{w_2} a_{w_2} a_{w_1} \Psi_{N,t}}^2\right) \left(\int \mathrm{d}w_2 \norm{  a_{w_2}  a_{u_1} \Psi_{N,t}}^2 \right) \\
	&
	\qquad + \iint \mathrm{d}w_1 \mathrm{d}u_1\ \rchi_{|w_1 - q|\leq R_1 \sqrt{\hbar}} \rchi_{|u_1 - q|\leq R_1  \sqrt{\hbar}} \left(\int \mathrm{d}w_2 \norm{ a_{w_2}a_{w_1} \Psi_{N,t}}^2\right) \left(\int \mathrm{d}w_2\norm{ \nabla_{w_2} a_{w_2}  a_{u_1} \Psi_{N,t}}^2\right) \bigg]^\frac{1}{2}\\
	&
	= \hbar^{\frac{3}{2}}  \bigg[ \hbar^3 \iint \mathrm{d}\tilde{w}_1 \mathrm{d}\tilde{u}_1    \rchi_{|\tilde{w}_1-\tilde{u}_1|\leq  \hbar^{\alpha_1+\frac{1}{2}}}    |f\left(\tilde{w}_1\right) f\left(\tilde{u}\right)|^2 \hbar\left(|\tilde{u}_1|+|\tilde{w}_1| \right)^2 \bigg]^\frac{1}{2}\\
	&
	\qquad \int \mathrm{d}q\ |\varphi(q )|  \bigg[ 2 \iint \mathrm{d}w_1 \mathrm{d}u_1\  \rchi_{|w_1 - q|\leq R_1 \sqrt{\hbar}} \rchi_{|u_1 - q|\leq R_1  \sqrt{\hbar}} \\
	&
	\qquad \left(\int \mathrm{d}w_2 \norm{\nabla_{w_2} a_{w_2} a_{w_1} \Psi_{N,t}}^2\right) \left(\int \mathrm{d}w_2 \norm{  a_{w_2}  a_{u_1} \Psi_{N,t}}^2\right) \bigg]^\frac{1}{2}.\\
\intertext{Then by using \eqref{est_mixture}, the estimate goes further}
	&
	\leq C_2\hbar^{3-4-\frac{3}{2}+\frac{1}{2}} \bigg[ \iint \mathrm{d}\tilde{w}_1 \mathrm{d}\tilde{u}_1   \rchi_{|\tilde{w}_1-\tilde{u}_1|\leq R_1 \hbar^{\alpha_1+\frac{1}{2}}}   |f\left(\tilde{w}_1\right) f\left(\tilde{u}\right)|^2  \left(|\tilde{u}_1|+|\tilde{w}_1| \right)^2 \bigg]^\frac{1}{2}\\
	&
	\leq C_2\hbar^{-2} \bigg[ \iint \mathrm{d}\tilde{w}_1 \mathrm{d}\tilde{u}_1\    \rchi_{|\tilde{w}_1-\tilde{u}_1|\leq  R_1 \hbar^{\alpha_1+\frac{1}{2}}}    |f\left(\tilde{w}_1\right) f\left(\tilde{u}\right)|^2  \left| \tilde{u}_1+ \tilde{w}_1  \right|^2 \bigg]^\frac{1}{2}\\
	&
	\leq  C_3\hbar^{-2+\frac{3}{2}(\alpha_1 +\frac{1}{2})} =C_3 \hbar^{\frac{6\alpha_1 -5}{4}},
	\end{align*}
where $C_3$ depends on $\norm{\varphi}_{L^\infty}$, $\norm{f}_{L^\infty \cap L^2}$, $\supp f$ and we use the following estimate in the last inequality above:
\begin{equation}\label{bound_rchi_f}
	\begin{aligned}
	\int \mathrm{d}& w\ |f(w)|^2 \int \mathrm{d}u\ \rchi_{|w-u| \leq R_1 \hbar^{\alpha_1+\frac{1}{2}}}  |f(u)|^2\\
	&\leq \sup_{u} {|f(u)|^2}\int \mathrm{d}w\ |f(w)|^2 \int \mathrm{d}u\ \rchi_{|w-u| \leq R_1 \hbar^{\alpha_1+\frac{1}{2}}} \\
	&\leq  \norm{f}_{L^\infty}\norm{f}_{L^2} \hbar^{3\left(\alpha_1+\frac{1}{2}\right)},
	\end{aligned}
	\end{equation}
where the fixed radius $R_1$ arises from the compactness assumption of $f$.

With steps similar to those for $i_{1,1}$, we have
\begin{align*}
	i_{1,2} & \leq  C_3 \hbar^{-2}  \hbar^{(1-\alpha_1)s} \bigg[ \iint \mathrm{d}\tilde{w}_1 \mathrm{d}\tilde{u}_1     |f\left(\tilde{w}_1\right) f\left(\tilde{u}\right)|^2  \left(|\tilde{u}_1|+|\tilde{w}_1| \right)^2 \bigg]^\frac{1}{2} 
	\leq  C_4 \hbar^{-2+(1-\alpha_1)s},
	\end{align*}
where the constant $C_4$ depends on $\norm{\varphi}_\infty$, $\norm{f}_{L^\infty \cap L^2}$, and $\supp f$.
To balance the order between $i_{1,1}$ and $i_{1,2}$, $s$ is chosen to be
\[
	s = \left\lceil \frac{3(2\alpha_1 +1)}{4(1-\alpha_1)}\right\rceil,
	\]
for $\alpha_1 \in (\frac{5}{6},1)$. Therefore, we have
\begin{equation}\label{R1_I}
	I_1 \leq \tilde{C}  \norm{\nabla V_N}_{L^\infty}\hbar^{\frac{6\alpha_1 -5}{4}}.
	\end{equation}

To estimate $J_1$, we compute
\begin{align*}
	J_1 & 
	= (2\pi)^6  \bigg| \iint \mathrm{d}q \mathrm{d}p  \iiint \mathrm{d}w_1 \mathrm{d}u_1  \mathrm{d}w_2\ \varphi(q)\nabla  \phi(p) \cdot f\left(\frac{w_1 -q}{\sqrt{\hbar}}\right) f\left(\frac{u_1 -q}{\sqrt{\hbar}}\right)\\
	&
	\qquad e^{\frac{\mathrm{i}}{\hbar}p \cdot (w_1-u_1)}  \left[2 \hbar \int \mathrm{d}\tilde{q}_2\ f\left(\tilde{q}_2 \right)  \nabla f\left(\tilde{q}_2 \right)  \right] \left[ \int_0^1 \mathrm{d}s\ V_N\big(su_1+ (1-s)w_1 - w_2 \big)  - V_N(q - w_2) \right] \\
	&
	\qquad  \gamma_{N,t}^{(2)}(u_1,w_2;w_1,w_2)\bigg|\\
	&
	\leq (2\pi)^6 \hbar \int \mathrm{d}q  |\varphi(q )|\iiint \mathrm{d}w_1 \mathrm{d}u_1  \mathrm{d}w_2 \left|\int \mathrm{d}p (\rchi_{(w_1-u_1)\in \Omega_\hbar}+ \rchi_{(w_1-u_1)\in \Omega^c_\hbar}) \nabla \phi(p) \cdot e^{\frac{\mathrm{i}}{\hbar}p \cdot (w_1-u_1)}\right|\\
	&
	\qquad  \left[ \int \mathrm{d}\tilde{q}_2\ |\nabla f\left(\tilde{q}_2 \right)|  \right]  \bigg|f\left(\frac{w_1 -q}{\sqrt{\hbar}}\right) f\left(\frac{u_1 -q}{\sqrt{\hbar}}\right)    \left[ \int_0^1 \mathrm{d}s\ V_N\big(su_1+ (1-s)w_1 - w_2 \big)  - V_N(q - w_2) \right]\\
	&
	\qquad  \gamma_{N,t}^{(2)}(u_1,w_2;w_1,w_2)\bigg|\\
	&
	\leq C_1 \hbar \int \mathrm{d}q  |\varphi(q )|\iiint \mathrm{d}w_1 \mathrm{d}u_1  \mathrm{d}w_2  \left( \rchi_{(w_1-u_1)\in \Omega_\hbar} + \hbar^{(1-\alpha_1)s} \right) \bigg|f\left(\frac{w_1 -q}{\sqrt{\hbar}}\right) f\left(\frac{u_1 -q}{\sqrt{\hbar}}\right)  \bigg| \\
	&
	\qquad \left[ \int_0^1 \mathrm{d}s\ |V_N\big(su_1+ (1-s)w_1 - w_2 \big)  - V_N(q - w_2)| \right]
	\bigg| \gamma_{N,t}^{(2)}(u_1,w_2;w_1,w_2)\bigg|\\
	&
	\leq C_1 \norm{\nabla V_N}_{L^\infty} \hbar \int \mathrm{d}q  |\varphi(q )|\iint \mathrm{d}w_1 \mathrm{d}u_1   \left( \rchi_{(w_1-u_1)\in \Omega_\hbar} + \hbar^{(1-\alpha_1)s} \right) \bigg|f\left(\frac{w_1 -q}{\sqrt{\hbar}}\right) f\left(\frac{u_1 -q}{\sqrt{\hbar}}\right)  \bigg| \\
	&
	\qquad\left( |u_1-q|+|w_1-q| \right) \int \mathrm{d}w_2\  \norm{a_{w_2} a_{w_1} \Psi_{N,t}}_2 \norm{a_{w_2} a_{u_1} \Psi_{N,t}}_2 \\
	&
	\leq  C_1 \norm{\nabla V_N}_{L^\infty}  \hbar \int \mathrm{d}q  |\varphi(q )|\iint \mathrm{d}w_1 \mathrm{d}u_1   \left( \rchi_{(w_1-u_1)\in \Omega_\hbar} + \hbar^{(1-\alpha_1)s} \right) \bigg|f\left(\frac{w_1 -q}{\sqrt{\hbar}}\right) f\left(\frac{u_1 -q}{\sqrt{\hbar}}\right)  \bigg| \\
	&
	\qquad\left( |u_1-q|+|w_1-q| \right) \left( \int \mathrm{d}w_2\  \norm{a_{w_2} a_{w_1} \Psi_{N,t}}_2^2 \right)^\frac{1}{2} \left( \int \mathrm{d}w_2\  \norm{a_{w_2} a_{u_1} \Psi_{N,t}}_2^2 \right)^\frac{1}{2} \\
	&
	=: C_1 \norm{\nabla V_N}_{L^\infty} \big[j_{1,1}+ j_{1,2}\big]. \numberthis \label{eq:j11j12}
	\end{align*}

As in part $I_1$, we separately analyze $j_{1,1}$ and $j_{1,2}$.
\begin{align*}
	j_{1,1}&
	=  \hbar \int \mathrm{d}q  |\varphi(q )|\iint \mathrm{d}w_1 \mathrm{d}u_1   \rchi_{(w_1-u_1)\in \Omega_\hbar}  \bigg|f\left(\frac{w_1 -q}{\sqrt{\hbar}}\right) f\left(\frac{u_1 -q}{\sqrt{\hbar}}\right)  \bigg| \left( |u_1-q|+|w_1-q| \right) \\
	&
	\quad \rchi_{|u_1 - q|\leq R_1 \sqrt{\hbar}}\rchi_{|w_1 - q|\leq R_1 \sqrt{\hbar}} \left( \int \mathrm{d}w_2\  \left<\Psi_{N,t}, a_{w_1}^* a_{w_2}^* a_{w_2} a_{w_1} \Psi_{N,t}\right>\right)^\frac{1}{2} \left( \int \mathrm{d}w_2\  \left<\Psi_{N,t}, a_{u_1}^* a_{w_2}^* a_{w_2} a_{u_1} \Psi_{N,t}\right> \right)^\frac{1}{2}\\
	&
	\leq \hbar \int \mathrm{d}q  |\varphi(q )| \left(\iint \mathrm{d}w_1 \mathrm{d}u_1   \rchi_{(w_1-u_1)\in \Omega_\hbar}  \bigg|f\left(\frac{w_1 -q}{\sqrt{\hbar}}\right) f\left(\frac{u_1 -q}{\sqrt{\hbar}}\right)  \bigg|^2 \left( |u_1-q|+|w_1-q| \right)^2\right)^\frac{1}{2}\\
	&
	\quad  \left(\iint \mathrm{d}w_1 \mathrm{d}w_2\ \rchi_{|w_1 - q|\leq R_1 \sqrt{\hbar}}   \left<\Psi_{N,t}, a_{w_1}^* a_{w_2}^* a_{w_2} a_{w_1} \Psi_{N,t}\right> \right)\\
	&
	= \hbar  \left(\hbar^3 \iint \mathrm{d}\tilde{w}_1 \mathrm{d}\tilde{u}_1   \rchi_{(\tilde{w}_1-\tilde{u}_1)\in \Omega_\hbar}  |f\left(\tilde{w}\right) f\left(\tilde{u}\right)  |^2 \hbar \left( |\tilde{u}|+|\tilde{w}| \right)^2\right)^\frac{1}{2} \iint \mathrm{d}q \mathrm{d}w_1\  \rchi_{|w_1 - q|\leq R_1 \sqrt{\hbar}}  |\varphi(q )|  \\
	&
	\qquad \left<\Psi_{N,t}, a_{w_1}^* \mathcal{N} a_{w_1} \Psi_{N,t}\right> \\
	&
	\leq \norm{\varphi}_{L^\infty} \hbar^{1+2-3-\frac{3}{2}} \left( \iint \mathrm{d}\tilde{w}_1 \mathrm{d}\tilde{u}_1   \rchi_{(\tilde{w}_1-\tilde{u}_1)\in \Omega_\hbar}  |f\left(\tilde{w}\right) f\left(\tilde{u}\right)  |^2  \left( |\tilde{u}|+|\tilde{w}| \right)^2\right)^\frac{1}{2} \\
	&
	{\leq} C_4 \hbar^{\frac{3}{2}(\alpha_1 - \frac{1}{2})},
	\end{align*}
where we use \eqref{bound_rchi_f} in the last inequality.

On the other hand, from the definition of $j_{1,2}$ in \eqref{eq:j11j12}, we get
\begin{align*}
	j_{1,2}&
	=  \hbar \int \mathrm{d}q  |\varphi(q )|\iint \mathrm{d}w_1 \mathrm{d}u_1\   \hbar^{(1-\alpha_1)s} \bigg|f\left(\frac{w_1 -q}{\sqrt{\hbar}}\right) f\left(\frac{u_1 -q}{\sqrt{\hbar}}\right)  \bigg|
	\left( |u_1-q|+|w_1-q| \right) \\
	&
	\qquad \rchi_{|u_1 - q|\leq R_1 \sqrt{\hbar}}\rchi_{|w_1 - q|\leq R_1 \sqrt{\hbar}} \left( \int \mathrm{d}w_2\  \norm{a_{w_2} a_{w_1} \Psi_{N,t}}_2^2 \right)^\frac{1}{2} \left( \int \mathrm{d}w_2\  \norm{a_{w_2} a_{u_1} \Psi_{N,t}}_2^2 \right)^\frac{1}{2}\\
	&
	\leq   \hbar^{1+(1-\alpha_1)s}\left(\iint \mathrm{d}w_1 \mathrm{d}u_1\   \bigg|f\left(\frac{w_1 -q}{\sqrt{\hbar}}\right) f\left(\frac{u_1 -q}{\sqrt{\hbar}}\right)  \bigg|^2
	\left( |u_1-q|+|w_1-q| \right)^2 \right)^\frac{1}{2}\\
	&
	\qquad \int \mathrm{d}q  |\varphi(q )| \iint  \mathrm{d}w_1 \mathrm{d}w_2\ \rchi_{|w_1 - q|\leq R_1 \sqrt{\hbar}}  \left<\Psi_{N,t}, a_{w_1}^* a_{w_2}^* a_{w_2} a_{w_1} \Psi_{N,t}\right>\\
	& 
	\leq   \hbar^{1+(1-\alpha_1)s-3 - \frac{3}{2}}  \bigg( \hbar^4 \iint \mathrm{d} \tilde{w}_1 \mathrm{d} \tilde{u}_1\ \rchi_{|\tilde{w}_1-\tilde{u}_1|\leq  \hbar^{\alpha_1+\frac{1}{2}}}   |f(\tilde{w}_1) f(\tilde{u})|^2  (|\tilde{u}_1|+|\tilde{w}_1| )^2 \bigg)^\frac{1}{2}\\
	&
	\leq C_4 \hbar^{(1-\alpha_1)s - \frac{3}{2}}.
	\end{align*}
To obtain the same order for $j_{1,1}$ and $j_{1,2}$, we can choose
\[
	s = \left\lceil \frac{3(2\alpha_1+1)}{4(1-\alpha_1)} \right\rceil.
	\]

Thus, for $\alpha_1 \in (\frac{1}{2},1)$, we have
\begin{equation}
	J_1 \leq  \tilde{C} \norm{\nabla V_N}_{L^\infty} \hbar^{\frac{3}{2}(\alpha_1 - \frac{1}{2})}.
	\end{equation}

Now, we want to estimate $K_1$.
\begin{align*}
	K_1 &= (2\pi)^6  \bigg| \iiint (\mathrm{d}q)^{\otimes 2}\mathrm{d}p \iiint \mathrm{d}w_1 \mathrm{d}u_1  \mathrm{d}w_2\  \nabla  \phi(p) \cdot \nabla_{q} \left(\varphi(q)  f\left(\frac{w_1 -q}{\sqrt{\hbar}}\right) f\left(\frac{u_1 -q}{\sqrt{\hbar}}\right)\right)\\
	&
	\qquad e^{\frac{\mathrm{i}}{\hbar}p \cdot (w_1-u_1)} \left|f\left(\frac{w_2 -q_2}{\sqrt{\hbar}}\right) \right|^2  \left[   V_N\big(q - w_2 \big)  - V_N(q - q_2) \right] \\
	&
	\qquad \gamma_{N,t}^{(2)}(u_1,w_2;w_1,w_2)\bigg|\\
	&
	= (2\pi)^6  \bigg| \iiint (\mathrm{d}q)^{\otimes 2}\mathrm{d}p \iiint \mathrm{d}w_1 \mathrm{d}u_1  \mathrm{d}w_2\ \nabla  \phi(p) \cdot  \bigg[ \nabla \varphi(q)  f\left(\frac{w_1 -q}{\sqrt{\hbar}} \right)f\left(\frac{u_1 -q}{\sqrt{\hbar}}\right)\\
	&
	\qquad - \hbar^{-\frac{1}{2}} \varphi(q)\nabla  f\left(\frac{w_1 -q}{\sqrt{\hbar}} \right)  f\left(\frac{u_1 -q}{\sqrt{\hbar}}\right) - \hbar^{-\frac{1}{2}} \varphi(q)   f\left(\frac{w_1 -q}{\sqrt{\hbar}} \right)   \nabla f\left(\frac{u_1 -q}{\sqrt{\hbar}}\right) \bigg]\\
	&
	\qquad e^{\frac{\mathrm{i}}{\hbar}p \cdot (w_1-u_1)} \left|f\left(\frac{w_2 -q_2}{\sqrt{\hbar}}\right) \right|^2  \left[  V_N\big(q - w_2 \big)  - V_N(q - q_2) \right]  \gamma_{N,t}^{(2)}(u_1,w_2;w_1,w_2)\bigg|\\
	&
	= :k_{1,1} + k_{1,2} + k_{1,3}.
	\end{align*}

Note that, for any $\varphi \in C^\infty_0$ and $f \in W^{1,2}_0$, the term $k_{1,1}$ is $\sqrt{\hbar}$-order higher than $k_{1,2}$ and $k_{1,3}$. Moreover, the estimate of the terms $k_{1,2}$ and $k_{1,3}$ are the same when doing change of variables in the final steps. Therefore, we focus only on the term $k_{1,2}$.
\begin{align*}
	k_{1,2} &
	\leq (2\pi)^6 \hbar^{-\frac{1}{2}}  \iint (\mathrm{d}q)^{\otimes 2} \iiint \mathrm{d}w_1 \mathrm{d}u_1  \mathrm{d}w_2 \left|\int \mathrm{d}p\  \nabla  \phi(p) e^{\frac{\mathrm{i}}{\hbar}p \cdot (w_1-u_1)} \right|  \bigg| \varphi(q)  \nabla  f\left(\frac{w_1 -q}{\sqrt{\hbar}} \right)  f\left(\frac{u_1 -q}{\sqrt{\hbar}}\right)  \bigg|\\
	&
	\qquad  \left|f\left(\frac{w_2 -q_2}{\sqrt{\hbar}}\right) \right|^2  \left|  V_N\big(q - w_2 \big)  - V_N(q - q_2) \right|  |\gamma_{N,t}^{(2)}(u_1,w_2;w_1,w_2)|\\
	&
	= (2\pi)^6  \hbar^{-\frac{1}{2}}  \int \mathrm{d}q \iiint \mathrm{d}w_1 \mathrm{d}u_1  \mathrm{d}w_2 \left|\int \mathrm{d}p\ \left(\rchi_{(w_1-u_1) \in \Omega_{\hbar}} + \rchi_{(w_1-u_1) \in \Omega_{\hbar}^c}\right) \nabla \phi(p) \cdot e^{\frac{\mathrm{i}}{\hbar}p \cdot (w_1-u_1)} \right| \\
	&
	\qquad  \bigg| \varphi(q)  \nabla  f\left(\frac{w_1 -q}{\sqrt{\hbar}} \right)  f\left(\frac{u_1 -q}{\sqrt{\hbar}}\right)  \bigg|  \int \hbar^{\frac{3}{2}} \mathrm{d}\tilde{q}_2 \left|f(\tilde{q}_2) \right|^2  \left|   V_N\big(q - w_2 \big)  - V_N(q - \sqrt{\hbar}\tilde{q}_2 - w_2) \right|\\
	&
	\qquad  | \gamma_{N,t}^{(2)}(u_1,w_2;w_1,w_2)|\\
	&
	\leq C_1 \norm{\nabla V_N}_{L^\infty}  \hbar^{1+\frac{1}{2}} \int \mathrm{d}q \iint \mathrm{d}w_1 \mathrm{d}u_1   \left|\rchi_{(w_1-u_1) \in \Omega_{\hbar}} + \hbar^{(1-\alpha_1)s} \right| \\
	&
	\qquad  \bigg| \varphi(q)  \nabla  f\left(\frac{w_1 -q}{\sqrt{\hbar}} \right)  f\left(\frac{u_1 -q}{\sqrt{\hbar}}\right)  \bigg|  \left(\int  \mathrm{d}\tilde{q}_2 |\tilde{q}_2| \left|f(\tilde{q}_2) \right|^2  \right) \int \mathrm{d}w_2\ | \gamma_{N,t}^{(2)}(u_1,w_2;w_1,w_2)|\\
	&
	\leq C_1 \norm{\nabla V_N}_{L^\infty}  \hbar^{1+\frac{1}{2}} \int \mathrm{d}q\ |\varphi(q)| \iint \mathrm{d}w_1 \mathrm{d}u_1   \left(\rchi_{(w_1-u_1)\leq \hbar^{\alpha_1}} + \hbar^{(1-\alpha_1)s} \right)  \bigg| \nabla  f\left(\frac{w_1 -q}{\sqrt{\hbar}} \right)  f\left(\frac{u_1 -q}{\sqrt{\hbar}}\right)  \bigg|\\
	&
	\qquad \rchi_{|w_1-q|\leq R_1 \sqrt{\hbar}}\rchi_{|u_1-q|\leq R_1 \sqrt{\hbar}}
	\left(\int \mathrm{d}w_2\ \norm{a_{w_2} a_{w_1}\Psi_{N,t}}^2\right)^\frac{1}{2} \left(\int \mathrm{d}w_2\ \norm{a_{w_2} a_{u_1}\Psi_{N,t}}^2\right)^\frac{1}{2}\\
	&
	=: C_1 \norm{\nabla V_N}_{L^\infty} [ \tilde{k}_1 + \tilde{k}_2,].
	\end{align*}

Using the H\"older inequality with respect to $w_1$ and $u_1$, we obtain that
\begin{align*}
	\tilde{k}_1 &
	\leq  \hbar^{1+\frac{1}{2}} \int \mathrm{d}q\ |\varphi(q)| \left[\iint \mathrm{d}w_1 \mathrm{d}u_1\  \rchi_{(w_1-u_1)\leq \hbar^{\alpha_1}}   \bigg| \nabla  f\left(\frac{w_1 -q}{\sqrt{\hbar}} \right)  f\left(\frac{u_1 -q}{\sqrt{\hbar}}\right)  \bigg|^2 \right]^\frac{1}{2}\\
	&
	\qquad  \left[\iint \mathrm{d}w_1 \mathrm{d}u_1\   \rchi_{|w_1-q|\leq R_1 \sqrt{\hbar}}\rchi_{|u_1-q|\leq R_1 \sqrt{\hbar}}
	\left(\int \mathrm{d}w_2\ \norm{a_{w_2} a_{w_1}\Psi_{N,t}}^2\right) \left(\int \mathrm{d}w_2\ \norm{a_{w_2} a_{u_1}\Psi_{N,t}}^2\right)\right]^\frac{1}{2}\\
	&
	= \hbar^{\frac{3}{2}} \left[ \hbar^{3} \iint \mathrm{d}\tilde{w}_1 \mathrm{d}\tilde{u}_1\  \rchi_{(\tilde{w}_1-\tilde{u}_1)\leq \hbar^{\alpha_1}+\frac{1}{2}}   \bigg| \nabla  f\left(\tilde{w}_1 \right)  f\left(\tilde{u}_1\right)  \bigg|^2 \right]^\frac{1}{2}\\
	&
	\qquad \int \mathrm{d}q\ |\varphi(q)|  \left[ \int \mathrm{d}w_1\   \rchi_{|w_1-q|\leq R_1 \sqrt{\hbar}}
	\left<\Psi_{N,t}, a_{w_1}^*\mathcal{N} a_{w_1} \Psi_{N,t} \right> \right]\\
	&
	\leq C_5 \hbar^{\frac{3}{2}(\alpha_1+\frac{1}{2})-\frac{3}{2}}  = C_5  \hbar^{\frac{3(2\alpha_1-1)}{4}},
	\end{align*}
where we use \eqref{N_hbar_eq} in the last inequality and $C_5$ depending on $\norm{f}_{L^\infty}$, $\norm{\nabla f}_{L^2}$, $\supp f$, and $\norm{\varphi}_{L^\infty}$.

Similarly, to calculate $j_{1,2}$,
\begin{align*}
	\tilde{k}_2 & \leq \hbar^{1+\frac{1}{2}} \int \mathrm{d}q\ |\varphi(q)| \iint \mathrm{d}w_1 \mathrm{d}u_1  \hbar^{(1-\alpha_1)s}  \bigg| \nabla  f\left(\frac{w_1 -q}{\sqrt{\hbar}} \right)  f\left(\frac{u_1 -q}{\sqrt{\hbar}}\right)  \bigg|\\
	& \qquad \rchi_{|w_1-q|\leq R_1 \sqrt{\hbar}}\rchi_{|u_1-q|\leq R_1 \sqrt{\hbar}}
	\left(\int \mathrm{d}w_2\ \norm{a_{w_2} a_{w_1}\Psi_{N,t}}^2\right)^\frac{1}{2} \left(\int \mathrm{d}w_2\ \norm{a_{w_2} a_{u_1}\Psi_{N,t}}^2\right)^\frac{1}{2}\\
	&
	\leq C_5 \hbar^{ (1-\alpha_1)s - \frac{3}{2}},
	\end{align*}
where $s$ is chosen as
\[
	s = \left\lceil \frac{3(2\alpha_1 +1)}{4(1-\alpha_1)}\right\rceil,
	\]
for $\alpha_1 \in (\frac{1}{2},1)$. Thus,
\begin{equation}
	K_1  \leq \tilde{C} \norm{\nabla V_N}_{L^\infty}  \hbar^{\frac{3(2\alpha_1-1)}{4}},
	\end{equation}
where we recall that the constant $\tilde{C}$ depends on $\norm{\varphi}_{ W^{1,\infty}}$, $\norm{\nabla \phi}_{L^1 \cap W^{s,\infty}}$, $\supp\phi$, $\norm{f}_{L^\infty \cap H^1}$, and $\supp f$.

Therefore, in summary, we have
\begin{equation*}
		\bigg|\iint \mathrm{d}q \mathrm{d}p\ \varphi(q) \phi(p)  \nabla_{p} \cdot \mathcal{R}_1(q,p) \bigg|  \leq \tilde{C}  \norm{\nabla V_N}_{L^\infty} \hbar^{\frac{1}{4}(6\alpha_1 - 5)} \leq \tilde{C}  \beta_{N}^{-2}\hbar^{\frac{1}{4}(6\alpha_1 - 5)}  ,
	\end{equation*}
where we use \eqref{coulomb_00} in the second inequality.

Setting $\beta_{N} = \hbar^\delta$ for $0 < \delta < \frac{1}{8}(6\alpha_1 - 5)$, we obtain the desired result.
\end{proof}

\subsection{Estimate for the mean-field residual term $\mathcal{R}_2$}

\begin{Proposition}\label{est_meanfield_R}
Let $\varphi, \phi \in C_0^\infty(\R^3)$. Then, for $\frac{1}{2} < \alpha_2 < 1$, $0 < \delta < \frac{3}{4}(\alpha_2 - \frac{1}{2})$,  and $s = \left\lceil \frac{3(2\alpha_2 +1)}{4(1-\alpha_2)}\right\rceil$, we have
\begin{equation}
	\bigg|\iint \mathrm{d}q \mathrm{d}p\ \varphi(q)\phi(p)  \nabla_{p} \cdot \mathcal{R}_2(q,p) \bigg| 
	\leq \tilde{C} \hbar^{\frac{3}{2}(\alpha_2 - \frac{1}{2}) -2\delta}
	\end{equation}
where the constant $\tilde{C}$ depends on $\norm{\varphi}_\infty$, $\norm{\nabla \phi}_{L^1\cap W^{s,\infty}}$, $\norm{f}_{L^\infty \cap H^1}$, $\supp f$, and $\supp \phi$.
\end{Proposition}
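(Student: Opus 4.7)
The proof of Proposition~\ref{est_meanfield_R} will follow the same architecture as Proposition~\ref{est_semi_R}: integrate by parts to move $\nabla_p$ onto $\phi$, split the resulting $p$-integral via the oscillation estimate in Lemma~\ref{estimate_oscillation}, apply the cutoff number operator bound of Lemma~\ref{N_hbar} to the remaining Fock-space expressions, and finally absorb $\|\nabla V_N\|_{L^\infty}\leq C\beta_N^{-2}=C\hbar^{-2\delta}$ from Lemma~\ref{lem_reg_V_N} into the $\hbar$-budget. The genuinely new ingredient is an algebraic reduction of the mean-field defect $\gamma_{N,t}^{(2)}-\gamma_{N,t}^{(1)}\otimes\gamma_{N,t}^{(1)}$ in Fock space, which must be performed before any Cauchy--Schwarz estimate can be applied.

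The algebraic step would rely on the CAR identity
\begin{equation*}
a^*_{w_1}a^*_{w_2}a_{u_2}a_{u_1}=a^*_{w_2}a_{u_2}\,a^*_{w_1}a_{u_1}-\delta(w_1-u_2)\,a^*_{w_2}a_{u_1},
\end{equation*}
obtained from $\{a^*_{w_1},a^*_{w_2}\}=0$ together with $\{a^*_{w_1},a_{u_2}\}=\delta(w_1-u_2)$. Taking the expectation in $\Psi_{N,t}$ and subtracting $\gamma_{N,t}^{(1)}(u_1;w_1)\gamma_{N,t}^{(1)}(u_2;w_2)$ splits the defect into (a) a contact term $-\delta(w_1-u_2)\gamma_{N,t}^{(1)}(u_1;w_2)$, whose delta collapses one coherent-state integration against another and leaves an exchange-type kernel, and (b) a connected correlation
\begin{equation*}
\mathcal{C}:=\langle\Psi_{N,t},a^*_{w_2}a_{u_2}a^*_{w_1}a_{u_1}\Psi_{N,t}\rangle-\gamma_{N,t}^{(1)}(u_1;w_1)\gamma_{N,t}^{(1)}(u_2;w_2),
\end{equation*}
which I would rewrite as a covariance $\langle(a^*_{w_2}a_{u_2}-\langle a^*_{w_2}a_{u_2}\rangle)\Psi_{N,t},(a^*_{u_1}a_{w_1}-\langle a^*_{u_1}a_{w_1}\rangle)\Psi_{N,t}\rangle$ modulo one further CAR commutator. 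Both pieces are then controlled by Cauchy--Schwarz using the bilinear Fock norms $\|a_{w_2}a_{u_2}\Psi_{N,t}\|\,\|a_{w_1}a_{u_1}\Psi_{N,t}\|$.

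Once the defect is reduced to such products of bilinear Fock norms, the rest of the analysis runs in direct parallel with the estimates of $j_{1,1}$ and $j_{1,2}$ in the proof of Proposition~\ref{est_semi_R}: the oscillation weight $\rchi_{(w_1-u_1)\in\Omega_\hbar}+\hbar^{(1-\alpha_2)s}$ produced by Lemma~\ref{estimate_oscillation} is balanced against the localization volume $\hbar^{3(\alpha_2+1/2)/2}$ of $f((\cdot-q)/\sqrt{\hbar})$ from \eqref{bound_rchi_f} by the prescribed choice $s=\lceil 3(2\alpha_2+1)/(4(1-\alpha_2))\rceil$, while the unrestricted $w_2$-integration is handled via the identity $\int\mathrm{d}w_2\,\|a_{w_2}a_{w_1}\Psi_{N,t}\|^2=\langle\Psi_{N,t},a^*_{w_1}(\mathcal{N}-1)a_{w_1}\Psi_{N,t}\rangle$ combined with \eqref{N_hbar_eq}. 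These combine to give an overall gain $\hbar^{3(\alpha_2-1/2)/2}$; multiplication by $\|\nabla V_N\|_{L^\infty}\leq C\hbar^{-2\delta}$ yields the announced rate $\hbar^{3(\alpha_2-1/2)/2-2\delta}$ whenever $\delta<3(\alpha_2-1/2)/4$. The main obstacle will be the clean tracking of fermionic signs in the CAR reduction: absent the cancellation that turns $\gamma^{(2)}-\gamma^{(1)}\otimes\gamma^{(1)}$ into a commutator-plus-contact structure, the two subtracted terms would each be of size $O(N^2)$ and their difference would not exhibit the $\hbar$-gain that makes the mean-field residue vanish in the combined limit.
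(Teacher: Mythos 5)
Your proposed route is genuinely different from the paper's, but it rests on a false premise. You assert that ``absent the cancellation that turns $\gamma^{(2)}-\gamma^{(1)}\otimes\gamma^{(1)}$ into a commutator-plus-contact structure, the two subtracted terms would each be of size $O(N^2)$ and their difference would not exhibit the $\hbar$-gain.'' That is not what happens in the paper. The proof of Proposition~\ref{est_meanfield_R} performs no CAR reduction and exploits no cancellation between $\gamma^{(2)}_{N,t}$ and $\gamma^{(1)}_{N,t}\otimes\gamma^{(1)}_{N,t}$: after the $p_2$-integral produces the Dirac delta that collapses $u_2\to w_2$, the entire defect is absorbed into the kernel $\Tr^{(1)}|\gamma^{(2)}_{N,t}-\gamma^{(1)}_{N,t}\otimes\gamma^{(1)}_{N,t}|(u_1;w_1)$, one Cauchy--Schwarzes against it, and then its $L^2_{w_1,u_1}$-norm is bounded by the \emph{trivial} estimate $N^2=\hbar^{-6}$ (each of the two subtracted terms contributes $O(N^2)$ separately; e.g.\ $\int\mathrm{d}w_2\,\|a_{w_1}a_{w_2}\Psi\|\,\|a_{u_1}a_{w_2}\Psi\|\leq N\,\gamma^{(1)}(w_1;w_1)^{1/2}\gamma^{(1)}(u_1;u_1)^{1/2}$). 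The $\hbar$-gain comes entirely from coherent-state localization in the $(q,q_2,w_1,u_1)$-variables: $\hbar^{3/2}$ from the $q_2$-change of variables, $\hbar^{3}$ from the $(w_1,u_1)$-change of variables, $\hbar^{\frac{3}{2}(\alpha_2+\frac{1}{2})}$ from the oscillation split $\rchi_{\Omega_\hbar}+\hbar^{(1-\alpha_2)s}$ combined with \eqref{bound_rchi_f}, and one more $\hbar^{3/2}$ from integrating $q$ against $\rchi_{|w_1-q|\leq R_1\sqrt{\hbar}}$, giving $\hbar^{\frac{9}{2}+\frac{3}{2}(\alpha_2+\frac{1}{2})}N^2=\hbar^{\frac{3}{2}(\alpha_2-\frac{1}{2})}$ before the $\beta_N^{-2}=\hbar^{-2\delta}$ penalty from $\|\nabla V_N\|_{L^\infty}$. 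So the brute-force bound does work, and identifying this is the single main idea you are missing.

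Your CAR identity $a^*_{w_1}a^*_{w_2}a_{u_2}a_{u_1}=a^*_{w_2}a_{u_2}a^*_{w_1}a_{u_1}-\delta(w_1-u_2)a^*_{w_2}a_{u_1}$ is correct, and a covariance-plus-contact decomposition of the defect could in principle give a \emph{sharper} exponent, since the covariance is $o(N^2)$ for quasi-free states. But as sketched the argument does not close: a Cauchy--Schwarz on $\langle(a^*_{w_2}a_{u_2}-\langle a^*_{w_2}a_{u_2}\rangle)\Psi,(a^*_{u_1}a_{w_1}-\langle a^*_{u_1}a_{w_1}\rangle)\Psi\rangle$ produces $\|(a^*_{w_2}a_{u_2}-\langle\cdot\rangle)\Psi\|$, not the bilinear norms $\|a_{w_2}a_{u_2}\Psi\|$ you name, and you never address the $w_2$-integrability of that fluctuation norm nor its propagation in time. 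Finally, the structure of the paper's $\mathcal{R}_2$ estimate is \emph{not} a direct parallel of $j_{1,1}$ and $j_{1,2}$: those invoke the localized number-operator bound via Lemma~\ref{N_hbar}, whereas the $\mathcal{R}_2$ argument never touches that lemma and instead harvests the last $\hbar^{3/2}$ from the extra $q$-integration.
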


\begin{proof}

Recall that from \eqref{remainder_total}, we have
\begin{equation}\label{splitting_R2}
	\begin{split}
	\mathcal{R}_2 := 
	& (2\pi)^3  \iint \mathrm{d}w_1 \mathrm{d}u_1 \iint \mathrm{d}w_2\mathrm{d}u_2 \iint \mathrm{d}q_2 \mathrm{d}p_2    \left(  f^\hbar_{q,p}(w)  \overline{f^\hbar_{q,p}(u)} \right)^{\otimes 2} \nabla V_N(q-q_2) \\
	&\qquad  \left[\gamma_{N,t}^{(2)}(u_1,u_2;w_1,w_2)  - \gamma_{N,t}^{(1)}(u_1;w_1)\gamma_{N,t}^{(1)}(u_2;w_2)\right].
	\end{split}
	\end{equation}

Then, we have
\begin{align*}
	&\bigg|\iint \mathrm{d}q \mathrm{d}p\ \varphi(q) \phi(p) \nabla_{p} \cdot \mathcal{R}_{2} \bigg| \\
	&
	= \bigg| \dotsint (\mathrm{d}q \mathrm{d}p)^{\otimes 2} (\mathrm{d}w \mathrm{d}u)^{\otimes 2}\ \varphi(q) \nabla\phi(p)  \cdot  \left(  f^\hbar_{q,p}(w)  \overline{f^\hbar_{q,p}(u)} \right)^{\otimes 2} \nabla V_N(q-q_2) \\
	&
	\qquad \left[\gamma_{N,t}^{(2)}(u_1,u_2;w_1,w_2)  - \gamma^{(1)}_{N,t}(u_1; w_1) \gamma^{(1)}_{N,t}(u_2; w_2)\right] \bigg|\\
	&
	= \hbar^{-3}\bigg| \dotsint (\mathrm{d}q \mathrm{d}p)^{\otimes 2} (\mathrm{d}w \mathrm{d}u)^{\otimes 2}\  \varphi(q) \nabla\phi(p)  \cdot \left(f\left(\frac{w-q}{\sqrt{\hbar}}\right) f\left(\frac{u-q}{\sqrt{\hbar}}\right) e^{\frac{\mathrm{i}}{\hbar} p\cdot(w-u)}\right)^{\otimes 2}  \nabla V_N(q-q_2) \\
	&
	\qquad \left[\gamma_{N,t}^{(2)}(u_1,u_2;w_1,w_2)  - \gamma^{(1)}_{N,t}(u_1; w_1) \gamma^{(1)}_{N,t}(u_2; w_2)\right] \bigg|\\
	&
	= \hbar^{-3}\bigg| \dotsint (\mathrm{d}q \mathrm{d}w \mathrm{d}u)^{\otimes 2}\   \left(f\left(\frac{w-q}{\sqrt{\hbar}}\right) f\left(\frac{u-q}{\sqrt{\hbar}}\right) \right)^{\otimes 2}\left(\int \mathrm{d}p\ \varphi(q) \nabla\phi(p)  \cdot e^{\frac{\mathrm{i}}{\hbar} p\cdot(w_1-u_1)}\right) \\
	&
	\qquad \nabla V_N(q-q_2) \left(\int \mathrm{d}p_2\ e^{\frac{\mathrm{i}}{\hbar} p_2\cdot(w_2-u_2)} \right) \left[\gamma_{N,t}^{(2)}(u_1,u_2;w_1,w_2)  -  \gamma^{(1)}_{N,t}(u_1; w_1) \gamma^{(1)}_{N,t}(u_2; w_2) \right] \bigg|\\
	&
	= (2 \pi )^3 \bigg| \dotsint (\mathrm{d}q)^{\otimes 2} \mathrm{d}w_1\mathrm{d}u_1\mathrm{d}w_2 \   f\left(\frac{w_1-q}{\sqrt{\hbar}}\right) f\left(\frac{u_1-q}{\sqrt{\hbar}}\right) \left|f\left(\frac{w_2-q_2}{\sqrt{\hbar}}\right)\right|^2  \left(\int \mathrm{d}p\ \varphi(q) \nabla\phi(p)  \cdot e^{\frac{\mathrm{i}}{\hbar} p\cdot(w_1-u_1)}\right) \\
	&
	\qquad \nabla V_N(q-q_2) \left[\gamma_{N,t}^{(2)}(u_1,w_2;w_1,w_2)  -  \gamma^{(1)}_{N,t}(u_1; w_1) \gamma^{(1)}_{N,t}(w_2; w_2)\right] \bigg|,
	\end{align*}
where we use the weighted Dirac-delta function in the last equality; i.e.,
\begin{equation*}\label{diract_delta}
	\frac{1}{(2\pi \hbar)^3}\int \mathrm{d}p_2\ e^{\frac{\mathrm{i}}{\hbar} p_2\cdot(w_2-u_2)} = \delta_{w_2}(u_2).
	\end{equation*}
Now, splitting the domains of $w_1$ and $u_1$ into two, namely, with the characteristic functions $\rchi_{(w_1 - u_1)\in \Omega_\hbar}$ and $\rchi_{(w_1 - u_1)\in\Omega_\hbar^c}$ as defined in \eqref{estimate_oscillation_omega}, we have

\begin{align*}
	&\leqslant (2 \pi )^3 \bigg| \iint (\mathrm{d}q)^{\otimes 2}  \varphi(q) \iiint \mathrm{d}w_1\mathrm{d}u_1\mathrm{d}w_2\   f\left(\frac{w_1-q}{\sqrt{\hbar}}\right) f\left(\frac{u_1-q}{\sqrt{\hbar}}\right) \left|f\left(\frac{w_2-q_2}{\sqrt{\hbar}}\right)\right|^2  \\
	&
	\qquad \left(\int \mathrm{d}p\ \rchi_{(w_1 - u_1)\in \Omega_\hbar} e^{\frac{\mathrm{i}}{\hbar} p\cdot(w_1-u_1)} \nabla\phi(p)   \right) \cdot \nabla V_N(q-q_2) \left[\gamma_{N,t}^{(2)}(u_1,w_2;w_1,w_2)  - \gamma^{(1)}_{N,t}(u_1; w_1) \gamma^{(1)}_{N,t}(w_2; w_2)\right] \bigg|\\
	&
	+ (2 \pi )^3 \bigg| 	\iint (\mathrm{d}q)^{\otimes 2}\varphi(q) \iiint \mathrm{d}w_1\mathrm{d}u_1\mathrm{d}w_2\    f\left(\frac{w_1-q}{\sqrt{\hbar}}\right) f\left(\frac{u_1-q}{\sqrt{\hbar}}\right) \left|f\left(\frac{w_2-q_2}{\sqrt{\hbar}}\right)\right|^2  \\
	&
	\qquad \left(\int \mathrm{d}p\ \rchi_{(w_1 - u_1)\in \Omega_\hbar^c}  e^{\frac{\mathrm{i}}{\hbar} p\cdot(w_1-u_1)} \nabla \phi(p)  \right) \cdot  \nabla V_N(q-q_2) \left[\gamma_{N,t}^{(2)}(u_1,w_2;w_1,w_2)  - \gamma^{(1)}_{N,t}(u_1; w_1) \gamma^{(1)}_{N,t}(w_2; w_2)\right] \bigg|\\
	& =: \text{I}_{2} + \text{J}_{2}.
	\end{align*}

Without the loss of generality, we let $\Phi(q,p) = \varphi(q)\phi(p)$. First, considering the term $\text{J}_{2}$,
\begin{align*}
	\text{J}_{2} & =  (2 \pi )^3\bigg| 	\iint \mathrm{d}q \mathrm{d}q_2\ \varphi(q) \iiint \mathrm{d}w_1\mathrm{d}u_1\mathrm{d}w_2\   f\left(\frac{w_1-q}{\sqrt{\hbar}}\right) f\left(\frac{u_1-q}{\sqrt{\hbar}}\right)  \bigg|f\left(\frac{w_2-q_2}{\sqrt{\hbar}}\right)\bigg|^2 \\
	&
	\qquad \left(\int \mathrm{d}p\ \rchi_{(w_1 - u_1)\in \Omega_\hbar^c} \nabla\phi(p) e^{\frac{\mathrm{i}}{\hbar} p\cdot(w_1-u_1)}\right) \cdot \nabla V_N(q-q_2) \left(\gamma_{N,t}^{(2)}(u_1,w_2;w_1,w_2)  - \gamma^{(1)}_{N,t}(u_1; w_1) \gamma^{(1)}_{N,t}(w_2; w_2)\right)\bigg|.
	\end{align*}
By the change of variable $\sqrt{\hbar}\tilde{q}_2 = w_2 - q_2$, we obtain
\begin{align*}
	& = (2 \pi )^3  \bigg|	\int \mathrm{d}q\ \varphi(q) \iiint \mathrm{d}w_1\mathrm{d}u_1\mathrm{d}w_2\  f\left(\frac{w_1-q}{\sqrt{\hbar}}\right) f \left(\frac{u_1-q}{\sqrt{\hbar}}  \right) \left(\hbar^{\frac{3}{2}} \int  \mathrm{d}\tilde{q}_2\   |f(\tilde{q}_2)|^2 \right) \nabla V_N(q-w_2 + \sqrt{\hbar}\tilde{q}_2) \\
	&
	\qquad \left(\int \mathrm{d}p\ \rchi_{(w_1 - u_1)\in \Omega_\hbar^c} \nabla  \phi(p) e^{\frac{\mathrm{i}}{\hbar} p\cdot(w_1-u_1)}  \right)\left(\gamma_{N,t}^{(2)}(u_1,w_2;w_1,w_2)  - \gamma^{(1)}_{N,t}(u_1; w_1) \gamma^{(1)}_{N,t}(w_2; w_2)\right) \bigg|\\
	&
	\leq C \norm{\nabla V_N}_{L^\infty}  \hbar^{\frac{3}{2}} 	\int \mathrm{d}q\  |\varphi(q)| \iiint \mathrm{d}w_1\mathrm{d}u_1\mathrm{d}w_2\  \left|f\left(\frac{w_1-q}{\sqrt{\hbar}}\right) f\left(\frac{u_1-q}{\sqrt{\hbar}}\right) \right|\\
	&
	\qquad \left|\int \mathrm{d}p\ \rchi_{(w_1 - u_1)\in \Omega_\hbar^c} \nabla  \phi(p) e^{\frac{\mathrm{i}}{\hbar} p\cdot(w_1-u_1)}\right| \left| \left(\gamma_{N,t}^{(2)}(u_1,w_2;w_1,w_2) -  \gamma_{N,t}(u_1; w_1) \gamma_{N,t}(w_2; w_2)\right) \right|\\
	&
	\leq C \norm{\nabla V_N}_{L^\infty}  \hbar^{\frac{3}{2}}   \int \mathrm{d}q\ | \varphi(q)| \iint \mathrm{d}w_1\mathrm{d}u_1\  \bigg| f\left(\frac{w_1-q}{\sqrt{\hbar}}\right) f\left(\frac{u_1-q}{\sqrt{\hbar}}\right)\bigg|\rchi_{|w_1-u_1|\leq 2R_1 \sqrt{\hbar}}\\
	&
	\qquad  \int\mathrm{d}w_2 \left|\gamma_{N,t}^{(2)}(u_1,w_2;w_1,w_2)  - \gamma^{(1)}_{N,t}(u_1; w_1) \gamma^{(1)}_{N,t}(w_2; w_2)\right|   \left|\int \mathrm{d}p\ \rchi_{(w_1 - u_1)\in \Omega_\hbar^c} \nabla  \phi(p) e^{\frac{\mathrm{i}}{\hbar} p\cdot(w_1-u_1)}\right|.\\
	\end{align*}
Recall again from Lemma \ref{estimate_oscillation} that we have
\begin{equation*}
	\left|\int \mathrm{d}p\ \rchi_{(w_1 - u_1)\in \Omega_\hbar^c} e^{\frac{\mathrm{i}}{\hbar} p\cdot(w_1-u_1)} \nabla  \phi(p) \right| \leq \norm{\nabla \phi}_{W^{s,\infty}}  \hbar^{(1-\alpha_2)s},
	\end{equation*}
for $s$ to be chosen later. Then, we obtain
\begin{align*}
	\text{J}_{2} & \leq C \norm{\nabla \phi}_{W^{s,\infty}} \norm{\nabla V_N}_{L^\infty}  \hbar^{\frac{3}{2}+ (1-\alpha_2)s}  \int \mathrm{d}q\ |\varphi(q)|\iint \mathrm{d}w_1\mathrm{d}u_1\  \bigg|  f\left(\frac{w_1-q}{\sqrt{\hbar}}\right) f\left(\frac{u_1-q}{\sqrt{\hbar}}\right)\bigg| \\
	&
	\qquad  \Tr^{(1)} \left|\gamma_{N,t}^{(2)}  - \gamma^{(1)}_{N,t} \otimes \gamma^{(1)}_{N,t} \right| (u_1;w_1)  \rchi_{|w_1-u_1|\leq 2R_1 \sqrt{\hbar}},
	\end{align*}
The H\"older inequality yields
\begin{align*}
	\text{J}_{2} & \leq C  \norm{\nabla \phi}_{W^{s,\infty}} \norm{\nabla V_N}_{L^\infty} \hbar^{\frac{3}{2}+ (1-\alpha_2)s} \int \mathrm{d}q\ |\varphi(q)|\left( \iint \mathrm{d}w_1\mathrm{d}u_1\ \rchi_{|w_1-u_1|\leq 2R_1 \sqrt{\hbar}} \left|f\left(\frac{w_1-q}{\sqrt{\hbar}}\right) f\left(\frac{u_1-q}{\sqrt{\hbar}}\right)\right|^2 \right)^\frac{1}{2}\\
	&
	\qquad \left(\iint \mathrm{d}w_1\mathrm{d}u_1\ \left[ \Tr^{(1)} \left|\gamma_{N,t}^{(2)}  -  \gamma^{(1)}_{N,t} \otimes \gamma^{(1)}_{N,t} \right|(u_1;w_1) \right]^2 \right)^\frac{1}{2}\\
	&
	=  C \norm{\varphi}_{L^\infty} \norm{\nabla \phi}_{W^{s,\infty}} \norm{\nabla V_N}_{L^\infty} \hbar^{\frac{3}{2}+ (1-\alpha_2)s} \left(\hbar^3 \iint \mathrm{d}\tilde{w}_1\mathrm{d}\tilde{u}_1\ \rchi_{|\tilde{w}_1-\tilde{u}_1|\leq 2R_1 } |f\left(\tilde{w}\right) f\left(\tilde{u}\right)|^2 \right)^\frac{1}{2}\\
	&
	\qquad \int \mathrm{d}q\ \left(\iint \mathrm{d}w_1\mathrm{d}u_1\ \left[ \Tr^{(1)} \left|\gamma_{N,t}^{(2)}  -  \gamma^{(1)}_{N,t} \otimes \gamma^{(1)}_{N,t} \right|(u_1;w_1) \right]^2  \rchi_{|w_1-q|\leq R_1 \sqrt{\hbar}} \right)^\frac{1}{2}\\
	&
	\leq C   \norm{\varphi}_{L^\infty} \norm{\nabla \phi}_{W^{s,\infty}} \norm{\nabla V_N}_{L^\infty}  \hbar^{ 3 + (1-\alpha_2)s } \left(\iint \mathrm{d}\tilde{w}_1\mathrm{d}\tilde{u}_1\ |f\left(\tilde{w}\right) f\left(\tilde{u}\right)|^2\right)^\frac{1}{2} \\
	&
	\qquad \hbar^{\frac{3}{2}} \int \mathrm{d}\tilde{q}_1 \rchi_{|\tilde{q}_1|\leq R_1 }   \left(\iint \mathrm{d}w_1\mathrm{d}u_1\ \left[ \Tr^{(1)} \left|\gamma_{N,t}^{(2)}  - \gamma^{(1)}_{N,t} \otimes \gamma^{(1)}_{N,t} \right|(u_1;w_1) \right]^2  \right)^\frac{1}{2}\\
	&
	\leq  \widetilde{C} \norm{\nabla V_N}_{L^\infty} \hbar^{ 3 + (1-\alpha_2)s+\frac{3}{2}} \left(\iint \mathrm{d}w_1\mathrm{d}u_1\ \left[ \Tr^{(1)} \left|\gamma_{N,t}^{(2)}  -  \gamma^{(1)}_{N,t} \otimes \gamma^{(1)}_{N,t} \right|(u_1;w_1) \right]^2  \right)^\frac{1}{2},
	\end{align*}
where we denote
\[
	\Tr^{(1)} \left|\gamma_{N,t}^{(2)}  - \gamma^{(1)}_{N,t} \otimes \gamma^{(1)}_{N,t} \right|(u_1;w_1) = \int \dd{w_2} |\gamma_{N,t}^{(2)}(u_1,w_2;w_1,w_2) - \gamma^{(1)}_{N,t}(u_1; w_1) \gamma^{(1)}_{N,t}(w_2; w_2)|.
	\]
Thus, we have
\begin{equation*}
	\text{J}_{2}  \leq \widetilde{C} \norm{\nabla V_N}_{L^\infty} \hbar^{ 3 + (1-\alpha_2)s+\frac{3}{2}}  \left(\iint \mathrm{d}w_1\mathrm{d}u_1\ \left[ \Tr^{(1)} \left|\gamma_{N,t}^{(2)}  -  \gamma^{(1)}_{N,t} \otimes \gamma^{(1)}_{N,t} \right|(u_1;w_1) \right]^2  \right)^\frac{1}{2}.
	\end{equation*}

Now, we focus on $\text I_{2,1}$
\begin{align*}
	\text{I}_{2,1}& =  (2 \pi )^3 \bigg| \iint (\mathrm{d}q)^{\otimes 2} \varphi(q) \iiint \mathrm{d}w_1\mathrm{d}u_1\mathrm{d}w_2\   f\left(\frac{w_1-q}{\sqrt{\hbar}}\right) f\left(\frac{u_1-q}{\sqrt{\hbar}}\right) \left|f\left(\frac{w_2-q_2}{\sqrt{\hbar}}\right)\right|^2  \\
	&
	\qquad \left(\int \mathrm{d}p\ \rchi_{(w_1 - u_1)\in \Omega_\hbar}  e^{\frac{\mathrm{i}}{\hbar} p\cdot(w_1-u_1)}  \nabla \phi(p) \right) \cdot \nabla V_N(q-q_2)\\
	&
	\qquad  \left[\gamma_{N,t}^{(2)}(u_1,w_2;w_1,w_2)  - \gamma^{(1)}_{N,t}(u_1; w_1) \gamma^{(1)}_{N,t}(w_2; w_2)\right] \bigg|.
	\end{align*}
We observe that
\[
	\left|\int \mathrm{d}p\  e^{\frac{\mathrm{i}}{\hbar} p\cdot(w_1-u_1)} \nabla  \phi(p)\right|\leq \norm{\nabla \phi}_{L^1}.
	\]
Then, we obtain the following estimate:
\begin{align*}
	\text{I}_{2}& \leq C \norm{\nabla \phi}_{L^1}\norm{\nabla V_N}_{L^\infty} \int \mathrm{d}q\ |\varphi(q)|   \left( \iint \mathrm{d}w_1\mathrm{d}u_1\ \rchi_{|w_1-u_1|\leq \hbar^\alpha_2}  \rchi_{|w_1-u_1|\leq 2R_1 \sqrt{\hbar}} \left|f\left(\frac{w_1-q}{\sqrt{\hbar}}\right) f\left(\frac{u_1-q}{\sqrt{\hbar}}\right)\right|^2 \right)^\frac{1}{2}\\
	&
	\qquad \hbar^{\frac{3}{2}} \int \dd{\tilde{q}_2} |f(\tilde{q}_2)|^2 \left(\iint \mathrm{d}w_1\mathrm{d}u_1\ \left[ \Tr^{(1)} \left|\gamma_{N,t}^{(2)}  - \gamma^{(1)}_{N,t} \otimes \gamma^{(1)}_{N,t} \right|(u_1;w_1) \right]^2  \rchi_{|w_1-q|\leq R_1 \sqrt{\hbar}}  \right)^\frac{1}{2}\\
	&
	\leq C \norm{\varphi}_{L^\infty} \norm{\nabla \phi}_{L^1} \norm{\nabla V_N}_{L^\infty}  \hbar^{\frac{3}{2}} \left(\hbar^3  \iint \mathrm{d}\tilde{w}_1\mathrm{d}\tilde{u}_1\ \rchi_{|\tilde{w}_1-\tilde{u}_1|\leq \hbar^{\alpha_2 + \frac{1}{2}}}  \rchi_{|\tilde{w}_1-\tilde{u}_1|\leq 2R_1} \left|f(\tilde{w}_1) f(\tilde{u}_1) \right|^2 \right)^\frac{1}{2}\\
	&
	\qquad \int\dd{q} \left(\iint \mathrm{d}w_1\mathrm{d}u_1\ \left[ \Tr^{(1)} \left|\gamma_{N,t}^{(2)}  - \gamma^{(1)}_{N,t} \otimes \gamma^{(1)}_{N,t} \right| (u_1;w_1) \right]^2  \rchi_{|w_1-q|\leq R_1 \sqrt{\hbar}}   \right)^\frac{1}{2}\\
	&
	\leq \widetilde{C} \norm{\nabla V_N}_{L^\infty} \hbar^3 \left( \iint \mathrm{d}\tilde{w}_1\mathrm{d}\tilde{u}_1\ \rchi_{|\tilde{w}_1-\tilde{u}_1|\leq \hbar^{\alpha_2 + \frac{1}{2}}}  \left|f(\tilde{w}_1) f(\tilde{u}_1) \right|^2 \right)^\frac{1}{2}\\
	&
	\qquad \hbar^{ \frac{3}{2}}\int \dd{\tilde{q}_1} \rchi_{|\tilde{q}_1|\leq R_1} 
	\left(\iint \mathrm{d}w_1\mathrm{d}u_1\ \left[ \Tr^{(1)} \left|\gamma_{N,t}^{(2)}  - \gamma^{(1)}_{N,t} \otimes \gamma^{(1)}_{N,t}  \right| (u_1;w_1) \right]^2 \right)^\frac{1}{2}.
	\end{align*}
From \eqref{bound_rchi_f}, we have
\begin{equation*}
	\text{I}_{2} \leq  \widetilde{C}  \norm{\nabla V_N}_{L^\infty} \hbar^{3+ \frac{3}{2}(\alpha_2 + \frac{1}{2}) + \frac{3}{2}} \left(\iint \mathrm{d}w_1\mathrm{d}u_1\ \left[ \Tr^{(1)} \left|\gamma_{N,t}^{(2)}  - \gamma^{(1)}_{N,t} \otimes \gamma^{(1)}_{N,t}  \right| (u_1;w_1) \right]^2  \right)^\frac{1}{2}.
	\end{equation*}

To balance the order between $\text{I}_{2}$ and $\text{J}_{2}$, $s$ is chosen to be
\[
	s = \left\lceil \frac{3(2\alpha_2 +1)}{4(1-\alpha_2)} \right\rceil,
	\]
for $\alpha_2 \in \left[0,1\right)$. Therefore, we have
\begin{equation*}
	\begin{aligned}
	\bigg|&\iint \mathrm{d}q \mathrm{d}p\ \ \varphi(q) \phi(p) \nabla_{p} \cdot \mathcal{R}_{2} \bigg| \leq \text{I}_{2} +\text{J}_{2} \\
	&
	\leq \widetilde{C} \norm{\nabla V_N}_{L^\infty}  \hbar^{3+ \frac{3}{2}(\alpha_2 + \frac{1}{2}) + \frac{3}{2}} \left(\iint \mathrm{d}w_1\mathrm{d}u_1\ \left[ \Tr^{(1)} \left|\gamma_{N,t}^{(2)}  - \gamma^{(1)}_{N,t} \otimes \gamma^{(1)}_{N,t}  \right|(u_1;w_1) \right]^2   \right)^\frac{1}{2}\\
	&
	\leq  \widetilde{C} \beta_N^{-2} \hbar^{3+ \frac{3}{2}(\alpha_2 + \frac{1}{2}) + \frac{3}{2}} N^2.
	\end{aligned}
	\end{equation*}
Setting $\beta_{N} = \hbar^\delta$ for $0 < \delta < \frac{1}{8}(6\alpha_1 - 5)$, we have the desired inequality.

\end{proof}

\vspace*{\fill}

\textit{Acknowledgements:}
 We are grateful to the anonymous referees for carefully reading our manuscript and providing helpful comments. Furthermore, we acknowledge support from the Deutsche Forschungsgemeinschaft through grant CH 955/4-1. Jinyeop Lee was partially supported by Samsung Science and Technology Foundation (SSTF-BA1401-51) and by National Research Foundation of Korea (NRF) grants funded by the Korean government (MSIT) (NRF-2019R1A5A1028324 and NRF-2020R1F1A1A01070580). Matthew Liew was financially supported by Landesgraduiertenf\"orderung of Baden-W\"urttemberg.

\clearpage
\bibliographystyle{abbrv}
\bibliography{husimi_vlasov}

\end{document}